\newtheorem{theorem}{Theorem}
\newtheorem{lemma}{Lemma}
\newtheorem{corollary}{Corollary}
\newtheorem{proposition}{Proposition}
\theoremstyle{definition}
\newtheorem{definition}{Definition}
\newtheorem*{theorem*}{Theorem}
\newtheorem*{lemma*}{Lemma}
\begin{document}\sloppy
\title{Connectomic Constraints on Computation in \\ Feedforward Networks of Spiking Neurons}

\author{Venkatakrishnan Ramaswamy\thanks{Computer and Information Science and Engineering, University of Florida, Gainesville, FL 32611, USA. {\em Present Address:} Interdisciplinary Center for Neural Computation, The Hebrew University of Jerusalem, Jerusalem 91904, Israel. {\tt venkat.ramaswamy@mail.huji.ac.il}} \and Arunava Banerjee\thanks{Computer and Information Science and Engineering, University of Florida, Gainesville, FL 32611, USA. {\tt arunava@cise.ufl.edu}}}

\maketitle

\begin{abstract}
Several efforts are currently underway to decipher the connectome or parts thereof in a variety of organisms. Ascertaining the detailed physiological properties of all the neurons in these connectomes, however, is out of the scope of such projects. It is therefore unclear to what extent knowledge of the connectome alone will advance a mechanistic understanding of computation occurring in these neural circuits, especially when the high-level function of the said circuit is unknown. We consider, here, the question of how the wiring diagram of neurons imposes constraints on what neural circuits can compute, when we cannot assume detailed information on the physiological response properties of the neurons. We call such constraints -- that arise by virtue of the connectome -- {\em connectomic constraints} on computation. For feedforward networks equipped with neurons that obey a deterministic spiking neuron model which satisfies a small number of properties, we ask if just by knowing the architecture of a network, we can rule out computations that it could be doing, no matter what response properties each of its neurons may have. We show results of this form, for certain classes of network architectures. On the other hand, we also prove that with the limited set of properties assumed for our model neurons, there are fundamental limits to the constraints imposed by network structure. Thus, our theory suggests that while connectomic constraints might restrict the computational ability of certain classes of network architectures, we may require more elaborate information on the properties of neurons in the network, before we can discern such results for other classes of networks. 
\end{abstract}


\section{Introduction}
Recent remarkable experimental advances \citep{denkhorst,hayworth,knott,mishchenko2010ultrastructural,turaga2010convolutional,helmstaedter2011high,mikula2012staining} have brought the prospect of ascertaining the connectome or parts thereof closer to reality \citep{chklovskii2010semi,kleinfeld2011large,seung2011neuroscience,denk,reid2012functional,helmstaedter2013connectomic}. This data is currently not expected to include information on
the detailed physiological properties of all the neurons in the
connectome. Even so, already, there have been two pioneering studies \citep{briggman2011wiring,bock2011network} that fruitfully use electron-microscopy reconstructions in conjunction with two-photon calcium imaging on the same tissue. In \citep{briggman2011wiring}, the authors used this approach to rule out certain models of direction
selectivity in the retina. The other study \citep{bock2011network} examined the orientation-selectivity circuitry in the cortex and found that inhibitory interneurons received convergent anatomical input from nearby excitatory neurons that had a broad range of preferred orientations. Recent work \citep{takemura2013visual} has also used connectomic reconstructions of the motion detection circuit in the fruit fly visual system, in order to identify cellular targets for future functional investigations; this is towards the goal of a comprehensive mechanistic understanding of this circuit. While this broad approach of combining functional imaging with structural reconstructions creates new opportunities to unravel structure-function relationships \citep{seung2011neuroscience}, to fruitfully use functional imaging seems to require that (a) we have an a priori credible hypothesis about at least one high-level computation that the neural circuit in question is performing and (b) we have a way of experimentally eliciting performance of the said computation, usually via an appropriate stimulus. Unfortunately, neither of these conditions appear to be satisfied for a majority of neuronal circuits in the brain, especially as one moves away from the sensory/motor periphery. Suppose, in addition to its wiring diagram, we knew the detailed physiological response properties of all the neurons in such a neural circuit to the extent that we could predict circuit behavior (via simulations, for example). This might provide a way forward towards advancing hypotheses about what high-level computation(s) the circuit is actually involved in. Regrettably, ascertaining the detailed physiological response properties of all the neurons in such a network appears to be out of reach of current experimental technology. The prospects of obtaining the wiring diagram, however, seem to hold more promise. The question therefore becomes: (1) What can we learn from the wiring diagram alone, even when the specific high-level function of the neural circuit may be unknown? (2) Are there fundamental limits to what can be
learned from the wiring diagram alone, in the absence of more detailed
physiological information?

To investigate these questions, we have studied a network model
equipped with neurons that obey a deterministic spiking neuron
model. We ask what computations networks of specific architectures
{\em cannot} perform, no matter what response properties each of their
neurons may have. The implication, then, is that, owing to its
structure, the network is unable to effect the computation in
question. That is, connectomic constraints forbid the network from
performing the said computation. In addition, to rule out the possibility
that this computation is so ``hard'' that no network (of
any architecture) can accomplish it, we stipulate the need to demonstrate that
there exists a network (of a different architecture) comprising simple
neurons that can indeed effect this computation. The goal of this
paper is to establish results of this form for various network
architectures, after setting up a mathematical framework within which
these questions can be precisely posed. As a first simplifying step,
in this paper, we limit our study to feedforward networks of neurons. Having started with this goal, however, we also find that with the small number of basic properties assumed for our model neurons, there are fundamental limits to the computational constraints imposed by network structure, in certain cases. In
particular, we prove that, constrained only by the properties in the
current neuron model, every feedforward network, of arbitrary size and depth, has an
equivalent feedforward network of depth equal to two that effects {\em
  exactly} the same computation. The implication of this result is
that we need more elaborate information about the properties of the
neurons before connectomic constraints on the computational ability of
such networks can be discerned.

Before we can examine these questions, we are confronted with the problem of having to define what
computation exactly means, in this context. Physically, neurons and their
networks are simply devices that receive spike-trains as input, and in
turn generate spike-trains as output. It is this translation from
spike-trains to spike-trains that characterizes information processing
and indeed even cognition in the brain. It is tempting to view a
feedforward network as a {\em transformation}, which is to say a
function, that associates a {\em unique} output spike train with each combination of afferent input spike trains, since such networks do not have recurrent loops. This is
the intuition we will seek to make precise.

Since the functional role of single neurons and small networks in the
brain is not yet well understood, we do not make assumptions about
particular high-level tasks that the network is trying to perform; we are
just interested in physical spike-train to spike-train
transformations. Likewise, since the kinds of neural code employed are
unclear, we make no overarching assumptions about the neural code
either. We study precise spike times since there is widespread
evidence \citep[\& references therein]{streh, spikes} that precise spike times play a role in
information processing in the brain, in many cases. Indeed, Spike-Timing Dependent Plasticity, a class of Hebbian learning rules that are sensitive to the relative timing of pre and postsynaptic spikes have been discovered \citep{markram1997regulation,bi1998synaptic} that support the role of precise spike-timing in computation in the brain. Studying spike
times also subsumes cases where spiking rate may be the relevant
parameter and therefore there is no loss of generality in making this
assumption.

\section{Notation and Preliminaries}\label{sec:notation}
In this section, we define the mathematical formalism used to
describe spike-trains and frequently-used operations on them that, for instance, shift
and segment them. The reader may skim these on the first reading and
revisit them if a specific technical point needs clarification later
on.

An {\em action potential} or {\em spike} is a stereotypical event
characterized by the time instant at which it is
initiated in the neuron, which is referred to as its {\em spike
  time}. Spike times are represented relative to the present by real
numbers, with positive values denoting past spike times and negative
values denoting future spike times. A {\em spike-train} ${\vec x}=\langle x^1, x^2, \ldots, x^k,
\ldots\rangle$ is a strictly increasing sequence of spike times, with every
pair of spike times being at least $\alpha$ apart, where $\alpha>0$
is the absolute refractory period\footnote{We assume a single
  fixed absolute refractory period for all neurons, for convenience,
  although our results would be no different if different neurons had
  different absolute refractory periods.} and $x^i$ is the spike time
of spike $i$. An {\em empty spike-train}, denoted by ${\vec \phi}$, is
one which has no spikes. A {\em time-bounded spike-train} (with {\em bound} $(a,b)$) is
one where all spike times lie in the bounded interval $(a,b)$, for
some $a,b \in \mathbb{R}$. We use ${\cal S}$ to denote the set of all spike
trains and $\bar{\cal S}_{(a, b)}$ to denote the set of all
time-bounded spike-trains with bound $(a, b)$. A spike-train is said
to have a {\em gap} in the interval $(c,d)$, if it has no spikes in
that time interval. Furthermore, this gap is said to be of {\em
  length} $d-c\/$. 

We use the term {\em spike-train ensemble} to denote a collection of
spike-trains. Thus, formally, a {\em spike-train ensemble}
$\chi=\langle{\vec x_1}, \ldots, {\vec x_m}\rangle$ is a tuple of
spike-trains. The {\em order} of a spike-train ensemble is the number
of spike-trains in it. For example, $\chi=\langle{\vec x_1}, \ldots,
{\vec x_m}\rangle$ is a spike-train ensemble of order $m$. A {\em
  time-bounded spike-train ensemble} (with {\em bound} $(a,b)$) is one
in which each of its spike-trains is time-bounded (with {\em bound}
$(a,b)$). A
spike-train ensemble $\chi$ is said have a {\em gap} in the interval
$(c,d)$, if each of its spike trains has a gap in the interval
$(c,d)$.

Next, we define some operators to time-shift, segment and
assemble/disassemble spike-trains from spike-train ensembles. Let
${\vec x}=\langle x^1, x^2, \ldots, x^k, \ldots\rangle$ be a
spike-train and $\chi=\langle{\vec x_1}, \ldots, {\vec x_m}\rangle$ be
a spike-train ensemble.  The {\em time-shift operator for
  spike-trains} is  used to time-shift all the spikes in a spike-train. Thus, $\sigma_t({\vec x}) =\langle x^1-t,
x^2-t, \ldots, x^k-t, \ldots\rangle$. The {\em time-shift operator for
  spike-train ensembles} is defined as $\sigma_t(\chi)
=\langle\sigma_t({\vec x_1}), \ldots, \sigma_t({\vec
  x_m})\rangle$. The {\em truncation operator for spike-trains} is
used to ``cut out'' specific segments of a spike-train. It is defined as follows: $\Xi_{[a,b]}({\vec x})$ is the time-bounded
spike-train with bound $[a,b]$ that is identical to ${\vec x}$ in the
interval $[a,b]$. $\Xi_{(a,b)}({\vec x})$, $\Xi_{(a,b]}({\vec x})$ and
  $\Xi_{[a,b)}({\vec x})$ are defined likewise. In the same vein,
    $\Xi_{[a,\infty)}({\vec x})$ is the spike-train that is identical
      to ${\vec x}$ in the interval $[a,\infty)$ and has no spikes in
        the interval $(-\infty, a)$. Similarly, $\Xi_{(-\infty,
          b]}({\vec x})$ is the spike-train that is identical to
      ${\vec x}$ in the interval $(-\infty, b]$ and has no spikes in
    the interval $(b, \infty)$. $\Xi_{(a,\infty)}({\vec x})$ and
    $\Xi_{(-\infty, b)}({\vec x})$ are also defined similarly. The
    {\em truncation operator for spike-train ensembles} is defined as
    $\Xi_{[a,b]}(\chi)=\langle{\Xi_{[a,b]}(\vec x_1}), \ldots,
    \Xi_{[a,b]}({\vec x_m})\rangle$. $\Xi_{(a,b)}(\chi)$,
    $\Xi_{(a,b]}(\chi)$, $\Xi_{[a,b)}(\chi)$,
    $\Xi_{[a,\infty)}(\chi)$, $\Xi_{(-\infty, b]}(\chi)$,
    $\Xi_{(a,\infty)}(\chi)$ and $\Xi_{(-\infty, b)}(\chi)$ are
    defined likewise. Furthermore, $\Xi_t(\cdot)$ is shorthand for
    $\Xi_{[t,t]}(\cdot)$. The {\em projection operator for spike-train
      ensembles} is used to ``pull-out'' a specific spike-train from a
    spike-train ensemble. It is defined as $\Pi_i(\chi) = {\vec x_i}$,
    where $1\leq i \leq m$. Let ${\vec y_1}, {\vec y_2}, \ldots, {\vec
      y_n}$ be spike-trains. The {\em join operator for spike-trains}
    is used to ``bundle-up'' a set of spike-trains to obtain a spike-train ensemble. It is defined as ${\vec y_1}\sqcup {\vec y_2}\sqcup \ldots\sqcup
    {\vec y_n} = \mathop {\bigsqcup} \limits_{i=1}^{n}{\vec y_i} =
    \langle{\vec y_1}, {\vec y_2}, \ldots, {\vec y_n} \rangle$.

\section{The Neuron Model}\label{sec:Model}
The present work treats the setting in which we know the wiring diagram of a network, but lack detailed information on the response properties of its neurons. We then wish to show computations that the network cannot accomplish, {\em no matter what response properties its neurons may have}. The modeling question we must first address, therefore, is what kind of neuron model we ought to use in such a context.

While we lack detailed information on each of the neurons in the network, it is reasonable to assume that all the neurons in the network satisfy a small number of elementary properties. For example, spiking neurons are generally known to have an absolute refractory period and most of them settle to a resting membrane potential upon receiving no input for sufficiently long, where this resting membrane potential is smaller than the threshold required to elicit a spike. We wish to have a model that is contingent on a small number of such basic properties, but whose responses are unconstrained otherwise, in order to allow for a large class of possible responses. 

Mathematically, we formulate the neuron as an abstract mathematical
object that satisfies a small number of axioms, which correspond to
such elementary properties.

Another way to think about the model is as one that brings ``under its umbrella'' several other neuron models. These are models that satisfy the properties  that our model is contingent on. In Appendix A, we demonstrate, for instance, that neuron models such as the Leaky Integrate-and-Fire Model and the Spike Response Model SRM$_0$ satisfy these properties up to arbitrary accuracy. Our model can thus be seen as a generalization\footnote{Models such as the Leaky Integrate-and-Fire (LIF) and Spike Response Model (SRM), in addition to the constraints in our model have their membrane potential function $P(\cdot)$ specified outright. In case of the LIF model, this is specified via a differential equation and in the case of SRM, the specific functional form is written down explicitly.} of these neuron models, specifically one that allows for a much wider class of responses. 

There are also other strong reasons for employing this type of model. Crucially, it allows the possibility of incrementally adding more properties to the neuron model, and studying how that further constrains the computational properties of the network. This would model the scenario where we have more detailed knowledge about individual neuron properties, which might well turn out to be the case with the connectome projects. While technical hurdles presently lie in the way of inferring, for example, distributions of ion-channels and neurotransmitter receptors in each neuron using electron microscopy\citep{denk}, it is conceivable that future advances make this possible, giving us a better sense of the physiological properties of all the individual neurons in the connectome; other future technological advances may also help in this direction. Furthermore, the need for adding more properties to the model and studying the consequences will become especially apparent towards the end of this paper, when we show limits to the constraints imposed by the present set of properties assumed in the model.

\subsection{Properties}
We start off by informally describing the properties that our model is
contingent on. Notable cases where the properties do not hold are also
pointed out. This is followed by a formal mathematical definition of the model. The approach
taken here in defining the model is along the lines of the one in
\citep{arun2}.

The following are our assumptions:

\begin{enumerate}
\item We assume that the neuron is a device that receives input from
  other neurons exclusively by spikes which are received via chemical
  synapses.\footnote{In this work, we do not treat electrical
    synapses or ephaptic interactions \citep{shep}.} 

\item The neuron is a finite-precision device with fading
  memory. Hence, the underlying potential function can be determined\footnote{We do not treat stochastic variability in the responses of neurons or neuromodulation in this paper.}
  from a bounded past. That is, we assume that, for each neuron, there
  exist positive real numbers $\Upsilon$ and $\rho$, so that the current
  membrane potential of the neuron can be determined as a function of
  the input spikes received in the past $\Upsilon$ milliseconds and the
  spikes produced by the neuron in the past $\rho$ milliseconds. The parameter $\Upsilon$ would correspond to the timescale at which the neuron integrates inputs received from other neurons and $\rho$
  corresponds to the notion of {\em relative refractory period.}

\item Specifically, we assume that the membrane potential of the neuron can be
  written down as a real-valued, everywhere-bounded
  function of the form $P(\chi; {\vec x_0})$, where ${\vec x_0}$ is a time-bounded
  spike-train, with bound $(0, \rho)$ and $\chi=\langle{\vec x_1}, \ldots,
  {\vec x_m}\rangle$ is a time-bounded spike-train ensemble with
  bound $(0, \Upsilon)$. Informally, ${\vec x_i}$, for $1\leq i \leq m$, is the
  sequence of spikes afferent in synapse $i$ in the past $\Upsilon$
  milliseconds and ${\vec x_0}$ is the sequence of spikes efferent from the
  current neuron in the past $\rho$ milliseconds.
  The function $P(\cdot)$ characterizes the entire
  spatiotemporal response of the neuron to spikes including synaptic
  strengths, their location on dendrites, and
  their modulation of each other's effects at the soma,
  spike-propagation delays, and the postspike hyperpolarization.

\item Without loss of generality, we assume the
  resting membrane potential to be
  $0$. 

\item Let $\tau > 0$ be the threshold that the membrane potential must
  reach in order to elicit a spike. Observe that the model allows for
  variable\footnote{In many biological neurons, the membrane potential that the soma (or axon initial segment) must reach, in order to elicit a spike is not fixed at all times and is, for example, a function of the inactivation levels of the voltage-gated Sodium channels. Our model can accomodate this phenomenon, to the extent that this threshold itself is a function of spikes afferent in the past $\Upsilon$ milliseconds and spikes efferent from the present neuron in the past $\rho$ milliseconds.} thresholds, as long as the threshold itself is a function
  of spikes afferent in the past $\Upsilon$ milliseconds and spikes
  efferent from the present neuron in the past $\rho$
  milliseconds. Furthermore, when a new output spike is produced, in
  the model, the membrane potential immediately goes below
  threshold. That is, the membrane potential function in the model
  takes values that are at most that of the threshold. This simplifies
  our condition for an output spike to be that the $P(\cdot)$ merely
  hits threshold, without having to check if it hits it from below,
  since it cannot hit it from above. Again, this is done without loss of generality. Additionally, let $\lambda$ be a negative real number that represents a lower-bound on the values that the membrane potential can take.

\item Output spikes in the recent past tend to have an inhibitory effect, in the following
  sense\footnote{This is violated, notably, in neurons that have
    a post-inhibitory rebound.}: \\$P(\chi; {\vec x_0})
  \leq P(\chi ; {\vec \phi})$, for all ``legal'' $\chi$ and ${\vec x_0}$.

Thus, our model allows for a wide variety of AHPs. Indeed, the only constraint on AHPs is the one given above. That is, suppose, in the first case that at a certain point in time the neuron received spikes in the past $\Upsilon$ seconds present in $\chi$ as input and did not output any spikes in the past $\rho$ milliseconds. In the second case, suppose that at a certain point in time the neuron again received spikes in the past $\Upsilon$ seconds present in $\chi$ as input but output some spikes in the past $\rho$ milliseconds. The condition states that the membrane potential in the second case must be at most that of the value in the first case. Thus, our results will be true for any neuron model that has an AHP that obeys this condition.

\item Owing to the absolute refractory period $\alpha>0$, no two input
  or output spikes can occur closer than $\alpha$. That is, suppose ${\vec x_0}=\langle x_0^1, x_0^2,
  \ldots, x_0^k \rangle$, where $x_0^1 < \alpha$. Then $P(\chi; {\vec x_0})<\tau$, for all ``legal'' $\chi$.

\item Finally, on receiving no input spikes in the past $\Upsilon$
  milliseconds and no output spikes in the past $\rho$ milliseconds, the neuron settles to its resting potential. That is,\\
$P(\langle{\vec \phi}, {\vec \phi}, \ldots, {\vec \phi} \rangle; {\vec \phi}) = 0$.\\

\end{enumerate}
A {\em feedforward network of neurons}, is a Directed Acyclic Graph
where each vertex corresponds to an instantiation of the neuron model, with the exception of some vertices, designated as input vertices (which
are placeholders for input spike-trains); one neuron is
designated the output neuron. The {\em order} of a feedforward network is equal to the number of its input vertices. The {\em depth} of a feedforward network is
the length of the longest path from an input vertex to the output
vertex.\\

\noindent
Next, we formalize the above notions into a rigorous definition of a neuron
as an abstract mathematical object. 

\begin{definition}[Neuron]
A \emph{neuron} ${\mathsf N}$ is a 7-tuple $\langle\alpha, \Upsilon, \rho, \tau, \lambda, m, 
P:{\bar{\cal S}_{(0, \Upsilon)}}^m \times \bar{\cal S}_{(0, \rho)} \rightarrow [\lambda, \tau]\rangle$, where $\alpha, \Upsilon, \rho, \tau \in
\mathbb{R}^+$ with $\rho\geq \alpha$, $\lambda \in \mathbb{R}^-$ and $m \in \mathbb{Z}^+$. Furthermore,
\begin{enumerate}
\item If ${\vec x_0}=\langle x_0^1, x_0^2, \ldots, x_0^k \rangle$ with
  $ x_0^1 < \alpha$, then $P(\chi; {\vec x_0})<\tau$, for all $\chi
  \in {\bar{\cal S}_{(0, \Upsilon)}}^m $ and for all $ {\vec x_0} \in
      {\bar{\cal S}_{(0, \rho)}}$.
\item $P(\chi; {\vec x_0}) \leq P(\chi; {\vec \phi})$, for all $\chi
  \in {\bar{\cal S}_{(0, \Upsilon)}}^m $ and for all $ {\vec x_0} \in
      {\bar{\cal S}_{(0, \rho)}}$.
\item $P(\langle{\vec \phi}, {\vec \phi}, \ldots, {\vec \phi}\rangle; {\vec \phi}) = 0$.
\end{enumerate}

\noindent
A neuron is said to {\em generate a
  spike} whenever $P(\cdot)=\tau$.

\end{definition}

\section{Feedforward Networks as Input-to-Output transformations}\label{sec:counterexample}
\begin{figure}
\begin{center}
  \includegraphics[width=8.6cm]{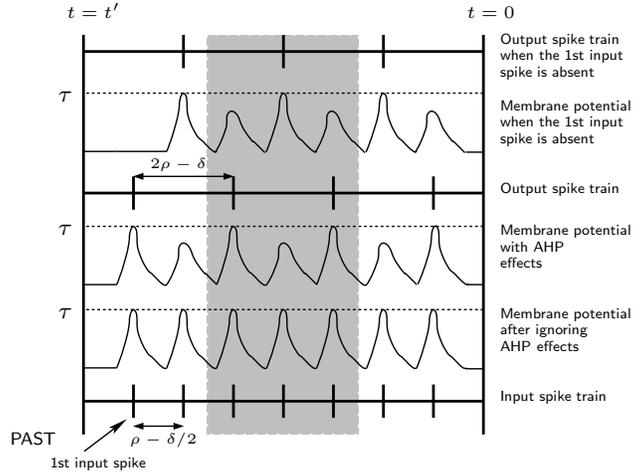}
\caption{This counterexample describes a single
neuron which has just one afferent synapse. Until time $t'$ in the
past, it received no input spikes. After this time, its input consisted of spikes that
arrived  every $\rho - \delta/2$ milliseconds, where $0<\delta\leq 2(\rho - \alpha)$. An input spike alone (if
there were no output spikes in the past $\rho$ milliseconds) causes this
neuron to produce an output spike. However, in addition, if there were an output
spike within the past $\rho$ milliseconds, the afterhyperpolarization (AHP) due to that spike is sufficient to bring
the potential below threshold, so that the neuron does not spike currently. We
therefore observe that if the first spike of the input spike-train is absent, then the output
spike-train changes drastically. Note that this change occurs no
matter how often the shaded segment in the middle is replicated, i.e. it
does not depend on how long ago the first spike occurred. Thus, the counterexample demonstrates that the membrane potential at any point in time may depend on the position of an input spike that occurred arbitrarily long time ago. Note that the input or the output pattern being
periodic and the two output patterns being phase-shifted is not a necessary ingredient of the counterexample; i.e. it is straightforward to construct a
(more complicated) counterexample that exhibits this same phenomenon
where neither the input spike-train nor the output spike-train are
periodic and where the two output spike patterns are not phase-shifted versions of each other.}
\label{fig:a}
\end{center}
\end{figure}

As discussed earlier, it is intuitively appealing to view feedforward
networks of neurons as transformations that map input
spike-trains to output spike-trains. In this section, we
seek to make this notion precise by clarifying in what sense, if at
all, these networks constitute the said transformations. It will turn
out that even single neurons cannot correctly be viewed as such transformations,
in general. In the next section, however, we show that under biologically-relevant spiking
regimes, we can salvage this view of feedforward
networks as spike-train to spike-train transformations.

Let us first consider the simplest type of feedforward network, namely
a single neuron.  Observe that our abstract neuron model does not
explicitly prescribe an output spike-train for a given input
spike-train ensemble. That is, recall from the previous section, that the
membrane potential of the neuron depends not only on the input spikes
received in the past $\Upsilon$ milliseconds, it also depends on the
output spikes produced by it in the past $\rho$
milliseconds. Therefore, knowledge of just input spike times in the
past $\Upsilon$ milliseconds does not uniquely determine the current
membrane potential (and therefore the output spike-train produced from
it). It might be tempting to then somehow use the fact that past
output spikes are themselves a function of input and output received
in the more distant past, and attempt to make the current membrane potential a
function of a bounded albeit larger ``window'' of past input spikes
alone. The simple counterexample described in Figure \ref{fig:a} shows
that this does not work. In particular, if we attempt to
characterize the current membrane potential of the neuron as a function of past
input spikes alone, the current membrane potential
 may depend on the position of an input spike that has
occurred arbitrarily long time ago in the past. To sum up, this counterexample proves that, without further restrictions, even a single neuron cannot be correctly viewed as a bounded-length spike-train to spike-train transformation.

This pessimistic prognosis notwithstanding, it may seem that if we
knew the infinite history of input spikes received by the neuron, we
should be able to uniquely determine its current membrane
potential. Unfortunately, the situation turns out to be even more dire
-- this turns out not to be the case. Before we demonstrate this, we must return to the issue of what it means for a neuron to {\em produce} an output spike-train when it receives a certain spike-train ensemble as input. That is, suppose the reader had an instantiation of our neuron model, which in this case would mean the values of $\Upsilon$, $\rho$ and $\tau$ and the membrane potential function $P(\cdot)$. Further, suppose the reader were given an input spike-train ensemble $\chi$ and told that the neuron ``produced'' the output spike-train ${\vec x_0}$ when driven by $\chi$. Then, all that the reader can do to verify this claim is to check if the given output spike-train is {\em consistent} with the input spike-train ensemble for the given neuron in the following sense. We would go to each point in time where the neuron spiked and plug into $P(\cdot)$  the input spikes in the past $\Upsilon$ milliseconds from $\chi$, and output spikes from the past $\rho$ milliseconds from ${\vec x_0}$ and check if the value of $P(\cdot)$ equals the threshold $\tau$. Likewise, for the time points where the output spike-train does not have a spike, we need to check that this value is less than the threshold. If the answers are in the affirmative for all time-points we can say that the given output spike-train is {\em consistent} with the given input spike-train ensemble with respect to the neuron in question. However, this still allows the possibility of more than one consistent output spike-train to exist for a given input spike-train ensemble, with respect to a given neuron. Indeed, we will demonstrate that this possibility can occur and therefore given the infinite history of input spikes received by the neuron, we cannot uniquely determine the output spike train produced. Before getting into the counterexample, for completeness, let us formally define this notion of {\em consistency}. Recall that $\langle t \rangle$ denotes a spike-train with a single spike at time instant $t$.

\begin{figure} 
\begin{center}
  \includegraphics[width=9.2cm]{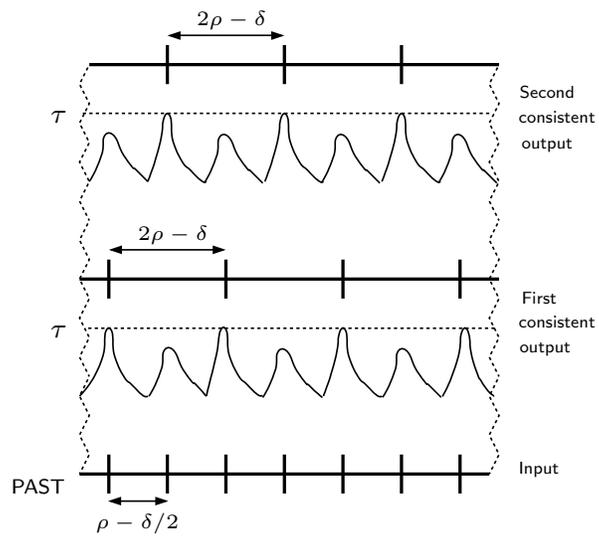}
\caption{The counterexample here is very similar to the one in Figure \ref{fig:a}, except
    that, instead of there being no input spikes before $t'$, we have
    an unbounded input spike-train ensemble, with the same periodic
    input spikes occurring since the infinite past. The neuron here has the exact same response properties as the one in Figure~\ref{fig:a}. Observe that both
    output spike-trains are consistent with this input, for each $t\in \mathbb{R}$. The corresponding membrane potential traces appear below each consistent output spike train.}
\label{fig:b}
\end{center}
\end{figure}


\begin{definition}
 An output spike-train ${\vec x_0}$ is said to be \emph{consistent}
 with an input spike-train ensemble $\chi$, with respect to a neuron ${\mathsf
   N}\langle\alpha, \Upsilon, \rho, \tau, \lambda, m, P:{\bar{\cal
     S}_{(0, \Upsilon)}}^m \times \bar{\cal S}_{(0, \rho)} \rightarrow
 [\lambda, \tau]\rangle$, if $\chi \in {\cal S}^m$ and the following holds. For every $t \in
 \mathbb{R}$, $\Xi_t {\vec x_0} = \langle t \rangle$ if and only if\\
 $P(\Xi_{(0,\Upsilon)}(\sigma_t(\chi)), \Xi_{(0,\rho)}(\sigma_t({\vec
   x_0})) =\tau$.

\end{definition}

\noindent
The question, therefore, is the following. For every (unbounded) input spike-train ensemble $\chi$, does there exist exactly one (unbounded) output spike
train ${\vec x_0}$, so that ${\vec x_0}$ is consistent with $\chi$ with respect to a given neuron ${\mathsf N}$?
As alluded to, the answer turns out to be in the negative. The counterexample in Figure~\ref{fig:b}
describes a neuron and an infinitely\footnote{The interested reader is referred to Appendix B for a discussion on the issue of infinitely-long input spike-trains in this context.} long input spike-train, which has
two consistent output spike-trains. 

The underlying difficulty in defining even single
neurons as spike-train to spike-train transformations, with both
viewpoints discussed above, is persistent dependence, in general, of current membrane
potential on ``initial state''. The way to circumvent this difficulty would be to impose additional restrictions which render such counterexamples untenable. For example, there is
the possibility of considering just a subset of input/output spike-trains, which  have the property of the current membrane potential
being independent of the input spikes beyond a certain time in
the past. Such a subset would certainly exclude the examples discussed in this
section. This would correspond to restricting our theory to a certain kind of spiking regime.

In the next section, we come up with a condition that, in effect, restricts
spike-trains to biologically-relevant spiking regimes and prove that this implies
independence as alluded to above. Roughly speaking, the condition is
that if a neuron has had a recent gap in its output spike-train equal
to at least {\em twice} its relative refractory period, then its
current membrane potential is independent of the input beyond the
relatively recent past. We show that this leads to the notion of
feedforward networks as spike-train to spike-train transformations to be
well-defined.

\section{The Gap Lemma and Criteria}\label{sec:GapLemma}
In this section, we devise a biologically well-motivated condition
that guarantees independence of current membrane potential from input
spikes beyond the recent past. This condition is used in constructing
a criterion for single neurons which when satisfied, guarantees a unique
consistent output spike-train and leads to the view of a neuron as a transformation that maps bounded-length input spike-trains to bounded-length output spike-trains. After this, similar criteria are
defined for feedforward networks, in general.

For a neuron, the way input spikes that happened sufficiently earlier affect current
membrane potential is via a causal sequence of output spikes, causal
in the sense that each output spike in the sequence had an effect on the membrane potential while the subsequent one
in the sequence was being produced and the input spike in question had an effect on the membrane potential, when the oldest output spike in the same sequence was produced. As a result, when an input spike is moved, this effect could propagate across time and cause the output spike train to change drastically. The condition in the Gap Lemma, in effect, seeks to break the causality in this causal chain.

\begin{figure}
\begin{center}
\includegraphics[width=8.6cm]{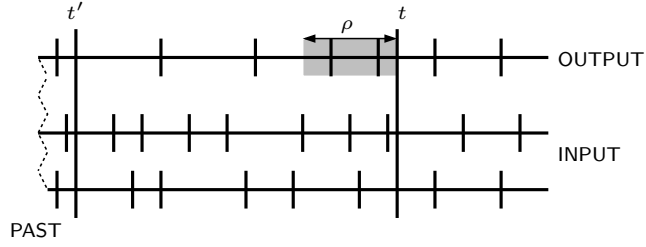}
  \caption{This figure illustrates the idea behind the Gap Lemma. Suppose there exists a neuron, with $\Upsilon$ and $\rho$ being the lengths of input and output windows respectively, that ``effects'' the transformation shown above. Let $(t'-t)\geq \Upsilon$. Suppose, the spikes in the shaded region, which is an
interval of length $\rho$ occurred at the exact same position, for all
input spike-train ensembles  that are identical in the range $[t, t']$, but have spikes occurring at arbitrary positions
older than time instant $t'$. Then, the membrane potential of that neuron at $t$ is identical in all those cases. This implies that the spikes in the shaded region are a function of exactly the input spikes in the interval $[t, t']$; in particular, they are independent of input spikes occurring before $t'$.}
\label{fig:c}
\end{center}
\end{figure}

Figure \ref{fig:c} elaborates the main idea behind the condition. Suppose there exists a neuron, with $\Upsilon$ and $\rho$ being the lengths of input and output windows respectively, that ``effects'' the transformation shown in Figure \ref{fig:c}. In a nutshell, if there was a guarantee that spike positions in an interval of length $\rho$ in the output spike train would remain invariant to changes in the past input spike-train ensemble, then this would break the aforementioned causal chain.

The question, of course, is what condition might guarantee such a situation. It turns out that a gap of length $2\rho$ in the output spike-train suffices, as the next lemma shows. That is, if the neuron effects a transformation with a $2\rho$ gap, say ending at $t$, present in the output, then for $t'$ being $\Upsilon+\rho$ milliseconds before $t$, such that no matter how input spikes older than $t'$ are changed, the latter half of the $2\rho$ gap is guaranteed to have no spikes in each case. Therefore, membrane potential starting at $t$, is the same in all such cases. $2\rho$ also turns out to be the smallest gap length for
which this works. Figure \ref{fig:d} offers some brief intuition on why a gap of length $2\rho$ suffices to guarantee independence. The technical details are in the following lemma. A formal proof is available in Appendix B.

\addtocounter{footnote}{-2}
\begin{figure}
\begin{center}
\includegraphics[width=8.6cm]{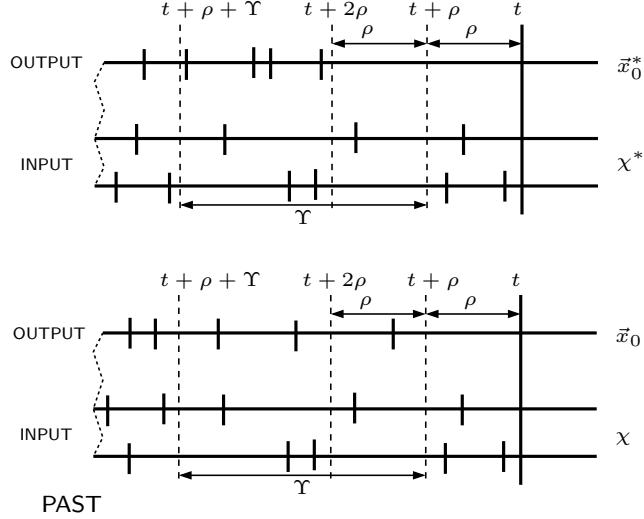}
\caption{This figure helps visualize the intuition behind why a gap of length $2\rho$ suffices to guarantee independence in the Gap Lemma. Suppose a neuron on receiving an input spike-train ensemble $\chi^*$ ``produces"\protect\footnotemark an output spike-train ${\vec x_0^*}$. Further, suppose, ${\vec x_0^*}$ has a gap of length $2 \rho$ ending at time instant $t$. Now let $\chi$ be some input spike-train ensemble, which is identical to $\chi^*$ in an interval of length $\Upsilon +\rho$ ending at $t$. Let ${\vec x_0}$ be the output spike-train "produced" by $\chi$. Then, the condition guarantees that ${\vec x_0}$ has a gap of length $\rho$ immediately preceding $t$. Here is why. When the neuron is being driven by $\chi^*$, clearly, the membrane potential is below threshold at each time instant $\rho$ milliseconds before $t$. At each such time instant, the neuron has no past output spikes $\rho$ milliseconds previously. Now, when the neuron is being driven by $\chi$ instead, there is no guarantee that the earlier half of the $2 \rho$ gap is preserved . Thus, at each time instant $\rho$ milliseconds before $t$, the neuron ``sees'' the same input spike-train ensemble $\Upsilon$ milliseconds previously as with $\chi^*$, but possibly some past output spikes $\rho$ milliseconds previously.  Therefore, it's membrane potential at each such time instant may be less than or equal to the corresponding value while the neuron was being driven by $\chi^*$, since, intuitively, the presence of recent efferent spikes could serve to afterhyperpolarize the membrane potential\protect\footnotemark. Thus, since the membrane potential was already below threshold in this time interval while the neuron was being driven by $\chi^*$, it is below the threshold, while the neuron is being driven by $\chi$ as well.}
\label{fig:d}
\end{center}
\end{figure}
\addtocounter{footnote}{-2}
\stepcounter{footnote}\footnotetext{For the sake of simplicity of exposition, assume there is exactly one consistent output spike-train. This is not a requirement as will become clear in the lemma.}
\stepcounter{footnote}\footnotetext{Formally, this follows from Axiom 2 in the definition of our abstract neuron.}

\begin{lemma}[\emph{Gap Lemma}]
Consider a neuron ${\mathsf N}\langle\alpha, \Upsilon, \rho, \tau,
\lambda, m, P:{\bar{\cal S}_{(0, \Upsilon)}}^m  \times \bar{\cal
  S}_{(0, \rho)} \rightarrow [\lambda, \tau]\rangle$, a spike-train ensemble $\chi^*$ of order $m$ and a spike-train ${\vec
  x_0}^*$ which has a gap in the interval $(t, t+2\rho)$, so that
${\vec x_0}^*$ is consistent with $\chi^*$, with respect to ${\mathsf
  N}$. Let $\chi$ be an arbitrary spike-train ensemble that is
identical to $\chi^*$ in the interval $(t, t+\Upsilon +\rho)$. 

Then, every output spike-train consistent with $\chi$, with respect to
${\mathsf N}$, has a gap in the interval $(t,t+\rho)$. Furthermore,
$2\rho$ is the smallest gap length in ${\vec x_0^*}$, for which this
is true.

\end{lemma}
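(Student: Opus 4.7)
The plan is to reduce the conclusion to a pointwise comparison of membrane potentials at each time $s \in (t, t+\rho)$, leveraging the afterhyperpolarization axiom (Axiom~2) together with the consistency of $\vec{x_0}^*$. I first verify two window containments. For any $s \in (t, t+\rho)$, the length-$\Upsilon$ afferent window $(s, s+\Upsilon)$ sits inside $(t, t+\Upsilon+\rho)$, so the hypothesis forces $\chi$ and $\chi^*$ to produce identical truncated afferent ensembles at $s$, i.e.\ $\Xi_{(0,\Upsilon)}(\sigma_s(\chi)) = \Xi_{(0,\Upsilon)}(\sigma_s(\chi^*))$; similarly the length-$\rho$ efferent window $(s, s+\rho)$ sits inside the assumed gap $(t, t+2\rho)$, so the truncated past output of $\vec{x_0}^*$ at $s$ is the empty ensemble $\vec{\phi}$.

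For an arbitrary output $\vec{x_0}$ consistent with $\chi$, I then chain three facts: Axiom~2 gives that $P$ evaluated at $s$ for $(\chi, \vec{x_0})$ is at most $P$ at $s$ for $(\chi, \vec{\phi})$; the input agreement gives that this equals $P$ at $s$ for $(\chi^*, \vec{\phi})$; and, by the second window containment, this in turn equals the membrane potential at $s$ in the consistent pair $(\chi^*, \vec{x_0}^*)$, which by consistency is strictly less than $\tau$ since $\vec{x_0}^*$ has no spike at $s$. Hence $\vec{x_0}$ cannot spike at $s$, and since $s$ was arbitrary, $\vec{x_0}$ has a gap in $(t,t+\rho)$.

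For the minimality assertion, the argument crucially used $(s, s+\rho) \subseteq (t, t+2\rho)$, which demands a gap of length at least $2\rho$; for any $L < 2\rho$, some $s$ in the target interval has its efferent window overhanging the gap and Axiom~2 no longer controls the potential there. To certify that $2\rho$ is genuinely tight and not merely an artifact of the proof, I would construct, for each $L < 2\rho$, an explicit neuron satisfying Axioms~1--3 together with $\chi^*$, a consistent $\vec{x_0}^*$ whose gap ending at $t$ has length exactly $L$, and a $\chi$ agreeing with $\chi^*$ on $(t, t+\Upsilon+\rho)$, such that some consistent output of $\chi$ spikes inside $(t, t+\rho)$. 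A natural recipe, in the spirit of Figure~\ref{fig:a}, is a neuron whose spiking near $t$ is gated by whether a strategically placed upstream efferent spike still lies within its AHP window; input events earlier than $(t, t+\Upsilon+\rho)$ in $\chi$ versus $\chi^*$ are tuned to permit or suppress that upstream spike. The main obstacle is this construction --- calibrating $P(\cdot)$ to simultaneously satisfy all the axioms while exhibiting the desired persistent dependence on distant input --- since the forward AHP chaining argument above is essentially immediate once the two window containments are in place.
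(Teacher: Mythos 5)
Your first half is exactly the paper's argument: for each $s\in(t,t+\rho)$ the afferent window $(s,s+\Upsilon)$ lies inside $(t,t+\Upsilon+\rho)$ and the efferent window $(s,s+\rho)$ lies inside the gap $(t,t+2\rho)$, so $P(\Xi_{(0,\Upsilon)}(\sigma_s(\chi)),\Xi_{(0,\rho)}(\sigma_s({\vec x_0})))\leq P(\Xi_{(0,\Upsilon)}(\sigma_s(\chi^*)),{\vec\phi})=P(\Xi_{(0,\Upsilon)}(\sigma_s(\chi^*)),\Xi_{(0,\rho)}(\sigma_s({\vec x_0^*})))<\tau$ by Axiom~2, input agreement, and consistency of ${\vec x_0^*}$; this is precisely the chain in the paper's proof, and it is correct.

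For the tightness claim, however, your proposal stops short: you state that you ``would construct'' a counterexample neuron for each gap length $L<2\rho$ and explicitly flag the calibration of $P(\cdot)$ against the axioms as the main unresolved obstacle, so as written that half of the lemma is not established. The paper closes this half simply by citing the counterexample of Figure~\ref{fig:a}, which is exactly the construction you sketch ``in the spirit of'': a single-afferent neuron that fires on an input spike unless it emitted an output spike within the past $\rho$ milliseconds, driven by input spikes every $\rho-\delta/2$ milliseconds. Its consistent output has gaps of length $2\rho-\delta$ (for arbitrarily small $\delta>0$), yet deleting an input spike arbitrarily far in the past --- in particular, outside $(t,t+\Upsilon+\rho)$ --- phase-shifts the consistent output so that it does spike inside the interval the lemma would protect. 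Since that neuron was already exhibited earlier in the paper and is easily seen to satisfy Axioms~1--3, the obstacle you identify has in effect already been dealt with; to complete your proof you need only invoke (or reproduce) that example rather than build a new one for each $L<2\rho$, noting that a failure at gap length $2\rho-\delta$ subsumes all shorter gap requirements.
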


The Gap Lemma has some ready implications as stated in the corollary below. A proof is available in Appendix B.

\begin{corollary}\label{corrgap}
Consider a neuron ${\mathsf N}\langle\alpha, \Upsilon, \rho, \tau,
\lambda, m, P:{\bar{\cal S}_{(0, \Upsilon)}}^m  \times \bar{\cal
  S}_{(0, \rho)} \rightarrow [\lambda, \tau]\rangle$, a spike-train
ensemble $\chi^*$ of order $m$ and a spike-train ${\vec x_0}^*$ which
has a gap in the interval $(t, t+2\rho)$ so that ${\vec x_0}^*$ is
consistent with $\chi^*$, with respect to ${\mathsf N}$. Then
\begin{enumerate}
\item Every ${\vec x_0}$ consistent with $\chi^*$, with respect to
  ${\mathsf N}$, has a gap in the interval $(t, t+\rho)$.\label{c11}
\item Every ${\vec x_0}$ consistent with $\chi^*$, with respect to
  ${\mathsf N}$, is identical to ${\vec x_0}^*$ in the interval $(-\infty, t+\rho)$, i.e. into the future after time instant $t+\rho$.\label{c12}
\item For every $t'$ more recent than $(t+\rho)$, the membrane potential at
  $t'$, is a function of spikes in $\Xi_{(t',t+\Upsilon
    +\rho)}(\chi^*)$.\label{c13}
\end{enumerate} 
\end{corollary}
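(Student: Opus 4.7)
The three parts flow from the Gap Lemma combined with a clean supremum argument whose soundness rests on the absolute refractory period.

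\emph{Part 1.} Set $\chi := \chi^*$ in the Gap Lemma. Trivially $\chi$ and $\chi^*$ coincide on $(t, t+\Upsilon+\rho)$, so every output consistent with $\chi^*$ inherits a gap in $(t, t+\rho)$.

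\emph{Part 2.} Suppose, for contradiction, that some consistent ${\vec x_0}$ differs from ${\vec x_0}^*$ at some instant of $(-\infty, t+\rho)$. Let $D$ denote the set of such disagreement times and put $s^* := \sup D$. By Part 1, both spike-trains are empty on $(t, t+\rho)$, so $D \subseteq (-\infty, t]$ and hence $s^* \leq t$. Every time in $D$ is a spike time of ${\vec x_0}$ or of ${\vec x_0}^*$, and the absolute refractory period $\alpha > 0$ forces the spike times of each spike-train to be locally finite; consequently $D$ has no accumulation points and the supremum is attained, i.e. $s^* \in D$. I would then evaluate the membrane potential at $s^*$ under both spike-trains: the input window $(s^*, s^*+\Upsilon)$ is read from the shared $\chi^*$, while the output window $(s^*, s^*+\rho) \subseteq (s^*, t+\rho)$ lies strictly above $s^*$ and so contains no disagreement. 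The two evaluations of $P(\cdot)$ therefore coincide, forcing identical spike behaviour at $s^*$ and contradicting $s^* \in D$.

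\emph{Part 3.} Let $\chi$ be any input ensemble that agrees with $\chi^*$ on $(t', t+\Upsilon+\rho)$ and let ${\vec x_0}^{\chi}$ be any output consistent with $\chi$. The value of $P$ at $t'$ depends only on the input segment $(t', t'+\Upsilon)$, which sits inside $(t', t+\Upsilon+\rho)$ since $t' + \Upsilon < t + \rho + \Upsilon$, and on the output segment $(t', t'+\rho)$. When $t' \leq t$, the agreement of $\chi$ and $\chi^*$ extends to $(t, t+\Upsilon+\rho)$, so the Gap Lemma gives ${\vec x_0}^{\chi}$ a gap in $(t, t+\rho)$, and a supremum argument mirroring Part 2 (the input windows appearing in the induction stay inside $(t', t+\Upsilon+\rho)$ because $s^* + \Upsilon \leq t + \Upsilon$) yields ${\vec x_0}^{\chi} = {\vec x_0}^*$ on $(t', t+\rho)$, hence on $(t', t'+\rho)$. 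When $t < t' < t+\rho$, the output segment $(t', t'+\rho)$ lies inside the $2\rho$-gap of ${\vec x_0}^*$, so the outputs read by $P$ are empty and the MP reduces to a function of the input segment alone. In either case the membrane potential at $t'$ is determined by $\Xi_{(t', t+\Upsilon+\rho)}(\chi^*)$.

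The delicate step is the assertion in Part 2 that $s^* \in D$, which would fail in a model lacking an absolute refractory period; with $\alpha > 0$, the locally finite structure of spike times makes the supremum attained and closes the argument. Once this is secured, Parts 1 and 3 are chiefly a matter of bookkeeping atop the Gap Lemma.
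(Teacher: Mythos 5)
Your Parts 1 and 2 are correct and are essentially the paper's argument in different clothing: the paper proves Part 2 by strong induction on the spike number after the gap, whereas you locate the earliest disagreement via $s^*=\sup D$ and derive a contradiction there; the two are equivalent, and your explicit appeal to the refractory period $\alpha>0$ to get local finiteness (so that the supremum is attained) is exactly the well-foundedness that the paper's spike-by-spike induction uses implicitly. The key step is the same in both: at the first disagreement after the gap the two output windows $\Xi_{(0,\rho)}(\sigma_{s^*}(\cdot))$ coincide (they lie strictly inside the already-agreed region, bounded by $t+\rho$), the input is the common $\chi^*$, so consistency forces identical behaviour at $s^*$.

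There is, however, a genuine flaw in your Part 3, in the sub-case $t<t'<t+\rho$. You take an \emph{arbitrary} $\chi$ agreeing with $\chi^*$ on $(t',t+\Upsilon+\rho)$ and an \emph{arbitrary} ${\vec x_0}^{\chi}$ consistent with $\chi$, and then assert that ``the outputs read by $P$ are empty'' because $(t',t'+\rho)$ lies inside the $2\rho$-gap -- but that gap belongs to ${\vec x_0}^{*}$, not to ${\vec x_0}^{\chi}$. The Gap Lemma (even in the localized form you would need, with agreement only on $(t',t+\Upsilon+\rho)$) guarantees ${\vec x_0}^{\chi}$ is spike-free only on $(t',t+\rho)$; it says nothing about $[t+\rho,\,t'+\rho)$, which the window $(t',t'+\rho)$ reaches as soon as $t'>t$. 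A consistent output may well carry spikes there (Part 2 gives identity only up to $t+\rho$), and by Axiom 2 such spikes can change (lower) the value of $P$ at $t'$, so the membrane potential is \emph{not} determined by the input segment alone for an arbitrary consistent output in this range of $t'$. The statement should instead be read along the trajectory that the corollary itself pins down -- i.e.\ with the output window taken from ${\vec x_0}^{*}$, which is empty on $(t',t'+\rho)\subset(t,t+2\rho)$ -- and with that reading (which is evidently the paper's intent behind its one-line ``follows from the Gap Lemma and (2)'') your computation goes through; your $t'\le t$ case, where the cross-input comparison stays inside the agreement region, is fine as written.
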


The upshot of the Gap Lemma and its corollary is that whenever
a neuron goes through a period of time equal to twice its relative
refractory period where it has produced no output spikes it undergoes a ``reset'' in the sense that its membrane
potential from then on becomes independent of input spikes that
are older than $\Upsilon +\rho$ milliseconds before the end of the gap. 

Large gaps in the output spike-trains of neurons seem to be
extensively prevalent in the human brain. In parts of the brain where
the neurons spike persistently, such as in the frontal cortex, the
spike rate is very low (0.1Hz-10Hz) \citep{shep}. In contrast, the
typical spike rate of retinal ganglion cells can be very high but the
activity is generally interspersed with large gaps during which no
spikes are emitted \citep{lath}. 

These observations motivate our definition of a criterion for input
spike-train ensembles afferent on single neurons. The criterion
stipulates that there be intermittent gaps of length at least twice the
relative refractory period in an output spike-train consistent with the input spike-train ensemble, with respect to the neuron in question. As we elaborate in a moment, the definition is set up so that for an input spike-train ensemble $\chi$ that satisfies a $T$-Gap criterion for a neuron, the membrane potential at any point in time is dependent on at most $T$
milliseconds of input spikes in $\chi$ before it.

\begin{definition}[Gap Criterion for a single neuron]
For ~$T \in \mathbb{R}^+$, a spike-train ensemble $\chi$ is said to satisfy a
$T${\emph -Gap Criterion}\footnote{Note that for sufficiently small values of $T$ (in relation to $\Upsilon$ and $\rho$), no $\chi$ may satisfy a $T$-Gap Criterion. This is deliberate formulation that will minimize notational clutter in forthcoming definitions.} for a neuron ${\mathsf N}\langle\alpha,
\Upsilon, \rho, \tau, \lambda, m, P:{\bar{\cal S}_{(0, \Upsilon)}}^m 
\times \bar{\cal S}_{(0, \rho)} \rightarrow [\lambda, \tau]\rangle$ if the following is true:
There exists a spike-train ${\vec x_0}$ with at least one gap of
length $2\rho$ in every interval of time of length $T-\Upsilon
+2\rho$, so that ${\vec x_0}$ is consistent with $\chi$ with respect to ${\mathsf N}$.
\end{definition}

Such input spike-train ensembles also have exactly one consistent output spike-train. The interested reader is directed to 
Proposition~1 in Appendix B for a formal statement and proof of this fact.

\addtocounter{proposition}{1}


\begin{figure}
\begin{center}
  \includegraphics[width=8.6cm]{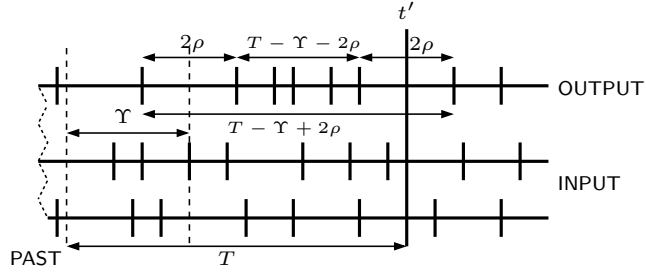}
\caption{Illustration demonstrating that for an input spike-train ensemble $\chi$ that satisfies a $T$-Gap criterion,
the membrane potential at any point in time is dependent on at most $T$
milliseconds of input spikes in $\chi$ before it. Owing to the $T$-Gap criterion the distance between the
end and start of any two consecutive gaps of length $2 \rho$ on the
output spike-train is at most $T-\Upsilon -2\rho$. Upto the earlier
half of a $2 \rho$ gap (whose latest point is denoted by $t'$) is
dependent on input corresponding to the previous $2\rho$ gap. It follows that the membrane potential at $t'$ depends only on
input spikes in the interval of length $T$ before it, as depicted,
owing to the Gap Lemma.}
\label{fig:e}
\end{center}
\end{figure}

For an input spike-train ensemble $\chi$ that satisfies a $T$-Gap
criterion for a neuron, the membrane potential at any point in time is
dependent on at most $T$ milliseconds of input spikes in $\chi$ before
it, as discussed in Figure \ref{fig:e}.

 With inputs that satisfy the $T$-Gap Criterion, here is what we need to do to
physically determine the current membrane potential, even if the
neuron has been receiving input since the infinite past: Start off
the neuron from an arbitrary state, and drive it with input that the
neuron received in the past $T$ milliseconds. The Gap Lemma guarantees that
the membrane potential we see now will be identical to the actual membrane
potential, since the membrane potential is guaranteed to have undergone a ``reset'' in the ensuing time.

The Gap Criterion we have defined for single neurons can be naturally
extended to the case of feedforward networks. The criterion is simply that the input
spike-train ensemble to the network is such that every neuron's input obeys a
 scaled Gap criterion for single neurons. Figure~\ref{fig:g} explains the idea. Formally, the definition proceeds inductively, starting with neurons of depth 1. 

\begin{definition}[Gap Criterion for a feedforward network]
An input spike-train ensemble $\chi$ is said to satisfy a $T${\emph
  -Gap Criterion} for a feedforward network if each neuron in the network
satisfies a $(\frac{T}{d})$-Gap Criterion, when the network is driven
by $\chi$, where $d$ is the depth of the acyclic network.
\end{definition}

\begin{figure}
\begin{center}
  \includegraphics[width=8.5cm,height=3.2cm]{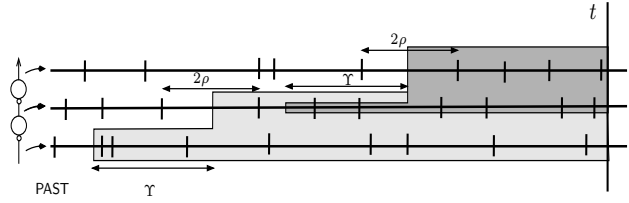}
\caption{Schematic diagram illustrating how the Gap criterion works for the simple two-neuron network on the left. The membrane potential of the output neuron at $t$ depends on input received from the ``intermediate'' neuron, as depicted in the darkly-shaded region, owing to the Gap Lemma. The output of the intermediate neuron in the darkly-shaded region, in turn, depends on input it received in the lightly-shaded region. Thus, transitively, membrane potential of the output neuron  at $t$ is dependent at most on input received by the network in the lightly-shaded region.}
\label{fig:g}
\end{center}
\end{figure}

As with the criterion for the single neuron, the membrane potential of
the output neuron at any point is dependent on at most $T$ milliseconds of
past input, if the input spike-train ensemble to the feedforward network
satisfies a $T$-Gap criterion. Additionally, the output spike-train is unique. 
Lemma~2  and its proof in Appendix B make precise these facts. 

\addtocounter{lemma}{1}

We thus find ourselves at a juncture where questions we initially sought to ask can
be posed in a self-consistent manner. So, looking back at the big picture, we had initially wished to view feedforward networks as transformations that mapped bounded-length input spike-trains to bounded-length output spike trains. However, we found that this notion was not always well-defined. We then showed that if we restrict the set of input spike-trains so they satisfied certain criteria,  one can correctly speak of output spike-trains that such inputs are mapped to, by the feedforward network in question. We also argued that this restricted set of spike-trains encompasses biologically-relevant spiking regimes. Thus, feedforward networks can be seen as transformations that map this restricted set of input spike-trains to output spike-trains. Indeed, this will be the sense in which feedforward networks are treated as transformations. Next, we formalize these observations and define some notation.

{\bf Notation.} Given a feedforward network ${\cal N}$, let $ {\cal G}^T_{\cal N}$ be
the set of all input spike-train ensembles that satisfy a $T$-Gap
Criterion for ${\cal N}$. Let $ {\cal G}_{\cal N} = \bigcup_{T \in
  \mathbb{R}^+}  {\cal G}^T_{\cal N} $. Therefore, every feedforward
network ~${\cal N}$ induces a transformation ${\cal T}_{\cal N}: {\cal
  G}_{\cal N} \rightarrow {\cal S}$ that maps each spike-train ensemble in $ {\cal G}_{\cal N}$ to a unique output spike
train in the set of spike-trains ${\cal S}$. Suppose ${\cal G'} \subseteq {\cal G}_{\cal
  N}$. Then, let  ${\cal T}_{\cal N}|_{{\scriptscriptstyle{\cal G'}}}: {\cal
  G'} \rightarrow {\cal S}$ be the map defined as  ${\cal T}_{\cal N}|_{{\scriptscriptstyle{\cal G'}}}(\chi)={\cal T}_{\cal N}(\chi)$, for all $\chi \in {\cal G'}$.

The Gap Criteria are very general and biologically well-motivated. However,
given a neuron or a feedforward network, there does not appear to be an
easy way to characterize all the input spike-train ensembles that satisfy a
certain Gap Criterion for it. That is, for a given neuron, whether an input spike-train ensemble satisfies a Gap Criterion for it seems to depend intimately on the exact form of its membrane potential function. As a result, a spike-train ensemble that satisfies a Gap criterion for one neuron may not satisfy any Gap Criterion for another neuron. For a feedforward network, the problem becomes even more difficult, since intermediate
neurons must satisfy Gap Criteria, and also produce output spike-trains that satisfy Gap Criteria for neurons further downstream. Furthermore, in order to compare transformations effected by two different networks, we need to study inputs that satisfy some Gap criterion for both of them, for otherwise, the notion of a transformation may no longer hold. Now, we sought to ask what transformations {\em all} feedforward networks with a certain architecture could not do. For this, we need to characterize inputs that satisfy a Gap Criterion for all the networks involved, which seems to be an even more intractable problem.

This brings up the question of the existence of another criterion
according to which the set of spike-train ensembles is easier to
characterize and is {\em common} across different networks. Next, we
propose one such criterion and show that it consists of spike-train
ensembles which are a subset of those induced by the Gap criteria for
all feedforward networks. Loosely speaking, these are input spike-train
ensembles which, before a certain time instant in the past, have had
no spikes. The spike-train ensembles satisfying the said criterion,
which we call the Flush criterion, allow us to sidestep the difficult
issues just discussed. While this is a purely
theoretical construct with no claim of biological relevance, in
Section~\ref{sec:compl}, we prove that there is no loss by restricting
ourselves to the Flush criterion. That is, not only is a 
result proved using the Flush criterion applicable with the Gap
criterion, {\em every} result true with the Gap criterion can be
proved by using the Flush criterion exclusively.

\section{Flush Criterion}\label{sec:Flush}
The idea of the Flush Criterion is to force the neuron to produce no
output spikes for sufficiently long so as to guarantee that a Gap
criterion is being satisfied. This is done by having a
semi-infinitely long interval with no input spikes. This ``flushes'' the
neuron by bringing it to the resting potential and keeps it there for
a sufficiently long time, during which it produces no output spikes. In a feedforward network, the flush is
propagated so that all neurons have had a sufficiently long gap in
their output spike-trains. Observe that the Flush Criterion is not
defined with reference to any feedforward network and is just a property of the spike-train ensemble. We make this notion precise below.

\begin{definition}[Flush Criterion]
A spike-train ensemble $\chi$ is said to satisfy a {\em $T$-Flush Criterion},
if all its spikes lie in the interval $(0,T)$, i.e. it has no spikes
upto time instant $T$ and since time instant 0.
\end{definition}

It turns out that an input spike-train ensemble to a neuron that satisfies a
Flush criterion also satisfies a Gap criterion.  The technical details along with a proof are in 
Lemma~3 in Appendix B.

\addtocounter{lemma}{1}

Likewise, an input spike-train ensemble to a feedforward network satisfying a Flush
criterion also satisfies a Gap criterion for that network, as elaborated in 
Lemma~4 which is available in Appendix B with a proof.

\addtocounter{lemma}{1}

The Flush criterion is a construct made for mathematical expedience and prima facie does not have any biological relevance. It is a network-independent criterion which enables us to circumvent difficulties that working with the Gap criterion entailed. It will soon become clear why it is a useful construction, when we show that it is equivalent to the Gap criterion insofar as the questions we seek to ask are concerned.

\section{Transformational Complexity}\label{sec:compl}
Having laid the groundwork, in this section, we set up a definition that will allow us to ask if there exists a transformation that no network of a certain architecture could effect that a network of a different architecture could. It is convenient to formulate the definition in the following terms. Given two classes\footnote{The
classes of networks could correspond to ones that contain all networks with specific network architectures, although for the
purpose of the definition, there is no reason to require this to be
the case.} of networks with
the second class encompassing the first, we ask if there is a
network in the second class whose transformation cannot be performed by
any network in the first class. That is, does the second class possess a larger repertoire of transformations than the first, giving it {\em more complex} computational capabilities?  

\begin{definition}[Transformational Complexity]\label{defn:compl}
Let $\Sigma_1$ and $\Sigma_2$ be two sets of feedforward networks, each
network being of order $m$, with $\Sigma_1 \subseteq \Sigma_2$. Define
${\cal G}_{12} = \bigcap_{{\cal N} \in \Sigma_2} {\cal
  G}_{\cal N}$. The set $\Sigma_2$ is said to be {\em more complex than}
$\Sigma_1$, if there exists an ${\cal N}' \in \Sigma_2$ such that for all
${\cal N} \in \Sigma_1, {\cal T}_{{\cal N}'}|_{{\scriptscriptstyle{\cal G}_{12}}} \neq {\cal
  T}_{{\cal N}}|_{{\scriptscriptstyle{\cal G}_{12}}}$.
 
\end{definition}


\noindent
A couple of remarks about the definition above are in order. Firstly, $\Sigma_1$ being a proper subset of $\Sigma_2$, does not necessarily imply that the that the set of transformations effected by  networks in $\Sigma_1$ is also a proper subset of those effected by $\Sigma_2$. In particular, it could be the case that the set of transformations effected by $\Sigma_1$ is exactly the same as that effected by $\Sigma_2$, even though $\Sigma_1$ is a proper subset of $\Sigma_2$. Indeed, this is what is demonstrated by the result of Section 9, which shows in the context of the present neuron model that even though the set of depth-two feedforward networks is a strict subset of the set of all feedforward networks, both these sets effect the same class of transformations, namely those that are causal, time-invariant and resettable. Secondly, observe that while comparing a set of networks, we restrict ourselves to
inputs for which all the networks satisfy a certain Gap Criterion
(though, not necessarily for the same $T$), so that the notion of a
transformation is well-defined on the input set, for all networks
under consideration. Note also that ${\cal G}_{12}$ is always a nonempty set, because ${\cal G}_{12}$ contains within it all inputs satisfying the Flush criterion. Henceforth, for brevity, any result that establishes a
relationship of the form defined above is called a {\em complexity
result.} Before we proceed, we introduce some useful notation. 

{\bf Notation.} Let the set of spike-train ensembles
of order $m$ that satisfy the T-Flush criterion be
${\cal F}_m^T$.  Let ${\cal F}_m=\bigcup_{T \in \mathbb{R}^+} {\cal
  F}_m^T $. What we have established in the previous section is that ${\cal
  F}_m \subseteq {\cal G}_{\cal N}$, for every feedforward network ${\cal N}$ of order $m$.

Next, we show that if one class of networks is more complex than
another, then inputs that satisfy the Flush Criterion are both necessary and sufficient
to prove this. That is, to prove this type of complexity result, one
can work exclusively with Flush inputs without losing any
generality. This is not obvious because Flush inputs form a subset of
the more biologically well-motivated Gap inputs. The next lemma formalizes this equivalence. Note that the statement of the lemma is substantially identical to that of Definition~\ref{defn:compl}, except that the input spike-train ensembles in the lemma below satisfy the Flush criterion, as opposed to the ones in  Definition~\ref{defn:compl} which satisfy ${\cal G}_{12}$, the set of input spike-train ensembles that satisfy a Gap Criterion for all the networks under consideration.

\begin{lemma}[Equivalence of Flush and Gap Criteria with respect to Transformational Complexity]\label{thm:GFeq}
Let $\Sigma_1$ and $\Sigma_2$ be two sets of feedforward networks, each
network being of order $m$, with $\Sigma_1 \subseteq \Sigma_2$. Then,
$\Sigma_2$ is more complex than $\Sigma_1$ if and only if $\exists
{\cal N}' \in \Sigma_2$ such that $\forall {\cal N} \in \Sigma_1,
{\cal T}_{{\cal N}'}|_{{\scriptscriptstyle{\cal F}_{m}}} \neq {\cal T}_{{\cal N}}|_{{\scriptscriptstyle{\cal F}_{m}}}$.
\end{lemma}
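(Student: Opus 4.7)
The reverse direction ($\Leftarrow$) is immediate. We have already established that every Flush input lies in $\mathcal{G}_{\mathcal{N}}$ for any feedforward network $\mathcal{N}$ of order $m$ (via Lemma~4), and hence $\mathcal{F}_m \subseteq \mathcal{G}_{12}$. Any Flush input that witnesses $\mathcal{T}_{\mathcal{N}'}|_{\mathcal{F}_m}(\chi) \neq \mathcal{T}_{\mathcal{N}}|_{\mathcal{F}_m}(\chi)$ therefore also lies in $\mathcal{G}_{12}$ and witnesses the corresponding inequality there, making $\Sigma_2$ more complex than $\Sigma_1$ in the sense of Definition~\ref{defn:compl}.

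For the forward direction ($\Rightarrow$), the plan is to start with a differentiating Gap input and, by temporal truncation followed by a time-shift, convert it into a differentiating Flush input. Fix the witnessing $\mathcal{N}' \in \Sigma_2$, pick an arbitrary $\mathcal{N} \in \Sigma_1$, and let $\chi \in \mathcal{G}_{12}$ satisfy $\mathcal{T}_{\mathcal{N}'}(\chi) \neq \mathcal{T}_{\mathcal{N}}(\chi)$; choose a time $t^* \in \mathbb{R}$ at which the two output spike-trains disagree (one has a spike there, the other does not). Since $\chi \in \mathcal{G}_{12}$, there exists $T^* > 0$ for which $\chi$ satisfies the $T^*$-Gap Criterion for both $\mathcal{N}'$ and $\mathcal{N}$. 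The feedforward-network bounded-dependency statement (Lemma~2 in the Appendix, itself obtained by propagating the single-neuron Gap Lemma and its Corollary along each path of the network) tells us that the spike status at $t^*$ in either output is a function only of $\Xi_{(t^*,\,t^*+W)}(\chi)$, for some $W$ determined by $T^*$ and the network structure. Define
\[
\chi' \;:=\; \sigma_{t^*}\!\bigl(\Xi_{(t^*,\,t^*+W+1)}(\chi)\bigr).
\]
All spikes of $\chi'$ lie strictly inside $(0,W+1)$, so $\chi' \in \mathcal{F}_m^{W+1} \subseteq \mathcal{F}_m$; by Lemma~4, $\chi'$ satisfies Gap Criteria for both networks, so $\mathcal{T}_{\mathcal{N}'}(\chi')$ and $\mathcal{T}_{\mathcal{N}}(\chi')$ are well-defined. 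Because $P(\cdot)$ is written purely in terms of relative spike times, the neuron model is time-invariant; combined with bounded dependency on the agreement window $(t^*, t^*+W)$, this yields $\mathcal{T}_{\mathcal{N}'}(\chi')(0) = \mathcal{T}_{\mathcal{N}'}(\chi)(t^*)$ and the analogous identity for $\mathcal{N}$. By the choice of $t^*$ these two values differ, so the spike-trains $\mathcal{T}_{\mathcal{N}'}(\chi')$ and $\mathcal{T}_{\mathcal{N}}(\chi')$ differ at time $0$, exhibiting $\chi' \in \mathcal{F}_m$ as a Flush witness. Since $\mathcal{N} \in \Sigma_1$ was arbitrary, the forward direction follows with the same $\mathcal{N}'$.

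The main technical obstacle is justifying that the bounded-dependency statement applies consistently to both $\chi$ (a Gap input) and $\chi'$ (a Flush-derived input), whose nominal Gap parameters may differ. The resolution is that the $2\rho$-output-gaps needed to apply the Gap Lemma at each neuron are guaranteed independently for both inputs: for $\chi$ by its membership in $\mathcal{G}_{12}$, and for $\chi'$ by the Flush-to-Gap propagation of Lemma~4. The iterated Gap-Lemma argument down the feedforward cascade can therefore be executed on either input, and choosing the truncation width larger than the maximum of all resulting dependency windows forces agreement of the outputs at $t^*$. The shift step, although conceptually routine given the relative-time formulation of $P(\cdot)$, must be performed explicitly because the Flush Criterion is anchored at the ``present'' time $0$.
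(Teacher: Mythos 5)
Your proposal is correct and follows essentially the same route as the paper: the easy direction via ${\cal F}_m \subseteq {\cal G}_{12}$, and the hard direction by using the $T$-Gap bounded-dependency property (Gap Lemma cascaded through the network) to truncate the differentiating Gap input around a disagreement point and time-shift it into a Flush witness. The only difference is cosmetic -- the paper forms the whole family of length-$2T$ chunks $\Xi_{(0,2T)}(\sigma_t(\chi))$ over all shifts $t$ and argues some chunk must differ, whereas you anchor a single chunk at the disagreement time $t^*$; the underlying agreement-on-the-recent-window argument is identical.
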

\begin{proof}[Proof sketch]
A full proof is available in Appendix B; here we sketch the intuition behind the proof. 

Showing that Flush inputs are sufficient is the easier half of the proof. If a complexity result can be shown using Flush inputs, it follows that it holds for Gap inputs as well, since ${\cal F}_m \subseteq {\cal G}_{12}$. To show that the existence of Flush inputs is necessary, we assume a
complexity result proved using Gap inputs and construct Flush inputs
such that the result can be shown using those Flush inputs alone. Now
suppose ${\cal N}' \in \Sigma_2$ be the network such that no network in $\Sigma_1$ effects the same transformation as ${\cal N}'$, when the domain is restricted to the set ${\cal G}_{12}$. Now, consider arbitrary ${\cal N} \in \Sigma_1$. There must exist a $\chi \in {\cal G}_{12}$ such that ${\cal T}_{{\cal N}'}|_{{\scriptscriptstyle{\cal F}_{m}}}(\chi) \neq {\cal T}_{{\cal N}}|_{{\scriptscriptstyle{\cal F}_{m}}}(\chi)$. By definition, this $\chi$ satisfies a $T_1$-Gap Criterion for
${\cal N}$ and a $T_2$-Gap Criterion for ${\cal N}'$. Let $T=\max(T_1, T_2)$. The claim is that if $\chi$ is cut up into ``chunks'' of length $2T$, where each ``chunk'' satisfies a 2T-Flush criterion, then ${\cal N}$ and ${\cal N}'$ will map at least one of those chunks to different output spike trains, since the output in the latter half of the chunk is identical to that produced by the corresponding segment of $\chi$. This process of ``cutting up'', when ``completed'' for each ${\cal N} \in \Sigma_1$ yields a subset of Flush inputs, using which the complexity result can be established.
\end{proof}

Assured by this theoretical guarantee that there is no loss of
generality by doing so, we will henceforth only work with inputs
satisfying the Flush Criterion, while faced with the task of proving
complexity results. This buys us a great deal of mathematical expedience at no cost. From now on, unless qualified otherwise, when we speak of a {\em transformation}, we mean a map of the form ${\cal T}: {\cal F}_{m} \rightarrow {\cal S}$ that maps the set of Flush input spike-train ensembles to the set of output spike-trains.

\section{Complexity results}\label{sec:compl_results}

In this section, we establish some complexity results. First, we show
that there exist spike-train to spike-train transformations that no
feedforward network can effect. Next, we show a transformation that no
single neuron can effect but a network consisting of two neurons
can. After this, we prove a result which shows that a class of
architectures that share a certain structural property also share in
their inability in effecting a particular class of
transformations. Notably, while this class of architectures has
networks with arbitrarily many neurons, we show a class of networks
with just two neurons which can effect this class of
transformations. The interested reader is directed to Appendix
B for some technical remarks concerning the mechanics of proving complexity results that are not central to the exposition
here.

\addtocounter{footnote}{-1}
\begin{figure}
\begin{center}
\subfigure[Example of a transformation that no feedforward network\protect\footnotemark can effect. The shaded region is replicated over, to obtain mappings for larger and larger values of $T$.]{\label{fig:k}\includegraphics[width=8.2cm]{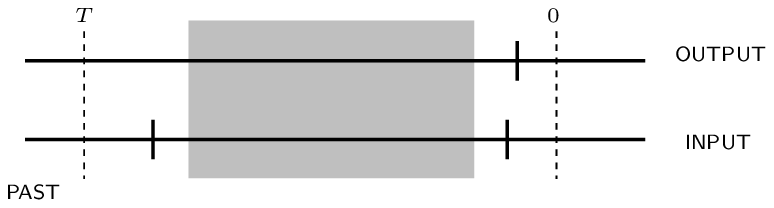}}
\subfigure[A transformation that no single neuron can effect, that a network with two neurons can.]{\label{fig:i}\includegraphics[width=8.2cm]{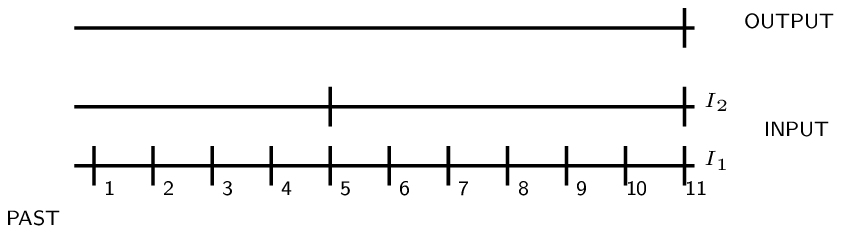}}
\caption{}
\end{center}
\end{figure}
\addtocounter{footnote}{-1}
\stepcounter{footnote}\footnotetext{Recall that the neurons considered in this work are deterministic.}

Before establishing complexity results, we point out that it is
straightforward to construct a transformation that cannot be effected by any
feedforward network. One of its input spike-train ensembles with the
prescribed output is shown in Figure~\ref{fig:k}. For larger ~$T$, the shaded region is
simply replicated over and over again. Informally, the reason this
transformation cannot be effected by any network is that, for any
network, beyond a certain value of ~$T$, the shaded region tends to act
as a ``flush'', erasing ``memory'' of the first input spike. When the network receives another input spike, it is in the exact same ``state'' it was
when it received the first input spike, and therefore cannot produce an
output spike after the second input spike. 

Next, we prove that the set of feedforward 
networks with at most two neurons is more complex than the set of
single neurons. The proof is by prescribing a transformation which
cannot be done by any single neuron. We then construct a network with
two neurons that can indeed effect this transformation. Note that in the statement of the theorem below, $m$ stands for the number of input spike trains.

\begin{theorem} \label{thm:1vs2}
Suppose $m\geq 2$. Let $\Sigma$ be the set of feedforward networks with at most two neurons that each receive an input spike-train ensemble of order $m$. Then, $\Sigma$ is
more complex than the set of single neurons of order $m$.
\end{theorem}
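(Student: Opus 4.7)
The plan is to prove Theorem~\ref{thm:1vs2} by exhibiting a specific transformation $T^* \colon \mathcal{F}_m \to \mathcal{S}$, constructing a feedforward network of two neurons that effects $T^*$, and arguing that no single neuron does. By Lemma~\ref{thm:GFeq}, it suffices to work on the Flush inputs $\mathcal{F}_m$: since every input in $\mathcal{F}_m$ has a semi-infinite spike-free prefix, every neuron involved is guaranteed to start from rest, which eliminates the burden of matching ``prior state'' across different networks and reduces consistency-checking to local case analyses on the active portion of the input.

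The transformation $T^*$ I would use has a genuinely two-stage structure: a derived feature computed from line $1$ alone (for instance, a two-spike coincidence within an interval of length $\epsilon$) gated by a line-$2$ event. Concretely, let $T^*(\chi)$ place an output spike at $t+\delta$ exactly when line $2$ of $\chi$ has a spike at $t$ \emph{and} line $1$ of $\chi$ satisfies the prescribed pattern in a designated interval ending at $t$; otherwise the output is empty. The corresponding two-neuron network has an intermediate neuron ${\mathsf N}_1$ that fires whenever line $1$ satisfies the pattern (straightforward to realize by choosing $P_1$ so that the coincidence of two close line-$1$ spikes pushes it across $\tau_1$), and an output neuron ${\mathsf N}_2$ that fires only when it simultaneously sees a spike from ${\mathsf N}_1$ and a spike on line $2$. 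One then writes down explicit $P_1, P_2$ satisfying Axioms~1--3 and verifies consistency on Flush inputs by enumerating the relevant input-window configurations.

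The hard direction is to show no single neuron effects $T^*$. The main technical lever is the time-invariance built into Definition~2: a single neuron's firing decision at time $t$ is a function only of $(\Xi_{(0,\Upsilon)}\sigma_t(\chi),\Xi_{(0,\rho)}\sigma_t({\vec x_0}))$, so any two times $t_1<t_2$ at which the input and past-output windows coincide must yield identical decisions. I would design $T^*$ so that for \emph{every} candidate neuron ${\mathsf N}\langle\alpha,\Upsilon,\rho,\tau,\lambda,m,P\rangle$ one can exhibit a Flush input $\chi$ together with two times $t_1<t_2$ at which $T^*(\chi)$ prescribes different firing but on which the $\Upsilon$- and $\rho$-windows are forced to agree, giving a contradiction. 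Axiom~2 supplements this whenever the past-output windows might threaten to diverge: extra past output spikes can only depress $P$, so they cannot turn a subthreshold response into a spike, which pins the direction of any forced mismatch in the neuron's favour and completes the contradiction.

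The principal obstacle I anticipate is robustness against the flexibility of single neurons, in particular the freedom to choose $\Upsilon$ arbitrarily large --- a naive ``long-gap memory'' argument fails because a neuron with sufficiently large $\Upsilon$ simply sees both events of any fixed-gap pattern at once (this is what distinguishes the present setting from Figure~\ref{fig:k}, where \emph{no} feedforward network suffices). The essential use of the second neuron is that ${\mathsf N}_1$ performs a nonlinear thresholding of its input and hands ${\mathsf N}_2$ a single binary spike whose timing is determined by ${\mathsf N}_1$'s own crossing of $\tau_1$; this gives the composition a genuine structural advantage over any single $P$. The final proof must therefore pin down a configuration of inputs and candidate spike times that exposes this advantage --- forcing a single neuron, by time-invariance plus Axiom~2, into a firing pattern inconsistent with $T^*$ while a two-neuron network's intermediate spike serves as the missing ``extra bit'' in ${\mathsf N}_2$'s input window. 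I expect the transformation indicated in Figure~\ref{fig:i} to be a clean instantiation of this idea; once it is written down precisely, the impossibility argument should reduce to a short case analysis, while the two-neuron construction is a routine specification of $P_1$ and $P_2$.
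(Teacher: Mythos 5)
Your proof skeleton is the right one --- restrict to Flush inputs via Lemma~\ref{thm:GFeq}, exhibit a transformation, build an explicit two-neuron network, and defeat every single neuron by producing two time instants whose $\Upsilon$- and $\rho$-windows agree but whose prescribed outputs differ --- and this is indeed how the paper argues. But the concrete transformation you propose does not do the job, and you have in effect conceded this yourself. A feature of line~1 confined to ``a designated interval ending at $t$'', gated by a line-2 spike at $t$, is a bounded-window predicate: a single neuron whose $\Upsilon$ exceeds the length of that designated interval can evaluate both the line-1 pattern and the line-2 gate inside one input window and set $P$ to $\tau$ exactly there, so no separation is obtained. The obstacle you flag in your last paragraph (the freedom to choose $\Upsilon$ arbitrarily large) is not a technicality to be cleaned up later; it is the entire content of the theorem, and your proposal stops exactly where it must be overcome. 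What the paper does is prescribe a \emph{scaling family} of Flush inputs indexed by $i$ (the $i$th satisfying a $T$-Flush criterion with $T=4i+3$ time units): $I_1$ carries a regular train of unit-spaced spikes, $I_2$ carries one spike roughly in the middle (index $2i+1$) and one at the end (index $4i+3$), and an output spike is demanded only after the final one. For any candidate neuron with $\max(\Upsilon,\rho)$ at most $k$ time units, taking $i\ge\lceil k/2\rceil$ makes the input windows and the (empty) past-output windows at the two $I_2$-spike times identical, forcing identical membrane potentials and hence the same firing decision --- contradicting the prescription, which demands a spike at one instant and none at the other. No fixed-window feature can distinguish the two instants, because the disambiguating event (the first $I_2$ spike) recedes arbitrarily far into the past as $i$ grows.

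Correspondingly, your two-neuron construction misses the mechanism that makes the network side work. An intermediate neuron that merely ``fires whenever line 1 satisfies the pattern'' computes another bounded-window feature and buys nothing over a single neuron with a larger $\Upsilon$. In the paper's construction the intermediate neuron $N_1$ is essentially the neuron of Figure~\ref{fig:a}: driven by the periodic $I_1$ it fires on alternate time units, its afterhyperpolarization suppressing every other response, and the mid-stream $I_2$ spike (delivered to $N_1$ with a one-unit delay) overrides the suppression once and thereby flips the \emph{phase} of this self-sustained alternation. That phase is a single bit of memory carried forward for an unbounded time through the interaction of $N_1$'s own output with its refractory dynamics, and the output neuron $N_2$ (with $\Upsilon$ shorter than one time unit) simply detects the triple coincidence of $I_1$, $I_2$, and the delayed $N_1$ spike, which occurs at the end but not in the middle. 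So the separation comes from the intermediate neuron's persistent state encoded in the timing of its ongoing output, not from a feedforward ``feature detector''; without that ingredient, and without the scaling family of prescribed inputs, your impossibility argument cannot be completed as written. (A minor point: in the paper's impossibility step the past-output windows at both comparison instants are empty, so Axiom~2 is not needed to ``pin the direction''; the contradiction is direct.)
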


\begin{proof}
We first prescribe a transformation, prove that it cannot be effected by a single neuron and then construct a two-neuron network and show that it can indeed effect the same transformation.

We first prove the result for $m=2$ and later indicate how it can be
 extended for larger values of $m$. Let the two input
spike-trains in each input spike-train ensemble, which satisfies a
Flush Criterion be $I_1$ and $I_2$.  Figure \ref{fig:i} illustrates
the transformation. Informally, $I_1$ has regularly-spaced spikes
starting after time instant $T$ until $0$. $I_2$ has two spikes,
with the first one, loosely speaking, in the ``middle'' of $(0, T)$ and
the second one at the end, i.e. right before time instant $0$. An output spike
is always prescribed after the second spike in $I_2$ occurs, and not
elsewhere. For larger $T$, the number of spikes on $I_1$ increases so
as to maintain the same regular spacing; $I_2$, in contrast, still has just
two spikes, the first one roughly in the middle and the second in the
end. For the sake of exposition, we call the distance between
consecutive spikes on $I_1$, one time unit and we number the spikes of $I_1$
with the first spike being the oldest one.

More precisely, the transformation is prescribed for a subset of ${\cal
  F}_{m}$, whose elements are indexed by $i=1, 2, \cdots$.  Figure \ref{fig:i} illustrates the transformation,
for $i=2$. The $i$th input spike-train
ensemble in this subset satisfies a $T$-Flush criterion, where
$T=4i+3$ time units. In the $i$th spike-train ensemble, $I_2$ has spikes at time
instants at which spike numbers $2i+1$ and $4i+3$ occur in
$I_1$. Finally, the output spike-train corresponding to the $i$th
input spike-train ensemble has exactly one spike after\footnote{Strictly speaking, the output spike happens at $4i+3+\epsilon$, where $\epsilon>0$ is a small real number. Henceforth whenever we say an output spike is {\em after} a certain time instant, we mean it in this sense.} the time instant
at which $I_1$ has spike number $4i+3$.

Next, we prove that the transformation prescribed above cannot be
effected by any single neuron. For the sake of contradiction, suppose
it can, by a neuron with associated $\Upsilon$ and $\rho$. Let
$\max(\Upsilon, \rho)$ be bounded from above by $k$ time units. We
show that for $i\geq\lceil\frac{k}{2}\rceil$, the $i$th input spike-train ensemble
cannot be mapped by this neuron to the prescribed output spike
train. For $i=\lceil\frac{k}{2}\rceil$, consider the membrane potential of the neuron after the time instants
corresponding to the $(k+1)$th spike number and $(2k+3)$rd spike
number of $I_1$. At each of these corresponding time instants, the input received in
the past $k$ time units and the output produced by the neuron in the
past $k$ time units are the same. Therefore, the neuron's membrane
potential must be identical as well. However, the transformation prescribes no
spike in one of the first time instants and a spike in the second, which is a
contradiction. It follows that no single neuron can effect the
prescribed transformation.

\begin{figure}
\begin{center}
  \includegraphics[width=8.4cm]{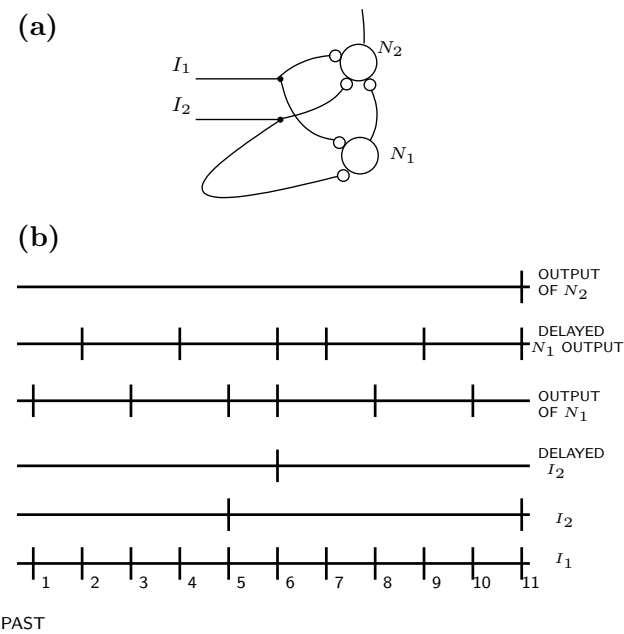}
\caption{(a) The network that can effect the transformation described in Figure \ref{fig:i}. (b)~~Figure describing the operation of this network.}
\label{fig:hj}
\end{center}
\end{figure}

We now construct a two-neuron network which can carry out the
prescribed transformation. The network is shown in Figure \ref{fig:hj}(a). $I_1$ and $I_2$ arrive instantaneously at $N_2$. $I_1$
arrives instantaneously at $N_1$ but $I_2$ arrives at $N_1$ after a
delay of $1$ time unit. Spikes output by $N_1$ take one time unit to
arrive at $N_2$, which is the output neuron of the network. The
functioning of this network for $i=2$ is described in Figure \ref{fig:hj}(b). The generalization for larger $i$ is
straightforward. All inputs are excitatory. $N_1$ is akin to the
neuron described in Figure \ref{fig:a}, in that
while the depolarization due to a spike in $I_1$ causes potential to
cross threshold, if, additionally, the previous output spike happened one time unit ago, the associated
hyperpolarization is sufficient to keep the membrane potential below
threshold now. However, if there is a spike from $I_2$ also at the
same time as from $I_1$, the depolarization is sufficient to cause an output spike,
irrespective of if there was an output spike one time unit ago. The
$\Upsilon$ corresponding to $N_2$ is shorter than $1$ time
unit. Further, $N_2$ produces a spike if and only if all three of its
afferent synapses receive spikes at the same time. In the figure,
$N_1$ spikes after times $1, 3, 5$. It spikes after $6$ because it received
spikes both from $I_1$ and $I_2$ at that time instant. Subsequently,
it spikes after $8$ and $10$. The only time wherein $N_2$ received spikes
at all three synapses at the same time is at $11$, after which is the
prescribed time for the output spike. The generalization for larger $i$ is
straightforward.

For larger $m$, to construct a transformation that cannot be done by a single neuron but can be, by a two-neuron network, one can just have the same input as $I_1$ or $I_2$ on the extra input spike
trains and the same proof generalizes easily. 
\end{proof}

The previous result might seem to suggest that the more the number of neurons (and connections between them)
the larger the variety of transformations possible. The next
complexity result demonstrates, on the contrary,  that the structure of the network
architecture is crucial. That is, we can construct network
architectures with arbitrarily large number of neurons which cannot
perform transformations that a two-neuron network with simple neurons can. 

First, we define the structural property that characterizes this class of architectures.

\begin{definition}[Path-plural Network]
A feedforward network of order $m$ is called {\em path-plural} if 
for every set of $m$
 paths, where the $i$th path starts at $i$th input vertex and ends at
 the output vertex, the intersection of the $m$ paths is exactly the
 output vertex.
\end{definition}

Every feedforward network in which all the inputs aren't afferent on every neuron, must have embedded within it a path-plural network. For this reason, path-plural networks are an important and ubiquitous class of feedforward networks. How large such networks are in the brain remains to be seen, and this will become clearer as we get more and more data from the connectomics efforts. But, it is conceivable that such networks exist in feedforward pathways that that converge onto networks that, for example, integrate information from multiple sensory modalities.

\noindent
We now state and prove the complexity result.

\begin{theorem}
 For $m\geq 3$, let $\Sigma_1$ be the set of all path-plural feedforward networks
 of order $m$. Let $\Sigma_2$ be the union of
 $\Sigma_1$ with the set of all two-neuron feedforward networks of order $m$. Then, $\Sigma_2$ is more complex than
 $\Sigma_1$.
\end{theorem}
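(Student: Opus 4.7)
The plan is to follow the three-step template of Theorem~\ref{thm:1vs2}: prescribe a transformation on a family of Flush-criterion inputs (invoking Lemma~\ref{thm:GFeq} so that we may restrict attention to $\mathcal{F}_m$), then show in turn that no path-plural network can effect it and that some two-neuron network can. The transformation I would use is a generalization of the one in Theorem~\ref{thm:1vs2}: designate $I_1$ as a clock of regularly-spaced spikes (scaled to index $i$), and use $I_2,\ldots,I_m$ to encode designated marker spikes placed so that the prescribed output consists of a single spike appearing only after a joint event spread across all $m$ streams (for example, a coincidence of the last clock spike with appropriately-placed marker spikes on $I_2,\ldots,I_m$). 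The $i$-th instance in the family will stretch the duration of the input arbitrarily long, so that no fixed $\Upsilon$ or $\rho$ in any candidate network can cover the whole pattern at once.

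For the positive direction, I would construct a two-neuron network whose intermediate neuron $N_1$ receives all $m$ inputs (with carefully chosen instantaneous or delayed synapses and AHP behavior patterned after $N_1$ of Theorem~\ref{thm:1vs2}), and whose output neuron $N_2$ receives $N_1$'s output together with the appropriate subset of the $I_j$, tuned to fire exactly when all of its afferents coincide. The essential feature of this construction is that $N_1$ is a ``concentrator'' with simultaneous access to all $m$ input streams — precisely the capability that the path-plural constraint forbids for any non-output neuron. Once $N_1$ is in hand, the two-neuron construction is essentially mechanical, mirroring the one in Figure~\ref{fig:hj}.

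The core of the argument, and the main obstacle, is the negative direction. My plan is a confusion argument leveraging the path-plural hypothesis: for every non-output neuron $N$ in a candidate network $\mathcal{N}$, there is at least one input $I_{j(N)}$ from which $N$ receives no information. The goal is to construct, for sufficiently large index $i$, two Flush inputs $\chi_A,\chi_B$ in the prescribed family whose prescribed outputs differ, yet for which every non-output neuron in $\mathcal{N}$ produces an identical spike train under $\chi_A$ and under $\chi_B$; then the output neuron $O$ receives identical afferents from its intermediate predecessors on the critical interval and, provided $\chi_A,\chi_B$ are also arranged to agree on any direct $I_j$-afferents of $O$ in the relevant $\Upsilon$-window, is forced into identical membrane-potential trajectories — contradicting the differing prescribed outputs.

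The principal difficulty is that different non-output neurons in $\mathcal{N}$ generally are missing different inputs, so no single modification of $\chi$ is simultaneously invisible to all of them. The way I would address this is to design the transformation so that the critical difference between $\chi_A$ and $\chi_B$ is a timing or counting distinction of the Theorem~\ref{thm:1vs2} type: for each individual intermediate neuron $N$, two different global moments look identical within $N$'s bounded $\Upsilon$- and $\rho$-windows because the one input that $N$ needs in order to tell them apart is exactly the input $I_{j(N)}$ from which it receives no information. Concretely, by taking $i$ large enough that $T_i$ dwarfs every $\Upsilon,\rho$ in $\mathcal{N}$, the clock-plus-markers structure on the full input should force each non-output $N$ into a periodic response that is agnostic to the placement of the critical event on the missing input, so that shifting that event produces $\chi_A,\chi_B$ that are indistinguishable to $N$; an induction on depth would then propagate the indistinguishability forward through the layers of $\mathcal{N}$ until it reaches $O$. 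This compatibility between the ``local'' indistinguishability induced by each missing input and the ``global'' mismatch of prescribed outputs is what I expect to be the hard technical step, and it is where the quantitative details of the transformation — spacings of the clock, placements of the markers on $I_2,\ldots,I_m$, and the scaling with $i$ — will need to be chosen with care.
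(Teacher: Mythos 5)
Your positive direction (a two-neuron network whose first neuron sees all $m$ inputs and whose output neuron coincidence-detects) matches the paper's construction and is fine in outline, but the negative direction has a genuine gap, and it sits exactly where you flag the ``hard technical step.'' The difficulty you name -- that different intermediate neurons are missing different inputs, so no single modification of the input is invisible to all of them -- is real, and your proposed resolution does not resolve it. Two problems. First, the $\chi_A$ versus $\chi_B$ confusion framing cannot be pushed through as stated: the adversary chooses the response functions, so any intermediate neuron that \emph{does} receive the modified stream can simply relay it (or, worse, latch onto the difference and carry it forward indefinitely, exactly as in the counterexample of Figure~\ref{fig:a}); its output then differs between $\chi_A$ and $\chi_B$, and nothing in your construction forces that difference to die out before it reaches the output neuron's $\Upsilon$-window. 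Second, your specific transformation -- a persistent clock on $I_1$ with markers on $I_2,\ldots,I_m$ -- makes this worse: any intermediate neuron receiving $I_1$ never experiences an input gap, hence is never ``reset,'' and the Gap Lemma gives you no handle on its memory. So the claimed ``periodic response agnostic to the placement of the critical event'' is not something you can prove for arbitrary admissible neurons.

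The paper's proof avoids both problems with a different input design and a different comparison. For each index $i$ the input is a \emph{single} spike-train ensemble built from $m$ consecutive blocks of length $2i$, the $j$-th block carrying spikes \emph{exclusively} on input $I_j$, with one extra coincident spike on $I_m$ at index $2im$; this whole pattern is then repeated once, and the output spike is prescribed only after index $4im$. Because the path-plural property forces every neuron afferent on the output neuron to miss at least one input stream, each such neuron is silent during the block of whichever stream it misses -- no matter which one that is -- and hence is flushed once per repetition. Consequently, for $i$ large relative to $\Upsilon$, $\rho$ and the depth, its output in the $\Upsilon$-window before index $2im$ equals its output in the window before $4im$, so the output neuron's membrane potential is identical at the two instants and must spike at both or neither, contradicting the prescription. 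Note that this is a two-\emph{time-point} argument within one input (as in Theorem~\ref{thm:1vs2}), not a two-input confusion argument, and the per-input exclusive blocks are the device that makes the reset work uniformly over intermediate neurons missing different inputs. That device, or some substitute that forces every intermediate neuron to reset regardless of which input it lacks, is the missing idea in your proposal.
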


\begin{proof}
We first prescribe a transformation, prove that it cannot be effected by any network in $\Sigma_1$ and then construct a two-neuron network and show that it can indeed effect the same transformation.

We prove the theorem for $m=3$; the generalization for larger $m$ is
straightforward.  The following transformation is prescribed for
$m=3$. Let the three input spike-trains in each input spike
train ensemble, which satisfies a Flush Criterion be $I_1$, $I_2$ and
$I_3$. As before, we will use regularly spaced spikes; we call the
distance between two such consecutive spikes one time unit and number
these spike time instants with the oldest being numbered 1; we call
this numbering the spike index. Again, the transformation is prescribed for a subset of ${\cal
  F}_{m}$, whose elements are indexed by $i=1, 2, \cdots$.  Figure \ref{fig:transf}
illustrates the transformation for $i=2$. The $i$th input spike-train ensemble in
the subset satisfies a $T$-Flush Criterion for $T
= 4im$ time units. The first $2i$ time units have spikes on $I_2$
spaced one time unit apart, the next $2i$ on $I_3$ and so forth. In
addition, at spike index $2im$, $I_m$ has a single spike. The input spike pattern
from the beginning is repeated once again for the latter $2im$ time units. The prescribed output spike-train has
exactly one spike after spike index $4im$.

\begin{figure}
\begin{center}
  \includegraphics[width=8.7cm]{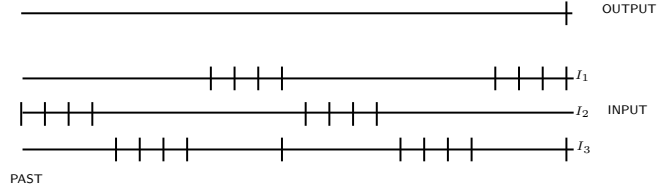}
\caption{A transformation that no feedforward network of order $3$ with a path-plural architecture can effect.}
\label{fig:transf}
\end{center}
\end{figure}

Next we prove that the transformation prescribed above cannot be
effected by any network in $\Sigma_1$. For the sake of contradiction,
assume that there exists a network ${\cal N} \in \Sigma_1$ that can effect the transformation. Let
$\Upsilon$ and $\rho$ be upper bounds on the same parameters over all
of the neurons in ${\cal N}$ and let $d$ be the depth of ${\cal
  N}$. By construction of $\Sigma_1$, every neuron in ${\cal N}$ that
is afferent on the output neuron receives input from at most $m-1$ of
the input spike-trains; for, otherwise there would exist a set of $m$
paths, one from each input vertex to the output neuron, whose
intersection would contain the neuron in question. The claim, now, is
that for $i> \frac{\Upsilon d}{2} + \rho$, the output neuron of ${\cal
  N}$ has the same membrane potential at spike index $2im$ and
$4im$, and therefore either has to spike at both those instants or
not. Intuitively, this is so because each neuron afferent on the
output neuron receives a ``flush'' at some point after $2im$, so that
the output produced by it $\Upsilon$ milliseconds before time index $2im$
and $\Upsilon$ milliseconds before time index $4im$ are the same. This is straightforward to verify.

\begin{figure}
\begin{center}
  \includegraphics[width=8.7cm]{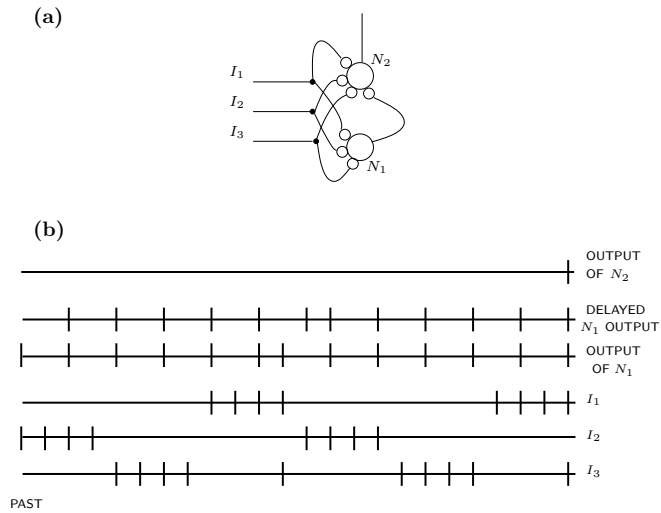}
\caption{(a) Network that can effect the transformation described in Figure \ref{fig:transf}. (b)~~Figure describing the operation of this network.}
\label{fig:lm}
\end{center}
\end{figure}

We now construct a two-neuron network that can effect this
transformation. The construction is similar to the one used in Theorem
\ref{thm:1vs2}. For $m=3$, the network is shown in Figure \ref{fig:lm}.  $I_1$, $I_2$ and $I_3$ arrive instantaneously at $N_1$ and
$N_2$. Spikes output by $N_1$ take two time units to arrive at $N_2$,
which is the output neuron of the network. The functioning of this
network for $i=2$ is described in Figure \ref{fig:lm}(b). The
generalization for larger $i$ is straightforward. All inputs are
excitatory. $N_1$ is akin to the the neuron $N_1$ used in the network in
Theorem~\ref{thm:1vs2} except that that periodic input may arrive from
any one of $I_1$, $I_2$ or $I_3$. As before, if two input spikes
arrive at the same time, as in spike index $2im$, the depolarization is
sufficient to cause an output spike in $N_1$, irrespective of if there was an
output spike one time unit ago. Again, the $\Upsilon$ corresponding to
$N_2$ is shorter than $1$ time unit and $N_2$ produces a spike if and
only if three of its afferent synapses receive spikes at the same
time instant. As before, the idea is that at time $2im$, $N_2$, receives two
spikes, but not a spike from $N_1$, since it is ``out of
sync''. However, at time $4im$, additionally, there is a spike from
$N_1$ arriving at $N_2$, which causes $N_2$ to spike.
\end{proof}

To conclude, what we have demonstrated in this section is that, for certain classes of networks, just by knowing the architecture of the network, we can rule out computations that the network could be doing. All we assumed was that the neurons in the network satisfy a small number of elementary properties; notably these results do not require knowledge of detailed physiological properties of the neurons in the network. This, in itself, is somewhat surprising due to the intuitively-appealing expectation that network structure may not impose as strong a constraint as neurophysiology insofar as the computational ability of a network is concerned. In the next section, however, we show that this intuition is sound in some cases by proving that there are limits to the constraints imposed by network structure in the presence of very limited information on the physiology.

\section{Limits to constraints imposed by network structure}\label{sec:depth2}

The main thrust of this work, thus far, has been in demonstrating that connectomic constraints do indeed restrict the computational ability of certain networks, even when we do not assume much about the physiological properties of their neurons. As one might expect, we should be able to get better mileage, so to speak, if we had more elaborate information on the response properties of the individual neurons. Conversely, it is logical to expect that there might be fundamental limits to what can be said about the computational properties of networks, given very limited knowledge of the neurophysiology of its neurons. In this section, we prove this to be the case. In particular, we show that the small set of assumptions made about our model neurons lead to the absence of connectomic constraints on computation for the class of feedforward networks of depth equal to two. More precisely, it turns out that there does not exist a transformation that cannot be performed by any network of depth two\footnote{equipped with instances of our model neurons} that in turn can be effected by another network (of a different architecture). What this result implies is that one {\em needs} to make further assumptions on the properties obeyed by the model neurons, before connectomic constraints on this class of networks appear.

So, how does one prove that there does not exist a transformation that cannot be performed by any network of depth two that in turn can be effected by another network? Equivalently, we need to prove that given an arbitrary feedforward network, there exists a feedforward network of depth two that effects {\em exactly} the same transformation.

The difficulty in proving that every feedforward network, having arbitrary
depth, has an equivalent network of depth two, appears to be in
devising a way of ``collapsing'' the depth of the former network,
while keeping the effected transformation the same. Our proof actually
does not demonstrate this head-on, but instead proves it to be the case
indirectly. The broad attack is the following: Consider the set of transformations spanned by the set of all feedforward networks. Recall that this is a proper subset of the set of all transformations, since we had shown a transformation that no feedforward network could effect. The idea is to start off with a certain ``nice'' subset of the set
of all transformations and show that every transformation effected by feedforward networks certainly lies within this subset. Thereafter, we prove, by providing
a construction, that every transformation in this ``nice''
subset can in fact be effected by a feedforward network of depth two\footnote{As a by-product, the proof also ends up providing a complete characterization of the set of transformations spanned by the set of all feedforward networks equipped with neurons of the present abstract model, which turns out to be exactly this ``nice'' set.}. Together, this implies that, for every transformation that can be effected by a feedforward network, there exists a feedforward network of depth two that can effect exactly that transformation. 

The interested reader is directed to Appendix C, which is a 24-minute video that provides an intuitive outline of the results in this section using animations.

\subsubsection*{Technical structure of the proof}
\noindent
The main theorem that we prove in this section is the following.\\

\begin{theorem}
If ~${\cal T}:{\cal F}_m\rightarrow{\cal S}$ can be effected by a feedforward network, then it can be effected by a feedforward network of depth two.
\end{theorem}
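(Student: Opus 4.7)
As the outline just before the statement suggests, the plan is not to ``collapse depth'' directly. Instead I will introduce a class $\mathfrak{C}$ of transformations ${\cal T}:{\cal F}_m\rightarrow{\cal S}$ characterised by a small number of structural properties, prove that every feedforward network induces a transformation in $\mathfrak{C}$, and then exhibit, for each ${\cal T}\in\mathfrak{C}$, a concrete depth-two network ${\cal N}'$ with ${\cal T}_{{\cal N}'}|_{{\cal F}_m}={\cal T}$. Composing these two inclusions gives the theorem.

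\noindent\textbf{The class $\mathfrak{C}$ and membership of ${\cal T}_{\cal N}$.} I would require three properties of ${\cal T}\in\mathfrak{C}$: \emph{causality}, that the restriction of ${\cal T}(\chi)$ to $[t,\infty)$ is determined by $\Xi_{[t,\infty)}(\chi)$; \emph{time-invariance}, that ${\cal T}(\sigma_s\chi)=\sigma_s({\cal T}\chi)$ for every shift $s$; and \emph{flush-resettability}, that if $\chi$ has a sufficiently long input-free interval just after time $t$, then ${\cal T}(\chi)$ has a matching quiet interval and its value on $[t,\infty)$ is a function only of $\Xi_{[t,\infty)}(\chi)$. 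The first two follow at once because the abstract neuron model is deterministic and has no absolute clock. For the third I would apply the Gap Lemma together with Corollary~\ref{corrgap} inductively along a topological ordering of the DAG: a sufficiently long flush at the network input propagates layer by layer, forcing a gap of at least $2\rho$ in every intermediate spike-train, after which behaviour decouples from the pre-flush input. Bundling the quantitative flush length into the definition shows ${\cal T}_{\cal N}|_{{\cal F}_m}\in\mathfrak{C}$ for every feedforward network ${\cal N}$.

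\noindent\textbf{Depth-two construction.} Given ${\cal T}\in\mathfrak{C}$, I build ${\cal N}'$ as follows. The first layer consists of $m$ relay neurons ${\mathsf R}_1,\ldots,{\mathsf R}_m$, where ${\mathsf R}_i$ receives only input $i$ and is designed so that, on any Flush input, it emits precisely the afferent spike-train (possibly with a common small delay); because the abstract model constrains $P$ only through three axioms, defining $P_{{\mathsf R}_i}$ to equal $\tau$ exactly at each afferent spike, subject to axioms 1--3, is routine. The output neuron ${\mathsf O}$ receives all $m$ relays; I choose $\Upsilon_{\mathsf O}$ larger than the flush length demanded by property~(iii) plus $\rho_{\mathsf O}$. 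Its potential function $P_{\mathsf O}(\chi;{\vec x_0})$ is then hard-wired from ${\cal T}$: for a window $\chi$ and self-history ${\vec x_0}$ consistent with the prescribed output, set $P_{\mathsf O}$ to equal $\tau$ at exactly the times where ${\cal T}$ prescribes a spike and to fall strictly below $\tau$ elsewhere; for inconsistent ${\vec x_0}$, collapse $P_{\mathsf O}$ to a large negative constant. Flush-resettability ensures that the rule is unambiguous, since any two global Flush inputs producing the same relay window must prescribe the same output on that window. A direct check over all $\chi\in{\cal F}_m$ then yields ${\cal T}_{{\cal N}'}|_{{\cal F}_m}={\cal T}$.

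\noindent\textbf{Main obstacle.} The substantive difficulty is arranging that $P_{\mathsf O}$ satisfy all neuron axioms simultaneously --- in particular axiom~2, the inequality $P(\chi;{\vec x_0})\le P(\chi;{\vec\phi})$, and axiom~1, absolute refractoriness --- while still producing exactly the prescribed spike pattern. I plan to handle this by first defining an auxiliary $Q(\chi)$ that encodes ${\cal T}$ under the assumption ${\vec x_0}={\vec\phi}$, and then setting $P_{\mathsf O}(\chi;{\vec x_0})=Q(\chi)-H({\vec x_0})$ for a non-negative hyperpolarising term $H$ that vanishes on ${\vec\phi}$ and is large enough, whenever ${\vec x_0}$ contains a spike within $\alpha$ of the present, to enforce axiom~1; axiom~2 then follows automatically. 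The remaining check is that this subtraction never suppresses a prescribed output spike, which reduces to showing that prescribed spikes of ${\cal T}$ are at least $\alpha$ apart --- a property ${\cal T}$ inherits from the fact that it is already effected by some network whose output neuron satisfies axiom~1. A secondary subtlety is accommodating Flush inputs of arbitrarily large $T$, and this is precisely where property~(iii) is indispensable: without it no finite $\Upsilon_{\mathsf O}$ could render $P_{\mathsf O}$ well-defined as a function of its arguments alone.
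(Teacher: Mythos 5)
Your first half (causality, time-invariance, resettability of ${\cal T}_{\cal N}$, proved by induction along the DAG using the Gap Lemma) matches the paper's Lemma and is fine. The gap is in the depth-two construction. Your first layer consists of relay neurons that merely reproduce the afferent spike-trains, and you justify the well-definedness of $P_{\mathsf O}$ by the claim that ``any two global Flush inputs producing the same relay window must prescribe the same output on that window.'' That claim is false: $W$-resettability only decouples the output from input lying beyond a gap of length $W$; between reset gaps a causal, time-invariant, resettable transformation can have \emph{unbounded} memory, so two Flush inputs can agree on the last $\Upsilon_{\mathsf O}$ milliseconds (with no $W$-gap inside that window), differ arbitrarily far in the past, and prescribe different outputs now. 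The transformation of Theorem~\ref{thm:1vs2} is a concrete counterexample: it is effected by a two-neuron network, hence lies in your class $\mathfrak{C}$, yet the prescribed output just after spike number $4i+3$ versus just after spike number $2i+1$ differs while the input (and empty output) windows of any fixed length agree for large $i$. Since your relays pass the input through unchanged, the window seen by ${\mathsf O}$ is the same at both moments, so no choice of $P_{\mathsf O}$ can realize the transformation --- this is exactly the paper's single-neuron impossibility argument, which your construction does not escape.

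What is missing is the paper's central device: the intermediate neuron ${\mathsf J}$ is not a relay but an \emph{encoder} whose output, through the real-valued inter-spike interval of successive ``clock''/``encoding'' spike pairs emitted every $p$ milliseconds, encodes (via interleaved decimal expansions) the entire input history since the last reset gap. This gives the output neuron a bounded-window sufficient statistic: whenever ${\cal T}$ prescribes different outputs at two instants, either the raw inputs differ within the window or the ${\mathsf J}$-spikes do (the ``tie-breaker'' property of Proposition~\ref{depth2prop2}), and only then can $P_{\mathsf O}$ be defined consistently (Proposition~\ref{depth2prop1}). Your subtractive trick $P_{\mathsf O}=Q(\chi)-H({\vec x_0})$ for Axiom~2 is a reasonable thought, but it presupposes that $Q$ is well defined from the window alone, which is precisely what fails without the encoding layer; note also that the paper's Axiom~2 argument must additionally rule out spurious spikes when the output window is emptied, which again leans on the tie-breaker condition rather than on refractoriness alone.
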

\addtocounter{theorem}{-1}

\noindent
This theorem follows from the following two lemmas which are proved in the two subsections that follow:

\begin{lemma}
If ~${\cal T}:{\cal F}_m\rightarrow{\cal S}$ can be effected by a feedforward network, then ${\cal T}(\cdot)$ is causal, time-invariant and resettable.
\end{lemma}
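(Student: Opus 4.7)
The plan is to verify the three properties in turn, leaning on the structural form of the neuron's potential function $P(\cdot)$ for causality and time-invariance, and on the Gap Lemma machinery for resettability. I expect causality and time-invariance to follow almost directly from the backward-looking, time-autonomous form of $P(\cdot)$; the main obstacle will be in resettability, where one has to propagate a gap forward through the depth of the network.

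For \emph{causality}, I would induct on the depth $d$ of the feedforward network. At depth $1$, a single neuron's potential $P(\chi;\vec{x}_0)$ only inspects input spikes in the past $\Upsilon$ ms and its own past output spikes within $\rho$ ms, so the consistent output spike-train's value at any time instant $t$ is determined solely by information strictly older than $t$. The inductive step amounts to noting that each intermediate neuron at layer $d{+}1$ receives as input either original network inputs or the outputs of lower-depth neurons, both of which (by the hypothesis) are causal functions of the network inputs; feeding this into the intermediate neuron's own backward-looking $P(\cdot)$ preserves causality. Applying this to the output neuron yields causality of ${\cal T}(\cdot)$.

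For \emph{time-invariance}, I would exploit the fact that $P(\cdot)$ is an autonomous function of relative spike-timings alone -- there is no explicit dependence on absolute time anywhere in the neuron's definition, and the consistency condition is phrased entirely in terms of the shifted truncations $\Xi_{(0,\Upsilon)}(\sigma_t(\chi))$ and $\Xi_{(0,\rho)}(\sigma_t(\vec{x}_0))$. It follows immediately that if $\vec{x}_0$ is consistent with $\chi$ at a single neuron, then $\sigma_\tau(\vec{x}_0)$ is consistent with $\sigma_\tau(\chi)$; an easy induction on depth lifts this from single neurons to feedforward networks. The only piece of bookkeeping is to check that $\sigma_\tau(\chi)$ itself still lies in ${\cal F}_m$ for the range of $\tau$ relevant to the definition of time-invariance, which amounts to enlarging the Flush parameter $T$ as needed.

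For \emph{resettability}, the plan is to invoke Corollary~1 together with the feedforward scaling already codified in the $T$-Gap criterion for networks. Given two inputs in ${\cal F}_m$ that agree on all spikes strictly more recent than some time instant $t$ and each exhibit a sufficiently long empty interval just before $t$ (``sufficiently long'' meaning long enough to induce a $2\rho$ output gap on every neuron once scaled by the depth), I would propagate the empty interval forward layer by layer: Lemma~4 guarantees that each neuron inherits an output gap of at least $2\rho$ inside the relevant interval, and Corollary~1(3) then forces its membrane potential strictly after $t$ to depend only on its afferents strictly after $t$. Iterating all the way to the output neuron shows that the two network outputs coincide after $t$, which is precisely the resettability condition. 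The delicate point is choosing the correct length of the preceding gap so that the scaling by network depth -- and by the integration parameters $\Upsilon$ and $\rho$ of the various neurons -- yields the desired $2\rho$ gap at every neuron; this is essentially the calculation already packaged into the definition of the feedforward Gap Criterion, so it just needs to be invoked cleanly.
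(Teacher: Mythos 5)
Your proposal follows essentially the same route as the paper: causality and time-invariance by induction on depth (with the per-neuron argument being an induction on spike number anchored by the flush), and resettability by propagating the input gap forward through the layers so every neuron acquires a $2\rho$ output gap, then applying the Gap Lemma corollary backward from the output neuron. The only cosmetic differences are that the paper does the forward propagation directly from Axioms 2--3 rather than via Lemma~4, and that Corollary~1(3) makes the potential after $t$ depend on afferent spikes up to $\Upsilon+\rho$ (cumulatively $d(\Upsilon+\rho)$) \emph{older} than $t$ rather than ``strictly after $t$''---exactly the delicate scaling you flag, which the paper settles by taking $W=d(\Upsilon+\rho)+\rho$ so that this window is covered by the input gap.
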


\begin{lemma}
If ~${\cal T}:{\cal F}_m\rightarrow{\cal S}$ is causal, time-invariant and resettable, then it can be effected by a feedforward network of depth two.
\end{lemma}

\addtocounter{lemma}{-2}

\subsection{Causal, Time-Invariant and Resettable Transformations}

In this section, we first define notions of causal, time-invariant and resettable transformations\footnote{Recall that when we say transformation, without further qualification, we mean one, of the form ${\cal T}:{\cal F}_m\rightarrow{\cal S}$.}. Transformations that are causal, time-invariant and resettable form a strict subset of the set of all transformations. We then show that transformations effected by feedforward networks always lie within this subset. This is the relatively easy part of the proof. The next subsection proves the harder part, namely that every transformation in this subset can indeed be effected by a feedforward network of depth equal to two.

Informally, a {\em causal transformation} is one whose
current output depends only on its past input (and not
current or future input). Abstractly, it is convenient to define a causal transformation as one that, given two different inputs that are identical until a certain point in time, also have their outputs, according to the transformation, be identical up to (at least) the same point.

\begin{definition}[Causal Transformation]
A transformation ${\cal T}:{\cal F}_m\rightarrow{\cal S}$ is said to be {\em
  causal} if, for every $\chi_1, \chi_2 \in {\cal F}_m$, with $\Xi_{(t,
    \infty)}\chi_1 =\Xi_{(t, \infty)}\chi_2$, for some $t\in \mathbb{R}$,
    we have $\Xi_{[t, \infty)} {\cal T}(\chi_1) =\Xi_{[t, \infty)}{\cal T}(\chi_2)$.
\end{definition}

As in signals and systems theory, a {\em time-invariant transformation} is one which always transforms the time-shifted version of an input, to a time-shifted version of its corresponding output. To keep the definition sound, we also need to ensure that the time-shifted input, in fact, also satisfies the Flush criterion.

\begin{definition}[Time-Invariant Transformation]
A transformation ${\cal T}:{\cal F}_m\rightarrow{\cal S}$ is said to be {\em time-invariant} if, for every $\chi \in {\cal F}_m$ and every $t\in \mathbb{R}$ with $\sigma_t(\chi) \in {\cal F}_m$, we have ${\cal T}(\sigma_t(\chi))=\sigma_t({\cal T}(\chi))$.
\end{definition}

A {\em resettable transformation} is one for which there
exists a positive real number $W$, so that an input gap of the form $(t, t+W)$
``resets'' it, i.e. output beyond $t$ is independent of input
received before it. Again, abstractly, it becomes convenient to say that the output in this case is identical to that produced by an input which has no spikes before $t$, but is identical to the present input thereafter.

\begin{definition}[$W$-Resettable Transformation]
For $W \in \mathbb{R}^+$, a transformation ${\cal T}:{\cal
  F}_m\rightarrow{\cal S}$ is said to be {\em $W$-resettable} if, for
every $\chi \in {\cal F}_m$ which has a gap in the interval $(t,
  t+W)$, for some $t\in \mathbb{R}$, we have $\Xi_{(-\infty, t]} {\cal T}(\chi) = {\cal T}(\Xi_{(-\infty, t]} \chi)$.
\end{definition}

\begin{definition}[Resettable Transformation]
A transformation ${\cal T}:{\cal F}_m\rightarrow{\cal S}$ is said to be {\em resettable} if, there exists a $W \in \mathbb{R}^+$, so that it is $W$-resettable.
\end{definition}

Next, we prove that every transformation that can be effected by a feedforward network is causal, time-invariant and resettable, in the context of our neuron model and its assumptions.

\begin{lemma}\label{lemma1}
If ~${\cal T}:{\cal F}_m\rightarrow{\cal S}$ can be effected by a feedforward network, then ${\cal T}(\cdot)$ is causal, time-invariant and resettable.
\end{lemma}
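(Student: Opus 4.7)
The plan is to establish the three properties—causality, time-invariance, and resettability—independently, each by induction on the depth of the network, leveraging the axioms of the neuron model together with the Gap Lemma and Corollary~\ref{corrgap}. The common thread is that, for every neuron, the membrane potential at an instant depends only on a bounded window of inputs and past outputs.

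For causality, suppose $\chi_1, \chi_2 \in {\cal F}_m$ agree on $(t, \infty)$. I would show by induction on depth that every neuron's output agrees on $[t, \infty)$ under both inputs. Let $N$ be a depth-$k$ neuron and let $D$ be the set of times in $[t, \infty)$ at which $N$'s consistent outputs differ. Since each spike-train has spikes at least $\alpha$ apart, $D$ is closed and discrete; moreover, $\chi_1, \chi_2$ being Flush forces both outputs to vanish far enough in the past, so $D$ is bounded above and, if nonempty, has a maximum element $s$. At $s$, the quantities driving $P$, namely the external inputs in $(s, s+\Upsilon)$, the depth-$<k$ inputs in $(s, s+\Upsilon)$, and $N$'s past outputs in $(s, s+\rho)$, all lie in $(t, \infty)$, and they agree by the outer induction, the input hypothesis, and the maximality of $s$ respectively; hence the potentials and thus the spike decisions at $s$ coincide, contradicting $s \in D$. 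Therefore $D = \emptyset$, and in particular the output neuron's output agrees on $[t, \infty)$.

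For time-invariance, fix $\chi \in {\cal F}_m$ with $\sigma_t(\chi) \in {\cal F}_m$, and let $\vec{y} = {\cal T}(\chi)$. Using the identity $\sigma_{t_0}(\sigma_t(\cdot)) = \sigma_{t_0+t}(\cdot)$, I would verify directly, at each neuron, that the consistency check for $\sigma_t(\vec{y})$ with respect to $\sigma_t(\chi)$ at time $t_0$ reduces to the (true) consistency check for $\vec{y}$ with respect to $\chi$ at time $t_0+t$. Applying this observation layer by layer through the feedforward network and invoking uniqueness of the consistent output for Flush inputs, we conclude ${\cal T}(\sigma_t(\chi)) = \sigma_t(\vec{y}) = \sigma_t({\cal T}(\chi))$.

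For resettability, let $d$ be the network depth and let $\Upsilon^*, \rho^*$ be the largest values of $\Upsilon$ and $\rho$ across the neurons in the network. Set $W := d\Upsilon^* + 2\rho^*$. Given $\chi \in {\cal F}_m$ with a gap in $(t, t+W)$, define $\chi' := \Xi_{(-\infty, t]} \chi$. I would prove by induction on $k$ that for every depth-$k$ neuron $N$: (i) $N$'s output under each of $\chi, \chi'$ has no spikes in $[t, t + W - k\Upsilon^*]$; and (ii) these two outputs agree on $(-\infty, t]$. Part~(i) uses Axiom~2 on an empty afferent window: by the IH, at any $t_0 \in [t, t + W - k\Upsilon^*]$ every spike-train afferent on $N$ (external or from depth-$<k$) is empty on $(t_0, t_0+\Upsilon^*)$, so the potential is bounded by $P(\langle \vec{\phi}, \ldots, \vec{\phi}\rangle; \vec{\phi}) = 0 < \tau$. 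Part~(ii) invokes Corollary~\ref{corrgap}(3) using the $2\rho^*$ gap starting at $t$ from part~(i): $N$'s membrane potential at any $t' \leq t$ is a function of $N$'s inputs in $(t', t + \Upsilon + \rho)$; those inputs agree on $(t', t]$ by the IH and the hypothesis on $\chi'$, and are both empty on $(t, t+\Upsilon + \rho)$ by part~(i), so the potentials and hence the spike decisions coincide at every $t' \leq t$. Applying the claim at the output neuron gives $\Xi_{(-\infty, t]} {\cal T}(\chi) = {\cal T}(\chi') = {\cal T}(\Xi_{(-\infty, t]}\chi)$, which is the desired $W$-resettability. The main obstacle is this resettability argument: it requires simultaneous tracking of a shrinking ``no-spike window'' and a propagating ``agreement region'' through the depth of the network, with a uniform choice of $W$ that survives the worst layer; a secondary subtlety is that Corollary~\ref{corrgap}(3)'s ``is a function of'' clause must be read as a deterministic map (depending only on the fixed neuron parameters), so that the same function can be legitimately applied to the afferent inputs produced by two distinct network drives $\chi$ and $\chi'$.
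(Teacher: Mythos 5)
Your proposal is correct, and its skeleton (treat the three properties separately, induct on the depth of the network, and lean on the neuron axioms plus the Gap Lemma machinery) matches the paper's proof; the differences are in the internal mechanics. For causality and time-invariance the paper runs an explicit induction on spike number (from the oldest spike after the flush forward) inside the depth induction, whereas you use an oldest-disagreement-time argument and, for time-invariance, a direct check that the time-shifted output is consistent with the time-shifted input followed by an appeal to uniqueness of the consistent output for Flush inputs (which is legitimate, since Flush implies the Gap criterion and hence uniqueness at every neuron); these are interchangeable and yours is arguably a bit cleaner. The real divergence is resettability: the paper first propagates output gaps upward layer by layer (getting a gap of length $(d-i)(\Upsilon+\rho)+(i+1)\rho$ at depth $i$) and then unrolls dependency windows downward through the network via the corollary to the Gap Lemma, arriving at $W=d(\Upsilon+\rho)+\rho$; you instead run a single bottom-up induction that simultaneously tracks a no-spike window shrinking by $\Upsilon^*$ per layer and an agreement region on $(-\infty,t]$, which yields the (slightly smaller, equally acceptable) constant $W=d\Upsilon^*+2\rho^*$ and avoids the separate top-down pass. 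Your flagged subtlety---that part (\ref{c13}) of Corollary~\ref{corrgap} must be read as a deterministic dependence so the same map can be applied under the two drives $\chi$ and $\Xi_{(-\infty,t]}\chi$---is exactly how the paper itself uses the corollary in its resettability argument (and is justified by combining the Gap Lemma with the forward induction of part (\ref{c12}) across two inputs agreeing on the relevant window), so it is not a gap; just make sure your write-up makes that cross-input induction explicit, since the potential at $t'\le t$ depends on the neuron's own output in $(t',t'+\rho)$, whose agreement is being established at the same time.
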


\begin{proof}[Proof sketch]
If ~${\cal T}:{\cal F}_m\rightarrow{\cal S}$ can be effected by a single neuron it is relatively straightforward to verify that ${\cal T}(\cdot)$ is causal, time-invariant and resettable. That it is causal and time-invariant follows from the fact that the $P(\cdot)$ function of the neuron only ``looks'' at the recent past and not the present or the future to determine membrane potential. That ${\cal T}(\cdot)$  is resettable follows from Axiom (3) of the neuron and the Gap Lemma. For a feedforward network, the proof proceeds by mathematical induction on the depth of the network. A full proof is provided in Appendix B.
\end{proof}

\subsection{Construction of a depth two feedforward network for every causal, time-invariant and resettable transformation}

In this subsection, we prove the following lemma.

\begin{lemma}
If ~${\cal T}:{\cal F}_m\rightarrow{\cal S}$ is causal, time-invariant and resettable, then it can be effected by a feedforward network of depth two.
\end{lemma}
\addtocounter{lemma}{-1}

\noindent
Before diving into the proofs, we offer some intuition.

Suppose we had a transformation ${\cal T}:{\cal F}_m\rightarrow{\cal S}$ which is causal, time-invariant and resettable. For the moment, pretend it satisfies the following property: There exist constant-sized input and output ``windows'' so that, for every input spike-train ensemble satisfying a flush criterion, just given knowledge of spikes in those windows of past input and output, one can unambiguously determine, at any point in time, if the transformation prescribes an output spike or not. Intuitively, it seems reasonable that such a transformation can be effected by a single neuron\footnote{Strictly speaking, it turns out that this is not true; axiom 2 may be violated.} by setting the $\Upsilon$ and $\rho$ of the neuron to the sizes of the input and output windows mentioned above.

Of course, one easily sees that not every transformation that is causal, time-invariant and resettable satisfies the aforementioned property. That is, there could exist two different input instances, whose past inputs and outputs are identical in the aforementioned windows at some points in time; yet in one instance, the transformation prescribes an output spike, whereas it prescribes none in the other. Indeed, the two input instances must differ at some point in the past, for otherwise the transformation would not be causal. Therefore, in such a situation, it is natural to ask if a single ``intermediate'' neuron can ``break the tie''. That is, if two input instances differ at some point in the past, the output of the intermediate neuron since then, in any interval of time of length $U$, must be different in either case, where $U$ is a fixed constant. This is so that a neuron receiving input from the intermediate neuron can {\em disambiguate} the two inputs, were an output spike demanded for one input but not the other. Unfortunately, this exact property cannot be achieved by any single ``tie-breaker'' neuron because every transformation induced by a neuron is resettable. In other words, the problem is that, suppose two input instances differ at a certain point in time; however, since then, both have had an arbitrarily large input gap. The input gap serves to ``erase memory'' in any network that received it and therefore it cannot disambiguate two inputs beyond this gap. Now, fortunately, it does not have to, since this gap also causes a ``reset'' in the transformation (which is resettable). That is, if such an arbitrarily large gap were present in the input, the transformation would not afterward demand an output spike in one case and no output spike in another. This is because it is $W$-resettable and therefore cannot make such demands, for input gaps\footnote{which we call a ``reset gap'' from now on, for the sake of exposition.} larger than $W$. Thus, we can make do with a slightly weaker condition; that the intermediate neuron is only guaranteed to break the tie, when it is required to do so. That is, suppose there are two input instances, whose outputs according to ${\cal T}:{\cal F}_m\rightarrow{\cal S}$ are different at certain points in time. Then, the corresponding inputs are different too at some point in the past with no reset gaps in the intervening time and therefore the intermediate neuron ought to break the tie. Additionally, for technical reasons that will become clear later, we stipulate that the outputs of the intermediate neuron in the preceding $U$ milliseconds  are guaranteed to be different, only if the inputs themselves in the past $U$ milliseconds are not different.

\begin{figure}
\begin{center}
  \includegraphics[width=2.5cm,height=3.2cm]{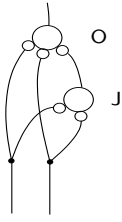}
\caption{The network architecture for (order two) feedforward networks of depth two equipped with model neurons described in Section \ref{sec:Model} that can effect any causal, time-invariant and resettable transformation.}
\label{fig:n}
\end{center}
\end{figure}

The network we have in mind is illustrated in Figure \ref{fig:n}, for $m=2$. In the following proposition, we prove that if the intermediate neuron satisfies the ``tie-breaker'' condition alluded to above, then there exists an output neuron, so that the network effects the transformation in question. Thereafter, in the subsequent proposition, we provide a construction for the intermediate neuron that satisfies this condition. By way of notation, recall that $\Xi_0(\cdot)$ is shorthand for $\Xi_{[0,0]}(\cdot)$

\begin{proposition}\label{depth2prop1}
Let ~${\cal T}:{\cal F}_m\rightarrow{\cal S}$ be causal, time-invariant and resettable. Let $\mathsf{J}$ be a neuron with ${\cal T}_{\mathsf{J}}:{\cal F}_m\rightarrow{\cal S}$, so that for each $\chi \in {\cal F}_m$, ${\cal T}_{\mathsf{J}}(\chi)$ is consistent with $\chi$ with respect to $\mathsf{J}$. Further, suppose there exists a $U \in \mathbb{R}^+$ so that for all $t_1, t_2 \in \mathbb{R}$ and $\chi_1, \chi_2 \in {\cal F}_m$ with $\Xi_0 \sigma_{t_1}( {\cal T}(\chi_1))\neq \Xi_0 \sigma_{t_2}( {\cal T}(\chi_2))$, we have $\Xi_{(0,U)} (\sigma_{t_1}({\cal T}_{\mathsf{J}}(\chi_1)\sqcup \chi_1))\neq \Xi_{(0,U)} (\sigma_{t_2}({\cal T}_{\mathsf{J}}(\chi_2)\sqcup \chi_2))$.

Then, there exists a neuron $\mathsf{O}$, so that for every $\chi \in {\cal F}_m$, ${\cal T}(\chi)$ is consistent with ${\cal T}_{\mathsf{J}}(\chi) \sqcup \chi$ with respect to $\mathsf{O}$.

\end{proposition}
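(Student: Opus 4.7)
My plan is to construct the neuron $\mathsf{O}$ explicitly by defining its membrane-potential function as a direct look-up on its past-$U$ input window, combined with a refractory veto. First, I observe that the tie-breaker hypothesis is monotone in $U$: if it holds for some $U_0$, it holds for every $U\geq U_0$, since equality of windows on $(0,U)$ restricts to equality on $(0,U_0)$. Since ${\cal T}$ is resettable, I let $W$ denote its resettability constant and replace $U$ by $\max(U,W)$, so without loss of generality $U\geq W$. I then define a Boolean indicator $f:{\bar{\cal S}_{(0,U)}}^{m+1}\to\{0,1\}$ by setting $f(w)=1$ iff there exist $\chi\in{\cal F}_m$ and $t\in\mathbb{R}$ with $\Xi_{(0,U)}(\sigma_t({\cal T}_{\mathsf{J}}(\chi)\sqcup\chi))=w$ and ${\cal T}(\chi)$ having a spike at $t$. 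The contrapositive of the tie-breaker hypothesis immediately makes $f$ unambiguous: identical windows cannot simultaneously witness a spike and a non-spike at the observation time.

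The neuron $\mathsf{O}$ then takes $\Upsilon_{\mathsf{O}}=U$, $\rho_{\mathsf{O}}=\alpha$, with $m+1$ input channels, a positive threshold $\tau_{\mathsf{O}}$, and some $\lambda_{\mathsf{O}}<0$, with membrane potential
\[
P_{\mathsf{O}}(w;\vec{x_0})=\begin{cases}\lambda_{\mathsf{O}}&\text{if }\vec{x_0}\neq\vec{\phi},\\ \tau_{\mathsf{O}}&\text{if }\vec{x_0}=\vec{\phi}\text{ and }f(w)=1,\\ 0&\text{if }\vec{x_0}=\vec{\phi}\text{ and }f(w)=0.\end{cases}
\]
Axiom~1 holds since $\rho_{\mathsf{O}}=\alpha$ forces any non-empty $\vec{x_0}$ to represent a refractory event, driving $P_{\mathsf{O}}$ to $\lambda_{\mathsf{O}}<\tau_{\mathsf{O}}$. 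Axiom~2 is immediate because $\lambda_{\mathsf{O}}<0\leq P_{\mathsf{O}}(w;\vec{\phi})$ for every $w$.

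The main obstacle I anticipate is Axiom~3, which demands $f(\vec{\phi}^{m+1})=0$. Suppose instead $f(\vec{\phi}^{m+1})=1$: then some $\chi\in{\cal F}_m$ has no spikes in $(t,t+U)$ yet ${\cal T}(\chi)$ spikes at $t$. Since $U\geq W$, resettability gives $\Xi_{(-\infty,t]}{\cal T}(\chi)={\cal T}(\Xi_{(-\infty,t]}\chi)$. The restricted input $\Xi_{(-\infty,t]}\chi$ agrees with $\vec{\phi}^m$ on $(t,\infty)$, so causality yields $\Xi_{[t,\infty)}{\cal T}(\Xi_{(-\infty,t]}\chi)=\Xi_{[t,\infty)}{\cal T}(\vec{\phi}^m)$. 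Finally, time-invariance forces ${\cal T}(\vec{\phi}^m)=\vec{\phi}$, since $\sigma_s\vec{\phi}^m=\vec{\phi}^m$ for every $s$ makes ${\cal T}(\vec{\phi}^m)$ shift-invariant and an $\alpha$-separated spike-train can only be empty under that constraint. Chaining these equalities contradicts ${\cal T}(\chi)$ having a spike at $t$.

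To finish, I verify consistency of ${\cal T}(\chi)$ with ${\cal T}_{\mathsf{J}}(\chi)\sqcup\chi$ with respect to $\mathsf{O}$ for every $\chi\in{\cal F}_m$ and $t\in\mathbb{R}$. Write $w_t=\Xi_{(0,U)}(\sigma_t({\cal T}_{\mathsf{J}}(\chi)\sqcup\chi))$ and $\vec{y}_t=\Xi_{(0,\alpha)}(\sigma_t{\cal T}(\chi))$. If ${\cal T}(\chi)$ spikes at $t$, the absolute refractory period of the output forces $\vec{y}_t=\vec{\phi}$ and $f(w_t)=1$ by definition, so $P_{\mathsf{O}}=\tau_{\mathsf{O}}$. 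If ${\cal T}(\chi)$ does not spike at $t$, then either $\vec{y}_t\neq\vec{\phi}$ and $P_{\mathsf{O}}=\lambda_{\mathsf{O}}<\tau_{\mathsf{O}}$, or $\vec{y}_t=\vec{\phi}$, in which case any hypothetical spike-giving witness to $f(w_t)=1$ would share the window $w_t$ while disagreeing on current spike-presence, violating the tie-breaker; hence $f(w_t)=0$ and $P_{\mathsf{O}}=0<\tau_{\mathsf{O}}$. Throughout, the tie-breaker is the hinge that turns a pointwise behavioural requirement on ${\cal T}$ into a well-defined local rule for $\mathsf{O}$, while resettability, causality and time-invariance are precisely what is needed to discharge the rest-potential axiom that the local rule is blind to.
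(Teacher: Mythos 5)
Your proof is correct and takes essentially the same route as the paper's: an explicit look-up construction of $\mathsf{O}$ keyed on the joint input window of length $\max(U,W)$, with the tie-breaker hypothesis ruling out spurious threshold crossings and with causality, time-invariance and resettability (your inline re-derivation of the paper's auxiliary facts that ${\cal T}({\vec \phi}^m)={\vec \phi}$ and that a $W$-gap forces no output spike) discharging Axiom (3). Your choice $\rho_{\mathsf{O}}=\alpha$ with the three-valued potential is only a mild simplification of the paper's construction, which allows arbitrary $\rho_{\mathsf{O}}$ and therefore must additionally set the empty-output-window copies of witnessed configurations to threshold to satisfy Axiom (2); the substance of the argument is identical.
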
 

\begin{proof}[Proof sketch]
The straightforward way for the neuron $\mathsf{O}$ to effect ${\cal T}(\cdot)$ is to determine the points of time wherein an output spike is prescribed and set its membrane potential function to hit threshold at those instances. Since the neuron $\mathsf{J}$ essentially ``disambiguates'' the input, this assignment can be done without conflict. However, we also need to show that doing this does not violate any of the three axioms of our abstract model, for the neuron $\mathsf{O}$.  Axiom (1) follows easily from the fact that the co-domain of ${\cal T}(\cdot)$ is ${\cal S}$. Axiom (3) takes some work to show and uses the fact that ${\cal T}(\cdot)$ is causal, time-invariant and resettable. Axiom (2), on the other hand, presents some subtleties. Now, in addition to setting membrane potential to threshold at the aforementioned points, in order to satisfy Axiom (2), we would also need to set it to hit threshold, when the input window has the same pattern and the output window is empty instead. However, with this assignment, we need to then show that no spurious spikes are generated. This takes a little work and again uses the ``tie-breaker'' condition of the intermediate neuron $\mathsf{J}$. The full proof is available in Appendix B.
\end{proof}

\noindent

The next proposition shows that one can always construct an intermediate neuron that satisfies the said ``tie-breaker'' condition.

\begin{proposition}\label{depth2prop2}
Let ${\cal T}:{\cal F}_m\rightarrow{\cal S}$ be causal, time-invariant and resettable. Then there exists a neuron $\mathsf{J}$ and $U \in \mathbb{R}^+$ so that for all $t_1, t_2 \in \mathbb{R}$ and $\chi_1, \chi_2 \in {\cal F}_m$ with $ \Xi_{0} \sigma_{t_1}({\cal T}(\chi_1))\neq \Xi_{0} \sigma_{t_2}({\cal T}(\chi_2))$, we have $\Xi_{(0,U)} (\sigma_{t_1}({\cal T}_{\mathsf{J}}(\chi_1)\sqcup \chi_1))\neq \Xi_{(0,U)} (\sigma_{t_2}({\cal T}_{\mathsf{J}}(\chi_2)\sqcup \chi_2))$, where ${\cal T}_{\mathsf{J}}:{\cal F}_m\rightarrow{\cal S}$ is such that for each $\chi \in {\cal F}_m$, ${\cal T}_{\mathsf{J}}(\chi)$ is consistent with $\chi$ with respect to $\mathsf{J}$. 

\end{proposition}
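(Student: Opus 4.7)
The plan is to construct a single neuron $\mathsf{J}$ whose spike output, together with the raw input, acts as a \emph{state tag} for the computation of $\mathcal{T}$: the $U$-window of ${\cal T}_{\mathsf{J}}(\chi) \sqcup \chi$ will carry enough information to pin down whether $\mathcal{T}(\chi)$ spikes at the reference time. As a first step I will use the time-invariance of $\mathcal{T}$ (hypothesized) together with the inherent time-invariance of any neuron's response to reduce the condition to the case $t_1 = t_2 = 0$. The mild technicality that a shift may push an input out of $\mathcal{F}_m$ is handled by truncating to a Flush-compatible prefix which agrees on the relevant window; causality then ensures the restricted problem captures the original.

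Fix $W$ such that $\mathcal{T}$ is $W$-resettable. I will choose $U = W + \Upsilon_{\mathsf{J}} + \rho_{\mathsf{J}}$, where $\Upsilon_{\mathsf{J}}$ and $\rho_{\mathsf{J}}$ are the timescales selected for $\mathsf{J}$, and take $\mathsf{J}$ to have $m$ afferent synapses receiving the same inputs as the overall network. The plan is to define $P_{\mathsf{J}}$ piecewise so that ${\cal T}_{\mathsf{J}}(\chi)$ contains two kinds of spikes: \emph{echo} spikes that mirror input activity, together with \emph{state} spikes whose precise sub-refractory offsets encode a real-valued summary of the input history since the most recent input gap of length $W$. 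Because spike times are continuous, a single state spike can in principle carry an arbitrary real-valued summary, giving the recent output of $\mathsf{J}$ enough bandwidth to distinguish every pair of histories that $\mathcal{T}$ itself distinguishes. Note that Lemma~3 ensures ${\cal T}_{\mathsf{J}}(\chi)$ is a well-defined unique consistent output for every $\chi \in \mathcal{F}_m$, so the target map ${\cal T}_{\mathsf{J}}$ is in fact a transformation.

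The disambiguation is then verified by contrapositive. Suppose the $U$-windows of ${\cal T}_{\mathsf{J}}(\chi_i) \sqcup \chi_i$ agree at $t_1$ and $t_2$. Then either (i) some input gap of length $W$ lies inside one of the windows, in which case $W$-resettability of $\mathcal{T}$ together with causality and time-invariance forces $\Xi_{0} \sigma_{t_1}({\cal T}(\chi_1)) = \Xi_{0} \sigma_{t_2}({\cal T}(\chi_2))$ directly from the matching pre-gap portion of the inputs; or (ii) no such gap exists inside either window, and then the state tags recorded inside the two $U$-windows, by the encoding construction, correspond to the same cumulative ${\cal T}$-state, so $\mathcal{T}$ again prescribes the same decision at the reference time.

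The hard part will be specifying $P_{\mathsf{J}}$ explicitly and verifying the three axioms of the abstract neuron model, particularly Axiom~2, which forbids recent own-output from raising the membrane potential. I plan to arrange each state spike so that its emission is driven by a specific excitatory input pattern rather than by feedback from own-spikes, and to set up the level set $\{P_{\mathsf{J}} = \tau\}$ so that the only instants it is reached are precisely the offsets intended to encode the current state value; this keeps Axiom~2 consistent with the encoding. Axiom~3 is easily satisfied by declaring $P_{\mathsf{J}}$ to be at rest when no input or own-spikes are present, and Axiom~1 holds by construction once refractory constants are fixed small enough. Pinning down the interaction between the continuous spike-time encoding and Axiom~2's monotonicity — in particular ensuring that introducing an own-spike never spuriously creates a new threshold crossing — is where I expect the bulk of the technical work to lie.
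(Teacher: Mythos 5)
You have identified the right strategy — it is in fact the paper's strategy: let $\mathsf{J}$ emit spikes whose real-valued timing encodes the input history since the last $W$-gap, then argue disambiguation by contrapositive, splitting on whether a reset gap appears in the window. But as written, the proposal defers exactly the content that constitutes the proposition. The central difficulty is that $\mathsf{J}$ must itself be an abstract-model neuron: its potential may depend only on the last $\Upsilon_{\mathsf{J}}$ milliseconds of input and the last $\rho_{\mathsf{J}}$ milliseconds of its own output, while the history since the last reset gap is unboundedly long. So the ``real-valued summary'' cannot be read off the input directly; it must be maintained recursively — the state has to be re-emitted periodically, close enough in time that the previous emission is still visible within $\rho_{\mathsf{J}}$, and the update (old summary, last $p$ milliseconds of input) $\mapsto$ (new summary) must be injective on histories. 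The paper's proof consists essentially of making this work: clock spikes every $p=8\alpha$ milliseconds followed by encoding spikes at offsets in $[q,q+r)$, an explicit encoding function $\varepsilon_0$ built by digit interleaving (per-period spike counts with the digit $9$ as terminator, spike positions interleaved digit-by-digit, then interleaving across the $m$ input lines), and a multi-case argument (different numbers of elapsed periods, different spike counts in some period, different spike positions) showing distinct post-reset histories force distinct latest offsets. Your proposal replaces all of this with the assertion that ``a single state spike can in principle carry an arbitrary real-valued summary \ldots by the encoding construction,'' and explicitly labels the definition of $P_{\mathsf{J}}$ and the axiom checks as future work; so the injectivity of the bounded-memory recursive encoding — the heart of the proof — is never established.

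Two further points where the paper's treatment supplies what your sketch would need. For Axiom 2, the resolution is not a delicate tuning of excitation versus own-spike feedback: the paper simply defines $P_{\mathsf{J}}(\chi,{\vec x_0})=\tau_{\mathsf{J}}$ and $P_{\mathsf{J}}(\chi,{\vec \phi})=\tau_{\mathsf{J}}$ simultaneously whenever one of its firing conditions holds and $0$ otherwise, making monotonicity in the own-output argument automatic; the real care goes into gating the firing conditions by ``no reset gap has occurred'' clauses so that Axiom 3 and the clocking behaviour survive. And in your case (ii), note the boundary subtlety that spikes arriving in the most recent $p$ milliseconds are not yet reflected in any encoding spike; the statement is arranged so the window is taken of ${\cal T}_{\mathsf{J}}(\chi_i)\sqcup\chi_i$ (and the paper's auxiliary Proposition~\ref{prop8} isolates the first post-reference reset gaps $V_1,V_2$), precisely so that recent discrepancies are carried by the raw input and only the older, already-encoded portion needs to be disambiguated by $\mathsf{J}$. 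Without the explicit encoding and this case analysis, the proposal is a plan rather than a proof.
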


\begin{proof}[Proof idea]
The basic idea is to ``encode'', in the time difference of two successive output spikes, the positions of all the input spikes that have occurred since the last input gap of the form $(t, t+W)$, where ${\cal T}(\cdot)$ is $W$-resettable. Such pairs of output spikes are produced once every $p$ milliseconds, with the time difference within each pair being a function of the time difference within the previous pair and the input spikes encountered since. Intuitively, it is convenient to think of this encoding as one from which we can ``reconstruct'' the entire past input spike-train ensemble after the last reset gap in the input. We first describe the encoding function for the case of a single input spike-train after which we indicate how it can be generalized.

So, suppose the time difference of the successive spikes output by $\mathsf{J}$ lies in the interval $[0, 1)$. Define the encoding function as $\varepsilon_0:[0,1)\times {\bar{\cal S}_{(0, p]}} \rightarrow [0,1)$, that takes in the old encoding and the input spikes in the past $p$ milliseconds to produce the new encoding, which is output by $\mathsf{J}$ as the time difference between a new pair of spikes. The number $p$ is chosen to be such that there are at most $8$ spikes in any interval of the form $(t, t+p]$. We now describe how $\varepsilon_0(e, {\vec x})$ is computed, given $e \in [0,1)$ and ${\vec x} = \langle x^1, x^2, \ldots, x^k \rangle$, such that each spike time in ${\vec x}$ lies in the interval $(0, p]$. Let $e$ have a decimal expansion\footnote{Whenever we say decimal expansion, we forbid decimal expansions with an infinite number of successive $9$s. With this restriction, each real number has a unique decimal expansion.}, so that $e=0.c_1s_1c_2s_2c_3s_3\cdots$. Accordingly, let $c=0.c_1c_2c_3\cdots$ and $s=0.s_1s_2s_3\cdots$. $c$ is a real number that encodes the number of spikes in each interval of length $p$ encountered, since the last reset. Since each interval of length $p$ has between $0$ and $8$ spikes, the digit $9$ is used as a ``termination symbol''. So, for example, suppose there have been $4$ intervals of length $p$, since the last reset with $5, 0, 8 $ and $2$ spikes apiece respectively, then $c=0.8059$ and $c'=0.28059$, where $c'$ is the ``updated'' value of $c$. Likewise, $s$ is a real number that stores the positions of all input spikes encountered since the last reset. Let each spike time be of the form $x^i=0.x^i_1x^i_2x^i_3\cdots \times 10^q$, for appropriate $q$, whose value is fixed for a given $p$. Then the updated value of $s$ is $s'=0.x^1_1x^2_1\cdots x^k_1s_1x^1_2x^2_2\cdots x^k_2s_2\cdots$. Suppose the $c'$ and $s'$ obtained above were of the form $c'=0.c'_1c'_2c'_3\cdots$ and $s'=0.s'_1s'_2s'_3\cdots$, then $\varepsilon_0(e, {\vec x})=0.c'_1s'_1c'_2s'_2\cdots$. Observe that the decimal expansion constructed by $\varepsilon_0(e, {\vec x})$ cannot have infinitely many successive $9$s, for $c'$ has only a finite number of non-zero digits. Suppose the input were a spike-train ensemble of order $m$, then for each spike-train an encoding would be computed as above and in the final step, the $m$ real numbers obtained would be interleaved together, so as to produce the encoding.

\addtocounter{footnote}{-3}
\begin{figure}
\begin{center}
  \includegraphics[width=8.7cm]{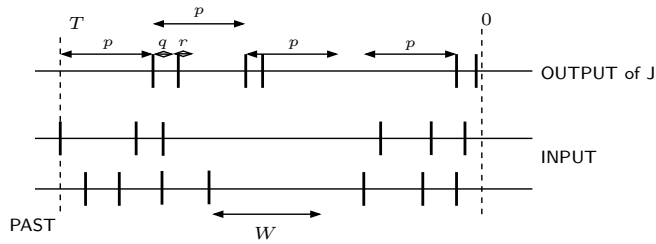}
\caption{This figure illustrates the operation of the intermediate neuron ~$\mathsf{J}$. Suppose $\chi \in {\cal F}_m$ is an input spike-train. Let its oldest spike be $T$ milliseconds ago. Then $\mathsf{J}$ produces a spike at time\protect\footnotemark $T-p$ and at every $T-kp$, for $k\in \mathbb{Z}^+$, unless in the previous $p$ milliseconds to when it is to spike, there is a gap\protect\footnotemark of the form $(t, t+W)$. For the sake of exposition, let's call these the ``clock'' spikes. Now, suppose there is a gap of the form $(t, t+W)$ in the input and there is an input spike at time $t$, then the neuron spikes at time $t-p$ and every $p$ milliseconds thereafter subject to the same ``rules'' as above. These clock spikes are followed by ``encoding'' spikes, which occur at least $q$ milliseconds after the clock spike, but less than $q+r$ milliseconds after, where $q$ is greater than the absolute refractory period $\alpha$. As expected, the position of the current encoding spike is a function of the time difference between the previous encoding and clock spikes\protect\footnotemark and the positions of the input spikes in the $p$ milliseconds before the current clock spike. The output of the encoding function is, in effect, appropriately scaled to ``fit'' in this interval of length $r$; the details are available in the proof.}
\label{fig:o}
\end{center}
\end{figure}

\addtocounter{footnote}{-3}
\stepcounter{footnote}\footnotetext{i.e. $p$ milliseconds after time instant ~$T$.}
\stepcounter{footnote}\footnotetext{We set $W>p$ to force a spike at $T-p$.}
\stepcounter{footnote}\footnotetext{unless the present clock spike is the first after a reset gap in the input.}

Given knowledge of the encoding function, Figure \ref{fig:o} briefly describes how $\mathsf{J}$ works. The claim then is that if two input spike-train ensembles are different at some point with no intervening ``reset'' gaps, then the output of $\mathsf{J}$ in the past $U$ milliseconds, where $U=p+q+r$ will be different. Intuitively, this is because the difference between the latest encoding and clock spike in each case would be different, as they encode different ``histories'' of input spikes. The exception is if the input spike-train ensembles differed only in the past $U$ milliseconds. In this case, the difference is communicated to $\mathsf{O}$ directly by $\chi$.

Finally, we ought to remark that the above is just an informal description that glosses over several technical details contained in the full proof, which is available in Appendix B.
\end{proof}

\noindent
The preceding two propositions thus imply Lemma~\ref{lemma2} which together with Lemma~\ref{lemma1} implies Theorem~\ref{theorem3}.

\begin{lemma}\label{lemma2}
If ~${\cal T}:{\cal F}_m\rightarrow{\cal S}$ is causal, time-invariant and resettable, then it can be effected by a feedforward network of depth two.
\end{lemma}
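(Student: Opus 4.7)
The plan is to derive the lemma directly by composing Propositions~\ref{depth2prop1} and \ref{depth2prop2}, which together carry all of the substantive work; what remains is an assembly step plus one routine uniqueness check.

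First, I would apply Proposition~\ref{depth2prop2} to the given causal, time-invariant, resettable transformation ${\cal T}$. This delivers a neuron $\mathsf{J}$ and a constant $U \in \mathbb{R}^+$ satisfying the ``tie-breaker'' property: whenever two (appropriately time-shifted) Flush inputs have ${\cal T}$-prescribed outputs that disagree at the present instant, the pair ${\cal T}_{\mathsf{J}}(\chi) \sqcup \chi$ must differ somewhere within the preceding window of length $U$. Next, I would feed this $\mathsf{J}$ and $U$ into Proposition~\ref{depth2prop1}, whose hypothesis is precisely what has just been supplied. This produces an output neuron $\mathsf{O}$ with the property that, for every $\chi \in {\cal F}_m$, the spike-train ${\cal T}(\chi)$ is consistent with ${\cal T}_{\mathsf{J}}(\chi) \sqcup \chi$ with respect to $\mathsf{O}$.

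I then assemble the feedforward network ${\cal N}$ as follows: the $m$ input vertices each feed into $\mathsf{J}$; the same $m$ input vertices together with $\mathsf{J}$ feed into $\mathsf{O}$, which is the designated output neuron. The graph is manifestly acyclic and has depth exactly two. To verify ${\cal T}_{\cal N}|_{{\cal F}_m} = {\cal T}$, I fix $\chi \in {\cal F}_m$ and invoke the results of Section~\ref{sec:Flush} to convert the Flush criterion on $\chi$ into a Gap criterion for ${\cal N}$, which in turn guarantees uniqueness of consistent output spike-trains along the cascade. Hence $\mathsf{J}$ must output ${\cal T}_{\mathsf{J}}(\chi)$, and $\mathsf{O}$, driven by the afferent ensemble ${\cal T}_{\mathsf{J}}(\chi) \sqcup \chi$, must output the unique spike-train consistent with it---which by Proposition~\ref{depth2prop1} is exactly ${\cal T}(\chi)$.

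The main obstacle has already been discharged inside Propositions~\ref{depth2prop1} and \ref{depth2prop2}; at this stage the only subtlety worth flagging is the need to confirm that the ${\cal T}_{\mathsf{J}}$ referenced in the two propositions coincide---i.e.\ that the intermediate neuron produced by Proposition~\ref{depth2prop2} is, without modification, a legal input to Proposition~\ref{depth2prop1}. Both propositions describe ${\cal T}_{\mathsf{J}}$ in the same way, namely as the map sending each Flush $\chi$ to the unique output of $\mathsf{J}$ consistent with $\chi$ (whose existence is ensured by the Flush-to-Gap conversion established earlier), so the composition goes through without friction and the lemma follows.
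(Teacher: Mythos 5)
Your proposal is correct and is essentially the paper's own argument: the paper likewise obtains Lemma~\ref{lemma2} by combining Proposition~\ref{depth2prop2} (construction of the tie-breaker neuron $\mathsf{J}$) with Proposition~\ref{depth2prop1} (construction of the output neuron $\mathsf{O}$) in the depth-two architecture of Figure~\ref{fig:n}. Your added verification that the network output equals ${\cal T}(\chi)$ — via the Flush-to-Gap conversion and uniqueness of consistent outputs — is exactly the routine step the paper leaves implicit, and it is handled correctly.
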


\begin{theorem}\label{theorem3}
If ~${\cal T}:{\cal F}_m\rightarrow{\cal S}$ can be effected by a feedforward network, then it can be effected by a feedforward network of depth two.
\end{theorem}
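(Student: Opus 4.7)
The plan is to prove Theorem~\ref{theorem3} as an immediate consequence of the two preceding lemmas, which together decompose the statement into a structural characterization and a construction. The reason for this indirect route is that directly transforming an arbitrary deep feedforward network into a depth-two equivalent appears difficult: it is not clear how to ``fold'' all the intermediate layers into a single layer while preserving the induced spike-train to spike-train map exactly. Instead, the argument goes through the class of causal, time-invariant, resettable transformations, showing both that this class contains every transformation effected by a feedforward network (Lemma~\ref{lemma1}) and that depth-two networks can realize every member of this class (Lemma~\ref{lemma2}).

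Concretely, given ${\cal T}:{\cal F}_m\rightarrow{\cal S}$ effected by some feedforward network ${\cal N}$, I would first apply Lemma~\ref{lemma1} to conclude that ${\cal T}$ is causal, time-invariant, and resettable. I would then apply Lemma~\ref{lemma2} to produce a feedforward network of depth two that effects this ${\cal T}$. Composing the two statements yields the theorem immediately, with no further argument required at this level.

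All the genuine work lives inside the two lemmas. For Lemma~\ref{lemma1}, I would proceed by induction on the depth of ${\cal N}$: causality and time-invariance at each neuron follow because the membrane potential function $P(\cdot)$ references only a bounded window of recent input and output and has no dependence on absolute time, and these properties are preserved under composition along a DAG; resettability follows from the Gap Lemma together with Axiom~(3), which jointly guarantee that a long enough input gap ``flushes'' each neuron to resting potential and so propagates through the network. The main obstacle is Lemma~\ref{lemma2}, whose proof is assembled from Propositions~\ref{depth2prop1} and~\ref{depth2prop2}: one constructs a two-layer architecture (Figure~\ref{fig:n}) in which an intermediate ``tie-breaker'' neuron $\mathsf{J}$ encodes the entire input history since the last reset gap into the time-differences of pairs of its output spikes, while the output neuron $\mathsf{O}$ uses $\mathsf{J}$'s spikes alongside the raw inputs to disambiguate every pair of input instances on which ${\cal T}$ could demand different outputs. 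The hardest piece is the construction of $\mathsf{J}$: because every single-neuron transformation is itself resettable, $\mathsf{J}$ cannot retain information across reset gaps of ${\cal T}$, but it does not need to, and within a single epoch between reset gaps it packs arbitrary input history into a bounded interval of real-valued spike-time differences by means of a digit-interleaving encoding scheme.
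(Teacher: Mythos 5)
Your proposal is correct and follows essentially the same route as the paper: Theorem~\ref{theorem3} is obtained by composing Lemma~\ref{lemma1} (every network-effected transformation is causal, time-invariant and resettable, proved by induction on depth using the bounded-window form of $P(\cdot)$, Axiom~(3) and the Gap Lemma) with Lemma~\ref{lemma2} (every such transformation is realized by a depth-two network, assembled from Propositions~\ref{depth2prop1} and~\ref{depth2prop2} via the tie-breaker neuron $\mathsf{J}$ with its spike-time-difference encoding and the output neuron $\mathsf{O}$). Your identification of the construction of $\mathsf{J}$ as the technical core, and of resettability as the reason $\mathsf{J}$ need not retain information across reset gaps, matches the paper's argument.
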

\begin{corollary}
The set of all feedforward networks is not more complex than the set of feedforward networks of depth equal to two.
\end{corollary}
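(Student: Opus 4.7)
The plan is to observe that this corollary follows almost immediately from Theorem~\ref{theorem3} together with the equivalence given by Lemma~\ref{thm:GFeq}, so the proposal is essentially a short chain of implications rather than new constructive work. Let $\Sigma_1$ denote the set of feedforward networks of depth equal to two, and $\Sigma_2$ denote the set of all feedforward networks, both over order~$m$. Since $\Sigma_1 \subseteq \Sigma_2$, the pair fits the template of Definition~\ref{defn:compl}, and the task is to show the negation of ``$\Sigma_2$ is more complex than $\Sigma_1$''.

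First, I would unpack the negation of Definition~\ref{defn:compl}: $\Sigma_2$ is not more complex than $\Sigma_1$ precisely when for every ${\cal N}' \in \Sigma_2$ there exists an ${\cal N} \in \Sigma_1$ with ${\cal T}_{{\cal N}'}|_{{\scriptscriptstyle{\cal G}_{12}}} = {\cal T}_{{\cal N}}|_{{\scriptscriptstyle{\cal G}_{12}}}$. Second, I would invoke Lemma~\ref{thm:GFeq} so that it suffices to prove the corresponding statement with the common domain replaced by ${\cal F}_m$: for every ${\cal N}' \in \Sigma_2$ there exists an ${\cal N} \in \Sigma_1$ with ${\cal T}_{{\cal N}'}|_{{\scriptscriptstyle{\cal F}_{m}}} = {\cal T}_{{\cal N}}|_{{\scriptscriptstyle{\cal F}_{m}}}$.

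Third, I would fix an arbitrary ${\cal N}' \in \Sigma_2$ and let ${\cal T} := {\cal T}_{{\cal N}'}|_{{\scriptscriptstyle{\cal F}_{m}}}$, which is a map ${\cal F}_m \to {\cal S}$ effected by the feedforward network~${\cal N}'$. Then by Theorem~\ref{theorem3}, there exists a feedforward network of depth two, call it~${\cal N}$, which effects~${\cal T}$; that is, ${\cal T}_{{\cal N}}|_{{\scriptscriptstyle{\cal F}_{m}}} = {\cal T} = {\cal T}_{{\cal N}'}|_{{\scriptscriptstyle{\cal F}_{m}}}$. Since ${\cal N} \in \Sigma_1$, this furnishes the required witness, and applying the ``only if'' direction of Lemma~\ref{thm:GFeq} (in its contrapositive form) lifts the equality back to the domain ${\cal G}_{12}$.

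Since all the substantive work has already been discharged in Theorem~\ref{theorem3} and Lemma~\ref{thm:GFeq}, I do not anticipate any real obstacle here; the only thing to be careful about is the direction of the equivalence in Lemma~\ref{thm:GFeq}, because the lemma is stated as a biconditional between complexity statements, and one must use the contrapositive to transfer non-complexity from the Flush setting to the Gap setting. Beyond that bookkeeping, the corollary is a direct consequence.
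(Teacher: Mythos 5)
Your proposal is correct and matches the paper's (implicit) argument: the corollary is stated there as an immediate consequence of Theorem~\ref{theorem3}, with the passage between the Flush domain ${\cal F}_m$ and the common Gap domain ${\cal G}_{12}$ handled exactly as you do, via the biconditional of Lemma~\ref{thm:GFeq} applied in negated/contrapositive form to the definition of transformational complexity. No gaps; your care about the direction of the equivalence is the only point of substance, and you handle it correctly.
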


Incidentally, Lemma \ref{lemma1} and \ref{lemma2} also lead to a full characterization of the class of transformations effected by all feedforward networks equipped with neurons obeying the abstract model of Section~\ref{sec:Model}. This is formalized in the next theorem.

\begin{theorem}
A transformation ${\cal T}:{\cal F}_m\rightarrow{\cal S}$ can be effected by a feedforward network if and only if it is causal, time-invariant and resettable.
\end{theorem}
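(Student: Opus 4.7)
The plan is to derive this characterization directly from the two lemmas already established in this section, namely Lemma~\ref{lemma1} and Lemma~\ref{lemma2}, by observing that each lemma supplies exactly one direction of the biconditional.

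For the forward direction, suppose ${\cal T}:{\cal F}_m\rightarrow{\cal S}$ can be effected by some feedforward network ${\cal N}$. Lemma~\ref{lemma1} then immediately yields that ${\cal T}$ is causal, time-invariant and resettable. No additional work is required here; the content is entirely absorbed into the earlier lemma, whose proof proceeds by induction on the depth of ${\cal N}$ and invokes the Gap Lemma together with Axiom~(3) to obtain resettability, while causality and time-invariance fall out of the fact that the membrane potential function $P(\cdot)$ of each constituent neuron depends only on a bounded window into the past.

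For the reverse direction, suppose ${\cal T}$ is causal, time-invariant and resettable. Lemma~\ref{lemma2} then supplies a feedforward network of depth two that effects ${\cal T}$. Since every depth-two feedforward network is, a fortiori, a feedforward network, this is precisely what is required. The nontrivial content of this direction lies entirely in the construction of the intermediate ``tie-breaker'' neuron $\mathsf{J}$ of Proposition~\ref{depth2prop2} and the accompanying output neuron $\mathsf{O}$ of Proposition~\ref{depth2prop1}, both already in hand.

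Combining the two directions gives the biconditional, completing the proof. The main conceptual obstacle in the theorem --- showing that causality, time-invariance and resettability are not only necessary but also sufficient for realizability by a feedforward network --- has already been surmounted in Lemma~\ref{lemma2}; what remains at the level of the theorem itself is purely the act of packaging the two one-way implications together. As a side benefit, this theorem, taken with Theorem~\ref{theorem3}, makes explicit that the class of transformations realizable by arbitrary-depth feedforward networks coincides precisely with that realizable by depth-two networks, and that this common class admits the clean abstract description of being causal, time-invariant and resettable.
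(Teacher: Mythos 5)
Your proposal is correct and matches the paper's intended argument exactly: the paper presents this theorem as an immediate consequence of Lemma~\ref{lemma1} (necessity) and Lemma~\ref{lemma2} (sufficiency, noting a depth-two network is in particular a feedforward network). Nothing further is needed.
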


\subsubsection*{Directions for further constraining the present model}

The results of this section imply that we need to add new properties to further constrain our model neurons, in order for complexity results involving feedforward networks of depth two to be manifested. There are a number of directions that one could take. One is that spike-times in the present model are real numbers. When stochastic variability in neurons is taken into account, this assumption is no longer true. Also, we did not assume that the membrane potential changes smoothly with time, which would be a reasonable assumption to add. And, finally, an assumption consistent with Dale's principle, that each neuron has either an excitatory effect on all its postsynaptic neurons or an inhibitory effect might also help in this direction.

\section{Discussion}\label{sec:disc}

There has been some debate about how useful data from the connectome
projects might be in advancing a mechanistic understanding of
computation occurring in the circuits of the brain. One of the main
type of arguments that has been made against their utility is that,
since these projects only\footnote{This in itself is a formidable
  problem and one that is taking heroic effort.} seek to ascertain the
wiring diagram, without giving us detailed physiological information,
it is not clear what we might learn from this data alone, especially
for networks whose high-level function is not known. While it is
acknowledged that network architecture places constraints on what
a network can compute \citep{kleinfeld2011large, denk}, the nature and
scope of these constraints have remained poorly understood. Our goal with
this work was in asking, on one hand, if we can deduce non-trivial
examples of computations that a network {\em could not} be doing,
given just the knowledge of its architecture and assuming that the
neurons obey some elementary properties. On the other hand, we asked
if there are fundamental limits to what can be said, given just this
information. We examined this question for the case of feedforward
networks equipped with neurons that obeyed a deterministic spiking
neuron model. We first set the stage by creating a mathematical
framework in which this question could be precisely posed. Crucially,
we needed to make precise what computation exactly meant in this
context. This took a fair bit of work and led us to the view of
feedforward networks as spike-train to spike-train transformations
under biologically-relevant spiking regimes. After setting up
necessary definitions, we then showed some examples of transformations
that networks of specific architectures {\em cannot} effect, that
other networks can. First of all, we showed\footnote{See Figure~\ref{fig:k} and the second paragraph of Section~\ref{sec:compl_results}.} that there exist spike-train to
spike-train transformations that no feedforward network could
effect. Next, we showed a transformation that no single neuron could
effect but a network consisting of two neurons could. After this, we
proved a result which shows that a class of architectures that share a
certain structural property also share their inability to effect
a particular class of transformations. Notably, while this class of
architectures has networks with arbitrarily many neurons, we showed a
class of networks with just two neurons which could effect this class
of transformations. This suggests that network structure alone may
impose crucial constraints on computational ability. Finally, we
demonstrated that the small number of properties assumed for our model
neurons can only take us so far. We proved that without making further
assumptions about our model neurons, we couldn't discern such examples
for the set of all feedforward networks of depth two.

While there is more to neuronal networks than just their
wiring diagram, what our theory suggests is that the wiring diagram
could impose crucial constraints on the computational ability of
networks, in some cases. On the other hand, there seem to be classes
of networks for which a more elaborate knowledge of single neuron
properties may be necessary, before we can determine restrictions on
their computational ability. While technical issues in electron microscopy \citep{denk} have so far stood in the way of mapping, for example, distributions of ion-channels and neurotransmitter and neuromodulator receptors in neurons, it is conceivable that such hurdles may be overcome in future.
If successful, these or other advances in conjunction with the wiring diagram could provide useful
information to help us tease out pertinent constraints on the
computational capabilities of these networks.

In this work, as a first step, we have aimed to demonstrate specific {\em examples} of computations that a network cannot accomplish, given its architecture. The more ambitious goal would be the ability to have an exact characterization of the set of {\em all} computations that a given neural circuit cannot perform, given knowledge of its architecture, to the extent that a given incomplete knowledge of the physiological properties of its neurons will allow. This is not necessarily a goal that is out of reach. Even in the present work, we have obtained such an exact characterization\footnote{This characterization is a consequence of Theorem 4. In particular, it is the set of all transformations that are {\em not} causal, time-invariant or resettable.} of the set of all computations that the set of feedforward networks cannot accomplish, given the set of properties that our model neurons are presently assumed to obey. Therefore, in principle, there seems to be no reason why we may not be able to do likewise for specific network architectures.

\subsubsection*{Acknowledgements}
This work was supported, in part, by a National Science Foundation grant (NSF IIS-0902230) to A.B.

\section*{Appendix A: Relationship of the abstract neuron model to some widely-used neuron models }
Here, we demonstrate that the properties that our abstract model of the neuron is contingent on are satisfied, up to arbitrary accuracy, by several widely-used neuron models such as the Leaky Integrate-and-Fire Model and Spike Response Model.

\subsection*{Leaky Integrate-and-Fire Model}
\noindent
Consider the standard form of the Leaky Integrate-and-Fire Model:

\begin{equation}
\tau_m \frac{du}{dt} = -u(t)+R I(t)
\end{equation}

where $\tau_m =RC$. When $u(t^{(f)})= v$, the neuron fires a spike and the reset is given by $u(t^{(f)}+\Delta)=u_r$, where $v$ is the threshold and $\Delta$ is the absolute refractory period. Suppose an output spike has occurred at time ${\hat t} - \Delta$, the above differential equation has the following solution:

\begin{equation}
u(t)=u_r \exp(-\frac{t-{\hat t}}{\tau_m}) + \frac{1}{C} \int_{0}^{t-{\hat t}} \exp(-\frac{s}{\tau_m}) I(t-s) ds
\end{equation}

Suppose $I(t)=\Sigma_j w_{j} \Sigma_i \alpha(t-t_j^{(i)})$ and $\alpha(\cdot)$ had a finite support. Then, it is clear from the above expression that the contribution of the previous output spike fired by the present neuron as well as the contribution of input spikes from presynaptic neurons decays exponentially with time. Therefore, one can compute the membrane potential to arbitrary accuracy by choosing input and output ``windows'' of appropriate size so that $u(\cdot)$ is a function only of input spikes and output spikes in those windows. It is easy to verify that the all the axioms of our model are satisfied: Clearly, the model above has an absolute refractory period, a past output spike has an inhibitory effect on membrane potential, and upon receiving no input and output spikes in the said windows, it settles to resting potential. Thus, an instantiation of our abstract model can simulate a Leaky Integrate-and-Fire Model to arbitrary accuracy.

\subsection*{Spike Response Model}
\noindent
Consider now the standard form of the Spike Response Model\citep{gerst}.

In the absence of spikes, the membrane potential $u(\cdot)$ is set to the value $u_{r}=0$. Otherwise, the membrane potential is given by

\begin{equation}
u(t) = \eta(t-{\hat t_i}) + \Sigma_j~ w_{j}~ \Sigma_i~ \epsilon_{ij}(t-{\hat t_i}, t-t_j^{(i)})
\end{equation}

where $\eta(\cdot)$ describes the after-hyperpolarization after an output spike at ${\hat t_i}$ and $\epsilon_{ij}(\cdot)$ describes the response to incoming spikes $t_j^{(i)}$, which are the spikes fired by presynaptic neuron $j$ with $w_{j}$ being synaptic weights. $\eta(\cdot)$, is set to a sufficiently low value for $\Delta$ milliseconds after an output spike so as not to cause another spike, where $\Delta$ is the absolute refractory period. The functions $\eta(\cdot)$ and $\epsilon_{ij}(\cdot)$ typically decay exponentially with time and therefore, as before, one can compute the membrane potential to arbitrary accuracy by choosing input and output ``windows'' of appropriate size so that the $u(\cdot)$ is a function only of input spikes and output spikes in those windows. Likewise, it is easy to verify that the all the axioms of our model are satisfied: Clearly, the model above has an absolute refractory period, a past output spike has an inhibitory effect on membrane potential, and upon receiving no input and output spikes in the said windows, it settles to resting potential. Thus, it is straightforward to verify that an instantiation of our abstract model can simulate a Spike Response Model to arbitrary accuracy.

\newpage

\section*{Appendix B: Proofs and Technical Remarks}
\setcounter{lemma}{0}
\setcounter{proposition}{0}
\setcounter{corollary}{0}

\subsection*{Technical Remarks from Section~\ref{sec:counterexample}}
It might be argued that the input spike-train to a neuron cannot possibly be
infinitely long, since every neuron begins existence at a certain
point in time. However, this begs the question whether the neuron was at
the resting potential when the first input spikes
arrived\footnote{Note that our axiomatic definition of a neuron does not
  address this question.}. An assumption to this effect would be
significant, particularly if the current membrane potential depended
on it. It is easy to construct an example along the lines of the
example described in Figure \ref{fig:a}, where
the current membrane potential is different depending on whether this
assumption is made or not. Assuming infinitely long input spike-train ensembles, on the other hand, obviates the need to make any such
assumption. We retain this viewpoint for the rest of the paper with
the understanding that the alternative viewpoint discussed at the
beginning of this paragraph can
also be expounded along similar lines.

\subsection*{Proofs from Section~\ref{sec:GapLemma}}
\begin{proof}[Proof of Gap Lemma]
Since, in each ${\vec x_0}$ consistent with
$\chi$, with respect to ${\mathsf N}$, the interval $(t+2 \rho, t+3 \rho)$ of ${\vec x_0}$ and the
$(t+\Upsilon+\rho, t+\Upsilon+2\rho)$ of $\chi$ are arbitrary, the
sequence of spikes present in the interval $(t+\rho, t+2\rho)$ of ${\vec
  x_0}$ could be arbitrary. However, $\chi^*$ and $\chi$ are identical
in $(t, t+\rho+\Upsilon)$. Thus, it follows from Axiom 2 in the
formal definition of a neuron that for
every $t' \in (t,t+\rho)$, $P( \Xi_{(0,\Upsilon)}(\sigma_{t'}(\chi)),
\Xi_{(0,\rho)}(\sigma_{t'}({\vec x_0})))$ is at most the value of $P(
\Xi_{(0,\Upsilon)}(\sigma_{t'}(\chi^*)),
\Xi_{(0,\rho)}(\sigma_{t'}({\vec x_0^*})))$ , because
$\Xi_{(0,\rho)}(\sigma_{t'}({\vec x_0^*}))$ is ${\vec \phi}$, i.e. empty. Since
$P(\Xi_{(0,\Upsilon)}(\sigma_{t'}(\chi^*)),
\Xi_{(0,\rho)}(\sigma_{t'}({\vec x_0^*})))$ is less than $\tau$ for
every $t'\in (t,t+\rho)$, \\$P( \Xi_{(0,\Upsilon)}(\sigma_{t'}(\chi)),
\Xi_{(0,\rho)}(\sigma_{t'}({\vec x_0})))$ is less than $\tau$ in
the same interval, as well. Therefore, ${\vec x_0}$ has no
spikes in $(t, t+\rho)$.

That $2\rho$ is the smallest possible gap length in ${\vec
  x_0^*}$ for this to hold, follows from the counterexample in Figure \ref{fig:a},
where the present conclusion did not hold, when ${\vec x_0^*}$ had gaps of
length $2 \rho -\delta$, for arbitrarily small $\delta>0$.

\end{proof}

\begin{proof}[Proof of Corollary~\ref{corrgap}]
(\ref{c11}) is immediate from the Gap Lemma, when we set $\chi=\chi^*$.

For (\ref{c12}), the proof is by strong induction on the number of
spikes since $t$. Let ${\vec x_0}$ be an arbitrary spike-train that is
consistent with $\chi^*$, with respect to ${\mathsf N}$.  Notice that
from (\ref{c11}) we have that ${\vec x_0}$ is identical to
${\vec x_0}^*$ in $(t, t+\rho)$. The base case is to show that both
${\vec x_0}^*$ and ${\vec x_0}$ have their first spike since $t$ at
the same time. Assume, without loss of generality, that the first spike
of ${\vec x_0}$ at $t_1 \leq t$, is no later than the first spike of
${\vec x_0}^*$.  We have $P(
\Xi_{(0,\Upsilon)}(\sigma_{t_1}(\chi^*)),
\Xi_{(0,\rho)}(\sigma_{t_1}({\vec x_0^*})))=P(
\Xi_{(0,\Upsilon)}(\sigma_{t_1}(\chi^*)),
\Xi_{(0,\rho)}(\sigma_{t_1}({\vec x_0})))$ since
$\Xi_{(0,\rho)}(\sigma_{t_1}({\vec
  x_0^*}))=\Xi_{(0,\rho)}(\sigma_{t_1}({\vec x_0}))={\vec
  \phi}$. Therefore ${\vec x_0^*}$ also has its first spike since $t$ at
$t_1$. Let the induction hypothesis be that both ${\vec x_0^*}$ and
${\vec x_0}$ have their first $k$ spikes since $t$ at the same
times. We show that this implies that the $(k+1)^{th}$ spike in each
spike-train is also at the same time instant.  Assume, without loss of
generality, that the $(k+1)^{th}$ spike since $t$ of ${\vec x_0}$ at
$t_{k+1}$, is no later than the $(k+1)^{th}$ spike since $t$ of ${\vec
  x_0}^*$. Now, $\Xi_{(0,\rho)}(\sigma_{t_{k+1}}({\vec
  x_0^*}))$ is identical to $\Xi_{(0,\rho)}(\sigma_{t_{k+1}}({\vec x_0}))$ from the
induction hypothesis since $(t+\rho) - t_{k+1} \geq \rho$. Thus, $P(
\Xi_{(0,\Upsilon)}(\sigma_{t_{k+1}}(\chi^*)),
\Xi_{(0,\rho)}(\sigma_{t_{k+1}}({\vec x_0^*})))=P(
\Xi_{(0,\Upsilon)}(\sigma_{t_{k+1}}(\chi^*)),
\Xi_{(0,\rho)}(\sigma_{t_{k+1}}({\vec x_0})))$ and therefore ${\vec x_0^*}$ also has its $(k+1)^{th}$  spike at
$t_{k+1}$. This completes the proof of (\ref{c12}).

(\ref{c13}) follows from the Gap Lemma and (\ref{c12}).

\end{proof}

\begin{proposition}\label{prop1}
Let $\chi$ be a spike-train ensemble that satisfies a T-Gap criterion for
a neuron ${\mathsf N}\langle\alpha,
\Upsilon, \rho, \tau, \lambda, m, P:{\bar{\cal S}_{(0, \Upsilon)}}^m 
\times \bar{\cal S}_{(0, \rho)} \rightarrow [\lambda, \tau]\rangle$, where $T \in \mathbb{R}^+$. Then, there is exactly
one spike-train ${\vec x_0}$, such that ${\vec x_0}$ is consistent with $\chi$, with respect to ${\mathsf N}$.
\end{proposition}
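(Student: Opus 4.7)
The plan is to dispose of existence immediately and then reduce uniqueness to a clean application of Corollary~1(2). Existence comes for free: by the very definition of the $T$-Gap criterion, there exists an output spike-train ${\vec x_0}^*$ consistent with $\chi$ that has a gap of length $2\rho$ in every interval of length $T-\Upsilon+2\rho$. So the content of the proposition is the uniqueness assertion.

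For uniqueness, let ${\vec x_0}$ be any spike-train consistent with $\chi$ with respect to $\mathsf{N}$, and fix an arbitrary time instant $t_0 \in \mathbb{R}$. My strategy is to exhibit, in ${\vec x_0}^*$, a gap of length $2\rho$ lying strictly older than $t_0$, and then invoke Corollary~1(2) with $\chi^* = \chi$ to conclude that ${\vec x_0}$ and ${\vec x_0}^*$ agree at $t_0$. Since $t_0$ was arbitrary, this will give ${\vec x_0} = {\vec x_0}^*$.

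To locate the required gap, I will use the recurrence built into the $T$-Gap criterion. Consider any interval of length $T - \Upsilon + 2\rho$ whose most recent endpoint is older than $t_0 - \rho$ (such intervals exist, since we may push the interval arbitrarily far into the past). The criterion guarantees a gap $(t, t+2\rho)$ of ${\vec x_0}^*$ inside this interval, so in particular $t > t_0 - \rho$, i.e.\ $t + \rho > t_0$, which is exactly the inequality needed so that $t_0 \in (-\infty, t+\rho)$ under the convention that larger time-values denote the more distant past. Applying Corollary~1(2) to ${\vec x_0}^*$ (which is consistent with $\chi$ and has a gap in $(t, t+2\rho)$) and to ${\vec x_0}$ (which is also consistent with $\chi$) yields that ${\vec x_0}$ and ${\vec x_0}^*$ are identical on $(-\infty, t+\rho)$, and hence in particular at $t_0$.

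I do not expect a genuine obstacle here: the heavy lifting has already been done in the Gap Lemma and its corollary, and the $T$-Gap criterion was explicitly designed so that the ``reset'' windows of length $2\rho$ recur indefinitely far into the past. The one bookkeeping point to be careful about is the time-convention (past $=$ larger $t$): one must verify that ``arbitrarily far in the past'' does translate into the existence of gaps with $t+\rho$ older than any prescribed $t_0$, which is precisely what the quantification ``in every interval of time of length $T - \Upsilon + 2\rho$'' buys us.
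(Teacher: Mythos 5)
Your proposal is correct and takes essentially the same route as the paper: both rest on the $T$-Gap criterion supplying $2\rho$-gaps of ${\vec x_0}^*$ arbitrarily far in the past (older than any prescribed time) and on Corollary~1(2) forcing every output consistent with $\chi$ to agree with ${\vec x_0}^*$ on $(-\infty, t+\rho)$. The only difference is cosmetic: you argue directly at an arbitrary instant $t_0$, whereas the paper assumes a disagreement at some $t'$ and derives a contradiction from the same gap-plus-corollary step.
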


\begin{proof}[Proof of Proposition~\ref{prop1}]
Since $\chi$ satisfies a $T$-Gap criterion, there exists a spike-train
${\vec x_0}$ with at least one gap of length $2\rho$ in every interval
of time of length $T-\Upsilon +2\rho$, so that ${\vec x_0}$ is
consistent with $\chi$ with respect to ${\mathsf N}$. For the sake of
contradiction, assume that there exists another spike-train ${\vec
  x_0}'$, not identical to ${\vec x_0}$, which is consistent
with $\chi$, with respect to ${\mathsf N}$. Let $t'$ be the time at
which one spike-train has a spike but another doesn't. Let $t>t'$ be
such that ${\vec x_0}$ has a gap in the interval $(t, t+2\rho)$. By
Corollary \ref{corrgap} to the Gap Lemma, it follows that ${\vec
  x_0}'$ is identical to ${\vec x_0}$ after time instant $t+\rho$. This contradicts the hypothesis that ${\vec x_0}'$ is different from ${\vec x_0}$ at t'.

\end{proof}

\addtocounter{lemma}{1}

\begin{lemma}\label{lemma2:gap}
Consider a feedforward network ${\cal N}$. Let $\chi$ satisfy a $T$-Gap criterion for ${\cal N}$, where $T \in \mathbb{R}^+$. Then the output neuron of ${\cal N}$ produces a unique output spike-train when ${\cal N}$ receives $\chi$ as input. Furthermore, the membrane potential of the output neuron at any time instant depends on at most the past $T$ milliseconds of input in $\chi$.
\end{lemma}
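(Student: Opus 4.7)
The plan is to proceed by strong induction on the depth $d$ of ${\cal N}$. The base case $d=1$ is essentially already in hand: the network consists of a single output neuron ${\mathsf N}$ fed directly by the input vertices, and the $T$-Gap criterion for ${\cal N}$ reduces to a $T$-Gap criterion for ${\mathsf N}$ on $\chi$. Proposition~\ref{prop1} then supplies a unique consistent output spike-train, while the reasoning of Figure~\ref{fig:e} (a direct application of Corollary~\ref{corrgap}) shows that the membrane potential at any time depends on at most the preceding $T$ milliseconds of $\chi$.

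For the inductive step, suppose the lemma holds for every feedforward network of depth strictly less than $d$, and let ${\cal N}$ have depth $d$ with $\chi$ satisfying a $T$-Gap criterion for ${\cal N}$. By definition, every neuron in ${\cal N}$ satisfies a $(T/d)$-Gap criterion on its induced input. For each afferent neuron $A$ of the output neuron $N_{\mathrm{out}}$, I would form the sub-network ${\cal N}_A$ consisting of $A$ together with all of its ancestors; this has depth $d_A \leq d-1$. Since a $(T/d)$-Gap criterion is strictly stronger than a $(T/d_A)$-Gap criterion (smaller window-intervals force more frequent $2\rho$-gaps), every neuron in ${\cal N}_A$ satisfies a $\bigl((Td_A/d)/d_A\bigr)$-Gap criterion, so $\chi$ satisfies a $(Td_A/d)$-Gap criterion for ${\cal N}_A$. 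By the inductive hypothesis, $A$ produces a unique output spike-train whose value at any time $t'$ is determined by $\chi$ restricted to the preceding $Td_A/d \leq T(d-1)/d$ milliseconds.

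Consequently each afferent of $N_{\mathrm{out}}$ delivers a uniquely determined spike-train, whether it is an intermediate neuron or an input vertex. Since $N_{\mathrm{out}}$ itself satisfies a $(T/d)$-Gap criterion on this well-defined input, Proposition~\ref{prop1} yields uniqueness of its output spike-train. For the window-length claim, the single-neuron reasoning of Figure~\ref{fig:e} shows that the membrane potential of $N_{\mathrm{out}}$ at time $t$ depends on at most the past $T/d$ milliseconds of afferent input, i.e.\ on the afferent outputs (and any direct input-vertex spikes) in $[t-T/d,\,t]$. By the previous paragraph, each intermediate afferent's output in $[t-T/d,\,t]$ is determined by $\chi$ in $[\,t-T/d-T(d-1)/d,\; t\,]=[t-T,\,t]$, and any input-vertex spikes falling in $[t-T/d,\,t]$ already lie inside this same window. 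Hence the membrane potential of $N_{\mathrm{out}}$ at $t$ depends on at most the past $T$ milliseconds of $\chi$, closing the induction.

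The chief technical obstacle I anticipate is simply the bookkeeping: verifying that the single-scalar $T$-Gap condition on ${\cal N}$ correctly restricts to a $(Td_A/d)$-Gap criterion on each shallower sub-network, and that the per-hop window contributions of length $T/d$ accumulate to exactly $T$ along the longest input-to-output path of length $d$. Once this accounting is in place, both conclusions of the lemma follow by routine invocations of Proposition~\ref{prop1} and Corollary~\ref{corrgap} at each level of the induction.
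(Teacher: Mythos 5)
Your proposal is correct and takes essentially the same route as the paper's proof: both rest on the single-neuron results (the unique-consistency proposition and the Gap-Lemma window bound) applied level by level, accumulating $T/d$ milliseconds of input dependence per unit of depth, with the choice of parameter $Td_A/d$ for the sub-network making each constituent neuron's criterion exactly a $(T/d)$-Gap criterion as required. The only difference is packaging — you run one strong induction on network depth via ancestor-closed sub-networks that delivers uniqueness and the window bound simultaneously, whereas the paper proves uniqueness by inducting over neurons of increasing depth and then obtains the $T$-millisecond window by walking backward along paths from the output neuron — and this is not a substantive difference.
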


\begin{proof}[Proof of Lemma~\ref{lemma2:gap}]
We prove that the output of the network is unique by strong  induction on depth. Let $N_i$, for $1\leq i\leq d$, be the set of neurons in ${\cal N}$ of depth $i$. Each neuron ${\mathsf N}\in N_1$ receives all inputs from spike-trains in $\chi$. Since, ${\mathsf N}$ satisfies a Gap criterion with those input spike-trains, its output is unique. The induction hypothesis then is that for all $i \leq k<d$, each neuron ${\mathsf N} \in N_i$ produces a unique output spike-train when ${\cal N}$ is driven by $\chi$. Consider arbitrary ${\mathsf N}' \in N_{k+1}$. It is clear that all inputs to ${\mathsf N}'$ are from spike-trains from $\chi$ or neurons in $\bigcup_{i=1}^k N_i$, for otherwise the depth of ${\mathsf N}'$ would be greater than $k+1$. Since, all its inputs are unique by the induction hypothesis and they satisfy a Gap criterion for ${\mathsf N}'$, its output is also unique.

Next, we show that the membrane potential of the output neuron at any time instant depends on at most the past $T$ milliseconds of input in $\chi$. Since the output neuron satisfies a $(\frac{T}{d})$-Gap
Criterion, its membrane potential at any point depends on at most the
past $(\frac{T}{d})$ milliseconds of the inputs it receives (some of which
may be output spike-trains of other neurons). Consider one such
``penultimate layer'' neuron. Again, its output membrane potential at any
time instant, likewise, depends on its inputs in the past $(\frac{T}{d})$
milliseconds. Therefore, the current potential of the output neuron is
dependent on the input received by the penultimate layer neuron in at most the
past $(\frac{2T}{d})$ milliseconds. Similar arguments can be put forth
until, for each path, one reaches a neuron, all of whose inputs do not
come from other neurons. Since the longest such path is of length $d$,
it is straightforward to verify that the membrane potential of the output neuron depends on at most $T$
milliseconds of past input in $\chi$.

\end{proof}

\subsection*{Proofs from Section~\ref{sec:Flush}}

\begin{figure}
\begin{center}
  \includegraphics[width=8.6cm]{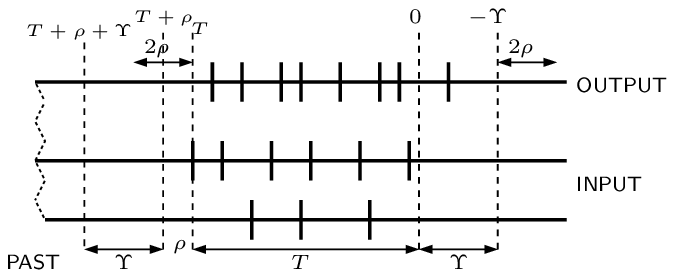}
\caption{Illustration showing that an input spike-train ensemble satisfying a Flush Criterion also satisfies a Gap Criterion.}
\label{fig:f}
\end{center}
\end{figure}

\begin{lemma}\label{lemma3}
An input spike-train ensemble $\chi$~ for a neuron ${\mathsf N}\langle\alpha, \Upsilon, \rho, \tau,
\lambda, m, P:{\bar{\cal S}_{(0, \Upsilon)}}^m  \times \bar{\cal
  S}_{(0, \rho)} \rightarrow [\lambda, \tau]\rangle$ that satisfies a $T$-Flush
Criterion also satisfies a $(T+2\Upsilon+2\rho)$-Gap Criterion for
that neuron.
\end{lemma}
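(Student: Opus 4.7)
The plan is to exploit the fact that a $T$-Flush input is zero outside a bounded time window, use Axioms~2 and 3 to force any consistent output to also be zero outside a slightly enlarged window, and then read off the gap condition from the arithmetic $(T+2\Upsilon+2\rho)-\Upsilon+2\rho=T+\Upsilon+4\rho$.

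First I would locate where the input window $\Xi_{(0,\Upsilon)}(\sigma_t(\chi))$ can possibly be non-empty. Because every spike of $\chi$ lies in $(0,T)$, the shifted window picks up spikes of $\chi$ at original times in $(t,t+\Upsilon)$, and this intersects $(0,T)$ only for $t\in(-\Upsilon,T)$. For any $t$ outside this open interval and any admissible past-output history $\vec{x_0}'$, Axioms~2 and 3 give
\[
P\bigl(\langle\vec\phi,\ldots,\vec\phi\rangle;\,\vec{x_0}'\bigr)\ \leq\ P\bigl(\langle\vec\phi,\ldots,\vec\phi\rangle;\,\vec\phi\bigr)\ =\ 0\ <\ \tau,
\]
so the consistency condition rules out a spike at such a $t$. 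Hence \emph{every} output consistent with $\chi$ is supported inside $(-\Upsilon,T)$.

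Next I would establish existence of at least one consistent $\vec{x_0}$. The natural construction is to begin with the empty output in the far past (where the previous step already certifies consistency), and sweep forward in time through $[-\Upsilon,T]$: at each $t$, the past-$\rho$ window of the already-determined output, together with $\chi$, fixes $P(\cdot)$, and one places an output spike at $t$ exactly when $P=\tau$. The absolute refractory period $\alpha$ from Axiom~1 keeps the spikes from accumulating, and for $t\leq-\Upsilon$ the construction produces no further spikes by the first step.

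Finally I would read off the Gap Criterion. Let $I$ be any interval of length $T+\Upsilon+4\rho$. Since $\vec{x_0}$ has no spikes outside $(-\Upsilon,T)$, the complement $I\setminus(-\Upsilon,T)$ is a union of at most two contiguous pieces and has total length at least $(T+\Upsilon+4\rho)-(T+\Upsilon)=4\rho$; consequently at least one of these pieces has length $\geq 2\rho$, giving a $2\rho$-gap of $\vec{x_0}$ inside $I$. This is precisely what the $(T+2\Upsilon+2\rho)$-Gap Criterion demands. The only mildly delicate step in the plan is the existence construction in the second paragraph; everything else is forced directly by the axioms and elementary interval arithmetic.
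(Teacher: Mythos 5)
Your proposal follows essentially the same route as the paper's proof: Axioms 2 and 3 confine every consistent output spike-train to the interval $(-\Upsilon,T)$, and the $(T+2\Upsilon+2\rho)$-Gap Criterion then follows by interval arithmetic, your check that every interval of length $T+\Upsilon+4\rho$ must contain a $2\rho$-gap being just a more explicit rendering of the paper's computation. Your additional forward-sweep construction of a consistent output is something the paper's proof simply takes as given rather than proves; as written it is informal (a continuum ``sweep'' needs the threshold-hitting times to be attained, which the axioms alone do not force), but this extra step does not change the fact that the argument matches the paper's.
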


\begin{proof}[Proof of Lemma~\ref{lemma3}]
Figure \ref{fig:f} accompanies this proof. The neuron on being driven by $\chi$ cannot have output spikes outside
the interval $(-\Upsilon, T)$. This easily follows from Axiom 2 and 3
of the neuron because the neuron does not
have input spikes before time instant $T$ and in the interval
$(-\Upsilon, 0)$ and onwards. Now, to see that $\chi$ satisfies a
$(T+2\Upsilon+2\rho)$-Gap Criterion, recall that with a $T'$-Gap
Criterion, distance between any two gaps of length $2\rho$ on the
output spike-train is at most $T'-\Upsilon -2\rho$. With $\chi$, we
observe that the distance between any two $2\rho$ gaps on the output
spike-train is at most $T+\Upsilon$. Thus, $T'-\Upsilon -2\rho =
T+\Upsilon$, which gives us $T'=T+2\Upsilon+2\rho$. The result follows.

\end{proof}

\begin{lemma}\label{lemma4}
An input spike-train ensemble $\chi$~ for a feedforward network that
satisfies a $T$-Flush Criterion also satisfies a $(dT+d(d+1)\Upsilon
+2d\rho)$-Gap Criterion for that network, where $\Upsilon$, $\rho$
are upper bounds on the same parameters taken over all the neurons in
the network and $d$ is the depth of the network.
\end{lemma}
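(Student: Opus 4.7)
My plan is to prove Lemma 4 by strong induction on the depth of individual neurons in the feedforward network. The inductive claim I will establish is that, for each $k \geq 1$, every neuron at depth $k$ receives a unique input spike-train ensemble whose spikes are confined to the interval $(-(k-1)\Upsilon, T)$, produces a unique output spike-train whose spikes are confined to $(-k\Upsilon, T)$, and has its input satisfying a $(T + (k+1)\Upsilon + 2\rho)$-Gap Criterion for that neuron.

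The base case $k = 1$ is essentially Lemma 3. The input to a depth-1 neuron is $\chi$ itself, which satisfies the $T$-Flush Criterion, so its spikes lie in $(0, T)$. Lemma 3 already shows that the resulting output is confined to $(-\Upsilon, T)$ and that the input satisfies a $(T + 2\Upsilon + 2\rho)$-Gap Criterion; uniqueness of the output then follows from Proposition 1.

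For the inductive step, consider a neuron $\mathsf{N}$ at depth $k > 1$. Every incoming edge to $\mathsf{N}$ originates either at an input vertex (contributing spikes in $(0, T)$) or at a neuron of depth $j < k$; by the inductive hypothesis, each such predecessor produces a unique output with spikes in $(-j\Upsilon, T) \subseteq (-(k-1)\Upsilon, T)$. Hence $\mathsf{N}$'s composite input ensemble has all its spikes confined to an interval of length $T + (k-1)\Upsilon$, which is a time-shifted instance of a $(T + (k-1)\Upsilon)$-Flush ensemble. Applying Lemma 3 to the shifted ensemble (legitimate because both our neuron model and the Gap Criterion are time-invariant) yields a $(T + (k+1)\Upsilon + 2\rho)$-Gap Criterion for $\mathsf{N}$'s input. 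Moreover, at any time $t \notin (-k\Upsilon, T)$, the input window $(t, t+\Upsilon)$ is disjoint from $(-(k-1)\Upsilon, T)$, so by Axioms 2 and 3 the membrane potential is at most $P(\langle \vec{\phi}, \ldots, \vec{\phi}\rangle; \vec{\phi}) = 0 < \tau$, forcing the output spikes of $\mathsf{N}$ into $(-k\Upsilon, T)$. Uniqueness of the output again follows from Proposition 1.

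Stitching the layers together: the deepest neurons are at depth $d$, so the worst-case neuron-level Gap parameter across the network is $T + (d+1)\Upsilon + 2\rho$, and shallower neurons trivially satisfy this looser criterion by monotonicity of the Gap Criterion in its parameter. By the definition of the Gap Criterion for a feedforward network, it then suffices to set $T^*/d = T + (d+1)\Upsilon + 2\rho$, which gives $T^* = dT + d(d+1)\Upsilon + 2d\rho$, exactly as claimed. I expect the only subtle point to be the careful invocation of the time-invariant version of Lemma 3 at each inductive step (amounting to shifting the bounded-spike interval to $(0, T + (k-1)\Upsilon)$, invoking Lemma 3, and using translation-invariance of the Gap Criterion to pull the conclusion back); everything else is routine bookkeeping on depths, interval endpoints, and monotonicity of the Gap parameter.
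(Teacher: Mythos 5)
Your proposal is correct and follows essentially the same route as the paper: propagate the flush layer by layer so that every depth-$k$ neuron's output spikes are confined to $(-k\Upsilon, T)$, conclude that each neuron (worst case the depth-$d$ output neuron) satisfies a $(T+(d+1)\Upsilon+2\rho)$-Gap Criterion, and then multiply by $d$ via the network Gap Criterion definition. The only cosmetic difference is that you invoke Lemma 3 on time-shifted Flush ensembles plus monotonicity of the Gap parameter, whereas the paper simply repeats Lemma 3's gap-distance computation directly at each layer.
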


\begin{proof}[Proof of Lemma~\ref{lemma4}]
Following the proof of the previous lemma, we know that neurons that
receive all their inputs from $\chi$ have no output spikes outside the
interval $(-\Upsilon, T)$. Similarly, neurons that have depth 2 with
respect to the input vertices of the network have no output spikes
outside $(-2\Upsilon, T)$. Likewise, the output neuron, which has
depth $d$, has no output spikes outside $(-d\Upsilon, T)$. It follows
that the output neuron obeys a $(T+ (d+1)\Upsilon +2\rho)$-Gap
Criterion. Also, every other neuron obeys this criterion because the
distance between the $2\rho$ output gaps for every neuron is at most that of
the output neuron, since their depth is bounded from above by the
depth of the output neuron. Thus, from the definition of the Gap
criterion for feedforward networks, we have that $\chi$ satisfies a $(dT+d(d+1)\Upsilon
+2d\rho)$-Gap Criterion for the current network.

\end{proof}

\subsection*{Proofs from Section~\ref{sec:compl}}
\begin{proof}[Proof of Lemma~\ref{thm:GFeq}]
We prove the easy direction first.  If $\exists {\cal N}' \in \Sigma_2$ such that $\forall {\cal
  N} \in \Sigma_1,  {\cal T}_{{\cal N}'}|_{{\scriptscriptstyle{\cal F}_{m}}} \neq {\cal
  T}_{{\cal N}}|_{{\scriptscriptstyle{\cal F}_{m}}}$, then it follows that ${\cal T}_{{\cal
    N}'}|_{{\scriptscriptstyle{\cal G}_{12}}} \neq {\cal T}_{{\cal N}}|_{{\scriptscriptstyle{\cal G}_{12}}}$
because  ${\cal F}_m \subseteq {\cal G}_{\cal N}$.

For the other direction, let $\exists {\cal N}' \in \Sigma_2$ such
that $\forall {\cal N} \in \Sigma_1, {\cal T}_{{\cal N}'}|_{{\scriptscriptstyle{\cal G}_{12}}} \neq {\cal T}_{{\cal N}}|_{{\scriptscriptstyle{\cal G}_{12}}}$. We construct
${\cal F}'\subseteq {\cal F}_m $, so that ${\cal T}_{{\cal N}'}|_{{\scriptscriptstyle{{\cal F}'}}} \neq {\cal T}_{{\cal N}}|_{{\scriptscriptstyle{{\cal F}'}}}$. This immediately
implies ${\cal T}_{{\cal N}'}|_{{\scriptscriptstyle{\cal F}_{m}}} \neq {\cal T}_{{\cal
    N}}|_{{\scriptscriptstyle{\cal F}_{m}}}$. Consider arbitrary ${\cal N} \in
\Sigma_1$. From the hypothesis, we have ${\cal T}_{{\cal N}'}|_{{\scriptscriptstyle{\cal G}_{12}}} \neq {\cal T}_{{\cal N}}|_{{\scriptscriptstyle{\cal G}_{12}}}$. Therefore
$\exists \chi \in {\cal G}_{12}$ such that ${\cal T}_{{\cal
    N}'}|_{{\scriptscriptstyle{\cal G}_{12}}}(\chi) \neq {\cal T}_{{\cal N}}|_{{\scriptscriptstyle{\cal G}_{12}}}(\chi)$. Additionally, there exist $T_1, T_2 \in
\mathbb{R}^+$, so that $\chi$ satisfies a $T_1$-Gap Criterion for
${\cal N}$ and a $T_2$-Gap Criterion for ${\cal N}'$. Let $T=\max(T_1,
T_2)$. Let ${\cal T}_{{\cal N}'}|_{{\scriptscriptstyle{\cal G}_{12}}}(\chi) = {\vec x_0}'$
and $ {\cal T}_{{\cal N}}|_{{\scriptscriptstyle{\cal G}_{12}}}(\chi)= {\vec x_0} $ . Let
${\tilde{\cal F}}= \bigcup_{t \in \mathbb{R}}
\Xi_{(0,2T)}(\sigma_t(\chi))$. Note that each element of ${\tilde{\cal F}}$
satisfies a $2T$-Flush Criterion. The claim, then, is that ${\cal
  T}_{{\cal N}'}|_{{\scriptscriptstyle{\tilde{\cal F}}}} \neq {\cal T}_{{\cal
    N}}|_{{\scriptscriptstyle{\tilde{\cal F}}}}$. We have $ \Xi_{(0,T)}({\cal
  T}_{{\cal N}'}(\Xi_{(0,2T)}(\sigma_t(\chi)))) =
\Xi_{(0,T)}(\sigma_t({\vec x_0}'))$ and $ \Xi_{(0,T)}({\cal
  T}_{{\cal N}}(\Xi_{(0,2T)}(\sigma_t(\chi)))) =
\Xi_{(0,T)}(\sigma_t({\vec x_0}))$. This follows from the fact that
$\chi$ satisfies the $T$-Gap Criterion with both ${\cal N}$ and ${\cal
  N}'$ and therefore when  ${\cal N}$ and ${\cal
  N}'$ are driven by any segment of $\chi$ of length $2T$, the output
produced in the latter $T$ milliseconds of that interval agrees with ${\vec
  x_0}$ and ${\vec x_0}'$ respectively. Therefore, if ${\vec
  x_0} \neq {\vec x_0}'$, it is clear that there exists a $t$, so
that ${\cal T}_{{\cal N}'}(\Xi_{[0,2T]}(\sigma_t(\chi))) \neq {\cal
  T}_{{\cal N}}(\Xi_{[0,2T]}(\sigma_t(\chi)))$. ${\cal F}'$ is
obtained by taking the union of such ${\tilde{\cal F}}$ for every
${\cal N} \in \Sigma_1$. The result follows.

\end{proof}

\subsection*{Technical Remarks from Section~\ref{sec:compl_results}}
Some technical remarks concerning the mechanics of proving complexity results are stated below.

For two sets of feedforward networks, $\Sigma_1$ and $\Sigma_2$ with $\Sigma_1 \subseteq \Sigma_2$, in order to prove that $\Sigma_2$ is more complex than $\Sigma_1$, it is sufficient to show a transformation ${\cal T}: {\cal F}_{m} \rightarrow {\cal S}$ that no network present in $\Sigma_1$ can perform, while demonstrating a network in $\Sigma_2$ that can effect it. This involves constructing such a transformation, i.e. prescribing an output spike train for every element in ${\cal F}_{m}$. Recall that ${\cal F}_{m}$ consists of spike-train ensembles of order $m$, with the property that for each such ensemble there exists a positive real number $T$, so that the ensemble satisfies a $T$-Flush criterion. In practice, however, it usually suffices to prescribe output spike trains for a small subset\footnote{albeit typically one that contains, for each positive real number  $T$, at least one spike-train ensemble satisfying a $T$-Flush Criterion.} of elements of ${\cal F}_{m}$, and prove that no network in $\Sigma_1$ can map the input spike trains in that subset to their prescribed outputs. The second step would involve demonstrating a network in $\Sigma_2$ that maps this subset of ${\cal F}_{m}$ to the prescribed output, while mapping the rest of ${\cal F}_{m}$ to arbitrary output spike trains. Strictly speaking then, the transformation ${\cal T}: {\cal F}_{m} \rightarrow {\cal S}$ we prescribe comprises the mapping from ${\cal F}_{m}$ to output spike trains, as effected by {\em this} network in $\Sigma_2$. For convenience however, we shall refer to the mapping prescribed for some small subset of ${\cal F}_{m}$ as the prescribed transformation.

The next remark concerns timescales of the parameters $\Upsilon$ and $\rho$ of each neuron in the network and the timescale at which the transformation operates. Recall that the parameters $\Upsilon$ and $\rho$ correspond to the timescale at which the neuron integrates inputs it receives and the relative refractory period respectively. It would be reasonable to expect that the values of these parameters lie within a certain range as constrained by physiology, although this range might be different for different local neuronal networks in the brain. Suppose we have an upper bound on the value of each such parameter. Then, when we prove a complexity result, there would exist a timescale $T$, which is a function of these upper bounds, such that there exists a transformation on this timescale that cannot be performed by any network with the said architecture, whose parameters are governed by these upper bounds. More precisely, there would exist a transformation that maps a set of inputs satisfying a $T$-Flush criterion to an output spike train that (provably) cannot be performed by any network with the architecture in question. When stating and proving a complexity result, however, for the sake of succinctness, we do not explicitly state the relation between these bounds and the corresponding $T$. We simply let $\Upsilon$,  $\rho$ and $T$ remain unbounded. It is straightforward for the reader to derive a bound on $T$ as a function of bounds on $\Upsilon$ and $\rho$, as discussed.

The final remark is about our neuron model and the issue of what we can assume about the neurons when demonstrating that a certain network {\em can} effect a given transformation. Recall that our neuron model assumes that our neurons satisfy a small number of elementary properties but are otherwise unconstrained. This allowed our model to accomodate  a large variety of neuronal responses. This was convenient when faced with the task of showing that no network of a certain architecture could perform a given transformation, no matter what response properties its neurons have. However, when we wish to show that a certain transformation can be done by a specific network, some caution is in order. In this case, it is prudent to restrict ourselves to as simple a neuron model as possible, so that whether the neuronal responses employed are achievable by a real biological neuron, is not in question. In practice, we describe the neurons in the construction, so that they can certainly be effected by a highly-reduced neuron model such as the Spike Response Model SRM$_0$ \citep{gerst}.

\subsection*{Proofs from Section~\ref{sec:depth2}}

\begin{proof}[Proof of Lemma~\ref{lemma1}]
Let ${\cal N}$ be a network that effects ${\cal T}:{\cal F}_m\rightarrow{\cal S}$. 

\noindent
{\bf ${\cal T}(\cdot)$ is causal.} Consider arbitrary $\chi_1, \chi_2 \in {\cal F}_m$ with $\Xi_{(t,\infty)}\chi_1 =\Xi_{(t, \infty)}\chi_2$, for some $t\in \mathbb{R}$. We wish to show that $\Xi_{[t, \infty)} {\cal T}(\chi_1) =\Xi_{[t, \infty)}{\cal T}(\chi_2)$. Let $N_i$, for $1\leq i\leq d$, be the set of neurons in ${\cal N}$ of depth $i$, where $d$ is the depth of ${\cal N}$. Each neuron ${\mathsf N}\in N_1$ receives all its inputs from spike-trains in $\chi$. When the network receives $\chi_1$ and $\chi_2$ as input, suppose ${\mathsf N}$ receives $\chi'_1$ and $\chi'_2$ respectively as input. Also, clearly, $\Xi_{(t,\infty)}\chi'_1 =\Xi_{(t, \infty)}\chi'_2$. Let ${\vec x_1}'$ and ${\vec x_2}'$ be the output produced by ${\mathsf N}$ on receiving $\chi'_1$ and $\chi'_2$ respectively. Since  $\chi'_1, \chi'_2 \in {\cal F}_m$, there exists a $T \in \mathbb{R}^+$, so that $\Xi_{[T,\infty)}\chi'_1 = \Xi_{[T,\infty)}\chi'_2 ={\vec \phi}^{m'}$, where $m'$ is the number of inputs to $\mathsf{N}$. Therefore, by Axiom (3) of the neuron, we have $\Xi_{[T,\infty)}{\vec x_1}' = \Xi_{[T,\infty)}{\vec x_2}' ={\vec \phi}$.\/ Now, for all $t' \in
 \mathbb{R}$, $\Xi_{t'} {\vec x_j}'= \langle t' \rangle$ if and only if
 $P_{\mathsf{N}}(\Xi_{(0,\Upsilon_{\mathsf{N}})}(\sigma_{t'}(\chi'_j)), \Xi_{(0,\rho_{\mathsf{N}})}(\sigma_{t'}({\vec
   x_j}')) =\tau_{\mathsf{N}}$, for $j=1,2$. It is immediate that for $t' > t$, we have $\Xi_{(0,\Upsilon_{\mathsf{N}})}(\sigma_{t'}(\chi'_1))=\Xi_{(0,\Upsilon_{\mathsf{N}})}(\sigma_{t'}(\chi'_2))$. Now, by an induction argument on the spike number since $T$, it is straightforward to show that for all $t' > t$, $\Xi_{(0,\rho_{\mathsf{N}})}(\sigma_{t'}({\vec   x_1}'))=\Xi_{(0,\rho_{\mathsf{N}})}(\sigma_{t'}({\vec x_2}'))$. Thus, we have $\Xi_{[t,\infty)}{\vec x_1}' = \Xi_{[t,\infty)}{\vec x_2}'$. Similarly, using a straightforward induction argument on depth, one can show that for every neuron in the network, its output until time instant $t$ is identical in either case. We therefore have $\Xi_{[t, \infty)} {\cal T}(\chi_1) =\Xi_{[t, \infty)}{\cal T}(\chi_2)$.

\noindent
{\bf ${\cal T}(\cdot)$ is time-invariant.}  Consider arbitrary $\chi \in {\cal F}_m$  and $t\in \mathbb{R}$ with $\sigma_t(\chi) \in {\cal F}_m$. We wish to show that ${\cal T}(\sigma_t(\chi))=\sigma_t({\cal T}(\chi))$. As before, let $N_i$, for $1\leq i\leq d$, be the set of neurons in ${\cal N}$ of depth $i$, where $d$ is the depth of ${\cal N}$. Each neuron ${\mathsf N}\in N_1$ receives all its inputs from spike-trains in $\chi$. When the network receives $\chi$ and $\sigma_t(\chi)$ as input, suppose ${\mathsf N}$ receives $\chi'$ and $\sigma_t(\chi')$ respectively as input. Let ${\vec x_1}'$ and ${\vec x_2}'$ be the output produced by ${\mathsf N}$ on receiving $\chi'$ and $\sigma_t(\chi')$ as input respectively. We wish to show that ${\vec x_2}'= \sigma_t({\vec x_1}')$. Since  $\chi' \in {\cal F}_m$, there exists a $T \in \mathbb{R}^+$, so that $\Xi_{[T,\infty)}\chi' = \Xi_{[T-t,\infty)}\sigma_t(\chi') ={\vec \phi}^{m'}$, where $m'$ is the number of inputs to $\mathsf{N}$. Therefore, by Axiom (3) of the neuron, we have $\Xi_{[T,\infty)}{\vec x_1}' = \Xi_{[T-t,\infty)}{\vec x_2}' ={\vec \phi}$.  Now, for all $t' \in
 \mathbb{R}$, $\Xi_{t'} {\vec x_1}'= \langle t' \rangle$ if and only if
 $P_{\mathsf{N}}(\Xi_{(0,\Upsilon_{\mathsf{N}})}(\sigma_{t'}(\chi')), \Xi_{(0,\rho_{\mathsf{N}})}(\sigma_{t'}({\vec x_1}')) =\tau_{\mathsf{N}}$. It is therefore straightforward to make an induction argument on the spike number, starting from the oldest spike in ${\vec x_1}'$ to show that ${\vec x_1}'$ has a spike at some $t'$ iff ${\vec x_2}'$ has a spike at $t'-t$ and therefore we have ${\vec x_2}'= \sigma_t({\vec x_1}')$.  Similarly, using a straightforward induction argument on depth, one can show that for every neuron in the network, its output in the second case is a time-shifted version of the one in the first case. We therefore have ${\cal T}(\sigma_t(\chi))=\sigma_t({\cal T}(\chi))$.

\noindent
{\bf ${\cal T}(\cdot)$ is resettable.} Let $\Upsilon$ and $\rho$ be upper bounds on those parameters over all the neurons in ${\cal N}$. If $\Upsilon<\rho$, then set the value of $\Upsilon=\rho$. The claim is that for $W= d(\Upsilon+\rho)+\rho$,   ${\cal T}(\cdot)$ is $W$-resettable, where $d$ is the depth of ${\cal N}$. Consider arbitrary $\chi \in {\cal F}_m$ so that $\chi$ has a gap in the interval $(t, t+d(\Upsilon+\rho)+\rho)$, for some $t \in \mathbb{R}$. As before, let $N_i$, for $1\leq i\leq d$, be the set of neurons in ${\cal N}$ of depth $i$. Each neuron ${\mathsf N}\in N_1$ receives all its inputs from spike-trains in $\chi$. Therefore by Axiom (3) of the neuron, it is straightforward to see that the output of ${\mathsf N}$ has a gap in the interval $(t, t+(d-1)(\Upsilon+\rho)+2\rho)$. By similar arguments, we have that output of each neuron ${\mathsf N}\in N_i$, for $1\leq i\leq d$ has a gap in the interval $(t, t+(d-i)(\Upsilon+\rho)+(i+1)\rho)$. Thus, in particular, the output neuron has a gap in the interval $(t, t+(d+1)\rho)$. Since $d\geq 1$, the Gap Lemma applies, and at time instant $t$ the output of the output neuron depends on spikes in the interval $(t, t+(\Upsilon+\rho))$ of its inputs. All inputs to the output neuron have a gap in the interval $(t, t+(\Upsilon+\rho)+d\rho)$, since they have depth at most $(d-1)$. Since those inputs have a gap in the interval $(t+(\Upsilon+\rho), t+(\Upsilon+\rho)+d\rho)$, for $d\geq 2$, the Gap Lemma applies and the output neuron's output at time instant $t$ depends on outputs of the ``penultimate layer'' in the interval $(t, t+2(\Upsilon+\rho))$. Therefore by similar arguments, the output of the output neuron at time instant $t$ at most depends on inputs from $\chi$ in the interval $(t, t+d(\Upsilon+\rho))$. That is to say that ${\cal T}(\chi')$, for every $\chi'$ identical to $\chi $ in the interval $(-\infty, t+d(\Upsilon + \rho))$, has the same output as ${\cal T}(\chi)$ in the interval $[t, -\infty)$, following the corollary to the Gap Lemma. In particular, $\Xi_{(-\infty, t]} \chi$ is one such $\chi'$. We therefore have  $\Xi_{(-\infty, t]} {\cal T}(\chi) = {\cal T}(\Xi_{(-\infty, t]} \chi)$ upon noting that $\Xi_{(-\infty, t]} {\cal T}(\Xi_{(-\infty, t]} \chi) = {\cal T}(\Xi_{(-\infty, t]} \chi)$, since ${\cal T}(\cdot)$ has no spikes in $(t, \infty)$. Thus, ${\cal T}(\cdot)$ is resettable.

\end{proof}

\begin{proof}[Proof of Proposition~\ref{depth2prop1}]
Assume that the hypothesis in the proposition is true. Let ${\cal T}:{\cal F}_m\rightarrow{\cal S}$  be $W$-Resettable for some $W \in \mathbb{R}^+$.

We first show a construction for the neuron $\mathsf{O}$, prove that it obeys all the axioms of the abstract model and then show that it has the property that for every $\chi \in {\cal F}_m$, ${\cal T}(\chi)$ is consistent with ${\cal T}_{\mathsf{J}}(\chi) \sqcup \chi$ with respect to $\mathsf{O}$.

We first construct the neuron $\mathsf{O}\langle\alpha_{\mathsf{O}}, \Upsilon_{\mathsf{O}}, \rho_{\mathsf{O}}, \tau_{\mathsf{O}}, \lambda_{\mathsf{O}}, m_{\mathsf{O}}, 
P_{\mathsf{O}}:{\bar{\cal S}_{(0, \Upsilon_{\mathsf{O}})}}^{m_{\mathsf{O}}} \times \bar{\cal S}_{(0, \rho_{\mathsf{O}})} \rightarrow [\lambda_{\mathsf{O}}, \tau_{\mathsf{O}}]\rangle$. Set $\alpha_{\mathsf{O}}=\alpha$ and $\rho_{\mathsf{O}}, \tau_{\mathsf{O}} \in \mathbb{R}^+$, $\lambda_{\mathsf{O}} \in \mathbb{R}^-$  arbitrarily with $\rho_{\mathsf{O}} \geq \alpha_{\mathsf{O}}$. Set $\Upsilon_{\mathsf{O}} = \max\{U, W\}$ and $m_{\mathsf{O}}=m+1$. The function $P_{\mathsf{O}}:{\bar{\cal S}_{(0, \Upsilon_{\mathsf{O}})}}^{m_{\mathsf{O}}} \times \bar{\cal S}_{(0, \rho_{\mathsf{O}})} \rightarrow [\lambda_{\mathsf{O}}, \tau_{\mathsf{O}}]$ is constructed as follows.

For $\chi' \in {\bar{\cal S}_{(0, \Upsilon_{\mathsf{O}})}}^{m_{\mathsf{O}}}$ and $ {\vec x_0}' \in \bar{\cal S}_{(0, \rho_{\mathsf{O}})}$, set $P_{\mathsf{O}}(\chi', {\vec x_0}')=\tau_{\mathsf{O}}$ and $P_{\mathsf{O}}(\chi', {\vec \phi})=\tau_{\mathsf{O}}$ if and only if there exists $\chi \in {\cal F}_m$ and $t \in \mathbb{R}$ so that $\Xi_{t} {\cal T}(\chi) = \langle t \rangle$ and $\chi'=\Xi_{(0,\Upsilon_{\mathsf{O}})} (\sigma_{t}({\cal T}_{\mathsf{J}}(\chi)\sqcup \chi))$ and ${\vec x_0}'=\Xi_{(0,\rho_{\mathsf{O}})} (\sigma_{t}({\cal T}(\chi)))$. Everywhere else, the value of this function is set to zero.

Next, we show it obeys all of the axioms of the single neuron.

We prove that $\mathsf{O}$ satisfies Axiom (1) by showing that its contrapositive is true. Let $\chi' \in {\bar{\cal S}_{(0, \Upsilon_{\mathsf{O}})}}^{m_{\mathsf{O}}}$ and $ {\vec x_0}' \in \bar{\cal S}_{(0, \rho_{\mathsf{O}})}$  be arbitrary so that $P_{\mathsf{O}}(\chi', {\vec x_0}')=\tau_{\mathsf{O}}$. If $ {\vec x_0}'={\vec \phi}$, Axiom (1) is immediately satisfied. Thus, consider the case when $ {\vec x_0}' = \langle x_0^{1'}, x_0^{2'}, \ldots x_0^{k'} \rangle $. Then $x_0^{1'}\geq \alpha$, otherwise, from the construction of $P_{\mathsf{O}}(\cdot)$, it is immediate that there exists a $\chi \in {\cal F}_m$ with ${\cal T}(\chi) \notin {\cal S} $.

Next, we prove that $\mathsf{O}$ satisfies Axiom (2). Let $\chi' \in {\bar{\cal S}_{(0, \Upsilon_{\mathsf{O}})}}^{m_{\mathsf{O}}}$ and $ {\vec x_0}' \in \bar{\cal S}_{(0, \rho_{\mathsf{O}})}$ be arbitrary. If $P_{\mathsf{O}}(\chi', {\vec x_0}')=\tau_{\mathsf{O}}$, then it is immediate from the construction that $P_{\mathsf{O}}(\chi', {\vec \phi})=\tau_{\mathsf{O}}$. On the contrary, if $P_{\mathsf{O}}(\chi', {\vec x_0}')\neq\tau_{\mathsf{O}}$, from the construction of $\mathsf{O}$, we have $P_{\mathsf{O}}(\chi', {\vec x_0}')=0$. Then the ``tie-breaker'' condition in the hypothesis implies that $P_{\mathsf{O}}(\chi', {\vec \phi})\neq\tau_{\mathsf{O}}$. Therefore, $P_{\mathsf{O}}(\chi', {\vec \phi})=0$. Thus, Axiom (2) is satisfied either way.

With Axiom (3), we wish to show $P_{\mathsf{O}}({\vec \phi^{m+1}}, {\vec \phi})=0$. Here, we will show that $P_{\mathsf{O}}({\vec x_{\mathsf{J}}} \sqcup {\vec \phi}^m, {\vec x_0}')=0$, for all $ {\vec x_{\mathsf{J}}} \in \bar{\cal S}_{(0, \Upsilon_{\mathsf{O}})}$ and ${\vec x_0}' \in \bar{\cal S}_{(0, \rho_{\mathsf{O}})}$ which implies the required result. Assume, for the sake of contradiction, that there exists a $ {\vec x_{\mathsf{J}}} \in \bar{\cal S}_{(0, \Upsilon_{\mathsf{O}})}$ and ${\vec x_0}' \in \bar{\cal S}_{(0, \rho_{\mathsf{O}})}$, so that  $P_{\mathsf{O}}({\vec x_{\mathsf{J}}} \sqcup {\vec \phi}^m, {\vec x_0}')=\tau_{\mathsf{O}}$. From the construction of $\mathsf{O}$, this implies that there exists $\chi \in {\cal F}_m$ and $t \in \mathbb{R}$ so that $\Xi_{t} {\cal T}(\chi) = \langle t \rangle$ and $\Xi_{(0,\Upsilon_{\mathsf{O}})} (\sigma_{t}(\chi))= {\vec \phi}^m$. That is, $\chi$ has a gap in the interval $(t, t+W)$, since $\Upsilon_{\mathsf{O}} \geq W$. Since ${\cal T}:{\cal F}_m\rightarrow{\cal S}$ is causal, time-invariant and $W$-resettable, by Corollary \ref{corr3} (stated and proved later in the present write-up), we have $\Xi_{t} {\cal T}(\chi) = {\vec \phi}$ , which is a contradiction. Therefore, we have $P_{\mathsf{O}}({\vec x_{\mathsf{J}}} \sqcup {\vec \phi}^m, {\vec x_0}')\neq\tau_{\mathsf{O}}$ and by construction of $\mathsf{O}$, $P_{\mathsf{O}}({\vec x_{\mathsf{J}}} \sqcup {\vec \phi}^m, {\vec x_0}')=0$, for all $ {\vec x_{\mathsf{J}}} \in \bar{\cal S}_{[0, \Upsilon_{\mathsf{O}}]}$ and ${\vec x_0}' \in \bar{\cal S}_{[0, \rho_{\mathsf{O}}]}$. This implies $P_{\mathsf{O}}({\vec \phi^{m+1}}, {\vec \phi})=0$, satisfying Axiom (3).

Finally, we wish to show that for every $\chi \in {\cal F}_m$, ${\cal T}(\chi)$ is consistent with ${\cal T}_{\mathsf{J}}(\chi) \sqcup \chi$ with respect to $\mathsf{O}$. That is, we wish to show that for every $\chi \in {\cal F}_m$ and for every $t \in \mathbb{R}$, $\Xi_{0} \sigma_t({\cal T}(\chi))= \langle 0 \rangle$ if and only if $P_{\mathsf{O}}(\Xi_{(0,\Upsilon_{\mathsf{O}})} (\sigma_{t}({\cal T}_{\mathsf{J}}(\chi)\sqcup \chi)), \Xi_{(0,\rho_{\mathsf{O}})} (\sigma_{t}({\cal T}(\chi))))=\tau_{\mathsf{O}}$. Consider arbitrary $\chi \in {\cal F}_m$ and $t \in \mathbb{R}$. If $\Xi_{0} \sigma_t({\cal T}(\chi))= \langle 0 \rangle$, then it is immediate from the construction of $\mathsf{O}$ that $P_{\mathsf{O}}(\Xi_{(0,\Upsilon_{\mathsf{O}})} (\sigma_{t}({\cal T}_{\mathsf{J}}(\chi)\sqcup \chi)), \Xi_{(0,\rho_{\mathsf{O}})} (\sigma_{t}({\cal T}(\chi))))=\tau_{\mathsf{O}}$. To prove the converse, suppose $\Xi_{0} \sigma_t({\cal T}(\chi))\neq \langle 0 \rangle$. Then, from the contrapositive of the ``tie-breaker'' condition, it follows that for all $\tilde{\chi} \in {\cal F}_m$ and for all $\tilde{t} \in \mathbb{R}$ with $\Xi_{(0,\Upsilon_{\mathsf{O}})} (\sigma_{\tilde{t}}({\cal T}_{\mathsf{J}}(\tilde{\chi})\sqcup \tilde{\chi}))=\Xi_{(0,\Upsilon_{\mathsf{O}})} (\sigma_{t}({\cal T}_{\mathsf{J}}(\chi)\sqcup \chi))$, we have $\Xi_0\sigma_{\tilde{t}}({\cal T}(\tilde{\chi})) = \Xi_0\sigma_{t}({\cal T}(\chi))\neq \langle 0 \rangle$. Therefore, from the construction, we have $P_{\mathsf{O}}(\Xi_{(0,\Upsilon_{\mathsf{O}})} (\sigma_{t}({\cal T}_{\mathsf{J}}(\chi)\sqcup \chi)), \Xi_{(0,\rho_{\mathsf{O}})} (\sigma_{t}({\cal T}(\chi))))\neq\tau_{\mathsf{O}}$.

\end{proof}

\begin{proof}[Proof of Proposition~\ref{depth2prop2}]
Assume that the hypothesis in the proposition is true. Let ${\cal T}:{\cal F}_m\rightarrow{\cal S}$  be $W'$-Resettable for some $W' \in \mathbb{R}^+$. Set $W=\max\{W', 12\alpha\}$. One readily verifies that ${\cal T}:{\cal F}_m\rightarrow{\cal S}$ is also $W$-resettable.

We first show a construction for the neuron $\mathsf{J}$, prove that it obeys all the axioms and then show that it has the property that there exists a $U \in \mathbb{R}^+$ so that for all $t_1, t_2 \in \mathbb{R}$ and $\chi_1, \chi_2 \in {\cal F}_m$ with $ \Xi_{0} \sigma_{t_1}({\cal T}(\chi_1))\neq \Xi_{0} \sigma_{t_2}({\cal T}(\chi_2))$, we have $\Xi_{(0,U)} (\sigma_{t_1}({\cal T}_I(\chi_1)\sqcup \chi_1))\neq \Xi_{(0,U)} (\sigma_{t_2}({\cal T}_{\mathsf{J}}(\chi_2)\sqcup \chi_2))$, where ${\cal T}_{\mathsf{J}}:{\cal F}_m\rightarrow{\cal S}$ is such that for each $\chi \in {\cal F}_m$, ${\cal T}_{\mathsf{J}}(\chi)$ is consistent with $\chi$ with respect to $\mathsf{J}$. 

We first construct the neuron $\mathsf{J}\langle\alpha_{\mathsf{J}}, \Upsilon_{\mathsf{J}}, \rho_{\mathsf{J}}, \tau_{\mathsf{J}}, \lambda_{\mathsf{J}}, m_{\mathsf{J}}, P_{\mathsf{J}}:{\bar{\cal S}_{(0, \Upsilon_{\mathsf{J}})}}^{m_{\mathsf{J}}} \times \bar{\cal S}_{(0, \rho_{\mathsf{J}})} \rightarrow [\lambda_{\mathsf{J}}, \tau_{\mathsf{J}}]\rangle$. Set $\alpha_{\mathsf{J}}=\alpha$. Let $p, q, r \in \mathbb{R}^+$, with\footnote{The choice of values for $p$, $q$, $r$ and $W$ was made so as to satisfy the following inequalities, which we will need in the proof: $p<W, p>2(q+r)$ and $q>\alpha$.} $p=8\alpha, q=2\alpha$ and $r=\alpha$. Set $\Upsilon_{\mathsf{J}}= p+q+r+W$, $\rho_{\mathsf{J}}=2p-r$ and $m_{\mathsf{J}}=m$. Let $\tau_{\mathsf{J}} \in \mathbb{R}^+$, $\lambda_{\mathsf{J}} \in \mathbb{R}^-$  be chosen arbitrarily. The function $P_{\mathsf{J}}:{\bar{\cal S}_{(0, \Upsilon_{\mathsf{J}})}}^{m_{\mathsf{J}}} \times \bar{\cal S}_{(0, \rho_{\mathsf{J}})} \rightarrow [\lambda_{\mathsf{J}}, \tau_{\mathsf{J}}]$ is constructed as follows.

For $\chi \in {\bar{\cal S}_{(0, \Upsilon_{\mathsf{J}})}}^{m_{\mathsf{J}}}$ and $ {\vec x_0} \in \bar{\cal S}_{(0, \rho_{\mathsf{J}})}$, set $P_{\mathsf{J}}(\chi, {\vec x_0})=\tau_{\mathsf{J}}$ and $P_{\mathsf{J}}(\chi, {\vec \phi})=\tau_{\mathsf{J}}$ 
if and only if one of the following is true; everywhere else, the function is set to zero.
\begin{enumerate}
\item $\Xi_{(p, p+W)} \chi = {\vec \phi}^{m_{\mathsf{J}}}$, $\Xi_{p} \chi \neq {\vec \phi}^{m_{\mathsf{J}}}$  and $\Xi_{(0, p]} {\vec x_0} = {\vec \phi}$.

\item $\Xi_{(0, p+q]} {\vec x_0} = \langle t \rangle$, where $q\leq t <(q+r)$ and $(t-q)=\varepsilon(0, \Xi_{(0,p]}\sigma_t(\chi))$. Moreover, $\Xi_{(t+p, t+p+W)} \chi = {\vec \phi}^{m_{\mathsf{J}}}$ and $\Xi_{(p+t)} \chi \neq {\vec \phi}^{m_{\mathsf{J}}}$.

\item $\Xi_{(0, 2p-(q+r)]} {\vec x_0} = \langle t_x, t_y \rangle$ with $ (p-(q+r)) < t_x \leq (p-q) \leq t_y = p$. Also, for all $t'\in [0,p]$, $\Xi_{(t', t'+W)} \chi \neq {\vec \phi}^{m_{\mathsf{J}}}$.

\item $\Xi_{[0, 2p-r]} {\vec x_0} = \langle t, t_x, t_y \rangle$ with $ q\leq t < (q+r) < (p-r) \leq t_x \leq p \leq t_y = p+t$ and $(t-q)=\varepsilon((t_y-t_x-q), \Xi_{(0,p]}\sigma_t(\chi))$.  Furthermore, for all $t'\in [0,p+t]$, $\Xi_{(t', t'+W]} \chi \neq {\vec \phi}^{m_{\mathsf{J}}}$.

\end{enumerate}
\noindent
where $\varepsilon:[0,r)\times{\bar{\cal S}_{(0, p]}}^{m_{\mathsf{J}}} \rightarrow [0,r)$ is as defined below.

For convenience, we define an operator $\iota_j^k:[0,1)\rightarrow [0,1)$, for $j,k \in \mathbb{Z}^+$, that constructs a new number obtained by concatenating every $i^{th}$ digit of a given number, where $i\equiv j \bmod k$. More formally, for $x \in [0,1)$, $\iota_j^k(x)=\Sigma_{i=1}^{\infty}  ((\lfloor  x \times 10^{j+(i-1)k}\rfloor-10\lfloor x \times 10^{j+(i-1)k-1}  \rfloor) \times 10^{-i})$.

Also, we define another operator $\zeta_k:[0,1)^k\rightarrow [0,1)$, for $k \in \mathbb{Z}^+$ which ``interleaves'' the digits of $k$ given numbers in order to produce a new number. More formally, for $x_0, x_1, \ldots, x_{k-1} \in [0,1)$, $\zeta_k(x_0, x_1, \ldots, x_{k-1}) = \Sigma_{i=0}^{\infty} ((\lfloor  x_{k(\frac{i}{k}-\lfloor \frac{i}{k} \rfloor)} \times 10^{1+\lfloor\frac{i}{k}\rfloor}\rfloor-10\lfloor  x_{k(\frac{i}{k}-\lfloor \frac{i}{k} \rfloor)} \times 10^{\lfloor\frac{i}{k}\rfloor}\rfloor) \times 10^{-(i+1)})$.

Let $d$ be the largest integer so that, for all $x' \in [0,r)$, we have $x' \times 10^d <1$. For $x' \in [0,r)$, let $x= x' \times 10^{d}$. For $\chi \in {\bar{\cal S}_{(0, p]}}^{m_{\mathsf{J}}}$, define\footnote{Recall that the {\em projection operator for spike-train ensembles} is defined as $\Pi_i(\chi) = {\vec x_i}$, for
    $1\leq i \leq m$, where $\chi=\langle {\vec x_1}, {\vec x_2}, \ldots, {\vec x_m} \rangle$.} $\varepsilon(x', \chi)= 10^{-d} \times \zeta_{m_{\mathsf{J}}}(\varepsilon_0(\iota_1^{m_{\mathsf{J}}}(x),\Pi_1(\chi)),  \varepsilon_0(\iota_2^{m_{\mathsf{J}}}(x),\Pi_2(\chi)), \ldots, \varepsilon_0(\iota_{m_{\mathsf{J}}}^{m_{\mathsf{J}}}(x),\Pi_{m_{\mathsf{J}}}(\chi)))$, where $\varepsilon_0:[0,1)\times {\bar{\cal S}_{(0, p]}} \rightarrow [0,1)$ is as defined below.

Let $n \in [0,1)$ and ${\vec x} \in {\bar{\cal S}_{(0, p]}}$. Furthermore, let $c=\iota_1^2(n)$ and $s=\iota_2^2(n)$. Let ${\vec x}=\langle x^1, x^2,\ldots, x^k\rangle$. We have $0\leq k \leq 8$, because $p=8\alpha$. Also, since $p=8r$, we have $x^i \times 10^{d-1} < 1$, for $1\leq i \leq k$. Let $s'=\zeta_{k+1}(x^1\times 10^{d-1}, x^2\times 10^{d-1}, \ldots, x^k\times 10^{d-1}, s)$. If $c=0$, then let $c'=\frac{k}{10}+0.09$ else let $c'=\frac{k}{10}+\frac{c}{10}$. Finally, define $\varepsilon_0(n, {\vec x})=\zeta_2(c',s')$.

Next, we show that $\mathsf{J}$ satisfies all the axioms of the neuron.

It is immediate that $\mathsf{J}$ satisfies Axiom (1), since all output spikes in the above construction are at least $q$ milliseconds apart, and $q=2\alpha$.

We now prove that $\mathsf{J}$ satisfies Axiom (2). Let $\chi' \in {\bar{\cal S}_{(0, \Upsilon_{\mathsf{J}})}}^{m_{\mathsf{J}}}$ and $ {\vec x_0}' \in \bar{\cal S}_{(0, \rho_{\mathsf{J}})}$ be arbitrary. If $P_{\mathsf{J}}(\chi', {\vec x_0}')=\tau_{\mathsf{J}}$, then it is immediate from the construction that $P_{\mathsf{J}}(\chi', {\vec \phi})=\tau_{\mathsf{J}}$ which satisfies Axiom (2). On the contrary, if $P_{\mathsf{J}}(\chi', {\vec x_0}')\neq\tau_{\mathsf{J}}$, from the construction of $\mathsf{J}$, we have $P_{\mathsf{J}}(\chi', {\vec x_0}')=0$. Also, from the construction we have either $P_{\mathsf{J}}(\chi', {\vec \phi})=0$ or $P_{\mathsf{J}}(\chi', {\vec \phi})=\tau_{\mathsf{J}}$. Axiom (2) is satisfied in either case.

Also, $\mathsf{J}$ satisfies Axiom (3), since it is clear that $\chi = {\vec \phi}^{m_{\mathsf{J}}}$ does not satisfy any of the conditions enumerated above. We therefore have $P_{\mathsf{J}}({\vec \phi}^{m_{\mathsf{J}}}, {\vec \phi})=0$.

Finally, we show that there exists a $U \in \mathbb{R}^+$ so that for all $t_1, t_2 \in \mathbb{R}$ and $\chi_1, \chi_2 \in {\cal F}_m$ with $ \Xi_{0} \sigma_{t_1}({\cal T}(\chi_1))\neq \Xi_{0} \sigma_{t_2}({\cal T}(\chi_2))$, we have $\Xi_{(0,U)} (\sigma_{t_1}({\cal T}_{\mathsf{J}}(\chi_1)\sqcup \chi_1))\neq \Xi_{(0,U)} (\sigma_{t_2}({\cal T}_{\mathsf{J}}(\chi_2)\sqcup \chi_2))$, where ${\cal T}_{\mathsf{J}}:{\cal F}_m\rightarrow{\cal S}$ such that for each $\chi \in {\cal F}_m$, ${\cal T}_{\mathsf{J}}(\chi)$ is consistent with $\chi$ with respect to $\mathsf{J}$. Let $U=p+q+r+W$. Assume $ \Xi_{0} \sigma_{t_1}({\cal T}(\chi_1))\neq \Xi_{0} \sigma_{t_2}({\cal T}(\chi_2))$. Now, suppose $\Xi_{(0, 0+W)} \sigma_{t_1}(\chi_1) = {\vec \phi}^m$, then clearly $\Xi_{(0, 0+W)} \sigma_{t_2}(\chi_2) \neq {\vec \phi}^m$, otherwise ${\cal T}(\cdot)$ produces no spike at times $t_1$ and $t_2$ respectively on receiving $\chi_1$ and $\chi_2$, by Corollary~\ref{corr3}. As a result, $\Xi_{(0, U)} \sigma_{t_1}(\chi_1) \neq \Xi_{(0, U)} \sigma_{t_2}(\chi_2)$, which implies the required result. Otherwise, from Proposition \ref{prop8}, it follows that there exist $V_1, V_2 \in \mathbb{R}^+$ so that $\Xi_{(0, V_1]} (\sigma_{t_1}(\chi_1)) \neq \Xi_{(0, V_2]} (\sigma_{t_2}(\chi_2))$. If $\Xi_{(0, U)} (\sigma_{t_1}(\chi_1)) \neq \Xi_{(0, U)} (\sigma_{t_2}(\chi_2))$, it is immediate that $\Xi_{(0,U)} (\sigma_{t_1}({\cal T}_I(\chi_1)\sqcup \chi_1))\neq \Xi_{(0,U)} (\sigma_{t_2}({\cal T}_{\mathsf{J}}(\chi_2)\sqcup \chi_2))$. It therefore suffices to prove that if  $\Xi_{[U, V_1]} (\sigma_{t_1}(\chi_1)) \neq \Xi_{[U, V_2]} (\sigma_{t_2}(\chi_2))$ then $\Xi_{(0,U)} (\sigma_{t_1}{\cal T}_{\mathsf{J}}(\chi_1))\neq \Xi_{(0,U)} (\sigma_{t_2}{\cal T}_{\mathsf{J}}(\chi_2))$. Proposition \ref{prop8} implies that $\Xi_{(V_1, V_1 +W)} (\sigma_{t_1}(\chi_1)) = {\vec \phi}^m$ and $\Xi_{V_1} (\sigma_{t_1}(\chi_1))\neq {\vec \phi}^m$. Therefore, by Case (1) of the construction, $\Xi_{(V_1-p)} \sigma_{t_1}{\cal T}_{\mathsf{J}}(\chi_1) = \langle V_1-p \rangle$. Moreover, since Proposition \ref{prop8} implies that for all $t_1' \in [0, V_1)$, $\Xi_{(t_1', t_1' +W)} (\sigma_{t_1}(\chi_1)) \neq {\vec \phi}^m$,  from Case (3) of the construction,  we have that for every $k \in \mathbb{Z}^+$ with $V_1-kp >0$, $\Xi_{(V_1-kp)} \sigma_{t_1}{\cal T}_{\mathsf{J}}(\chi_1) = \langle V_1-kp \rangle$. Let $k_1$ be\footnote{$k_1$ exists because $U>p$.} the smallest positive integer, so that $V_1-k_1p < U$. From the previous arguments, we have $\Xi_{(V_1-k_1p)} \sigma_{t_1}{\cal T}_{\mathsf{J}}(\chi_1) = \langle V_1-k_1p \rangle$. Also, it is easy to see that $V_1-k_1p \geq (q+r)$. Let $k_2$ be similarly defined with respect to $\chi_2$ so that $\Xi_{(V_2-k_2p)} \sigma_{t_2}{\cal T}_{\mathsf{J}}(\chi_2) = \langle V_2-k_2p \rangle$ and $V_2-k_2p < U$. Now, there are two cases:
\begin{enumerate}
\item If $V_1-k_1p \neq V_2-k_2p$, we  now show that $\Xi_{(0,U)} (\sigma_{t_1}{\cal T}_{\mathsf{J}}(\chi_1))\neq \Xi_{(0,U)} (\sigma_{t_2}{\cal T}_{\mathsf{J}}(\chi_2))$, which is the required result. Assume, without loss of generality, that $V_1-k_1p < V_2-k_2p$. If these two quantities are less than $p-r$ apart, we have $\Xi_{(0,U)} (\sigma_{t_1}{\cal T}_{\mathsf{J}}(\chi_1))\neq \Xi_{(0,U)} (\sigma_{t_2}{\cal T}_{\mathsf{J}}(\chi_2))$, because by Case (4) of the construction ${\cal T}_{\mathsf{J}}(\chi_1)$ has a spike in the interval $(V_1-k_1p -(q+r), V_1-k_1p -q]$ and  by Case (3) of the construction, ${\cal T}_{\mathsf{J}}(\chi_2)$ has no spike in the interval $(V_2-k_2p, V_2-k_2p+p-(q+r))$. In other words, the spike following the one at $V_1-k_1p$ in ${\cal T}_{\mathsf{J}}(\chi_1)$ has no counterpart in ${\cal T}_{\mathsf{J}}(\chi_2)$. On the other hand, if they are less than $p$ apart but at most $p-r$ apart, by similar arguments, it is easy to show that the spike at $V_2-k_2p$ in  ${\cal T}_{\mathsf{J}}(\chi_2)$ has no counterpart in ${\cal T}_{\mathsf{J}}(\chi_1)$. Finally, if they are at least $p$ apart, then $k_2$ does not satisfy the property that it is the smallest positive integer, so that $V_2-k_2p \leq U$, which is a contradiction.

\item On the contrary, consider the case when $V_1-k_1p = V_2-k_2p$. We have two cases:
\begin{enumerate}
\item Suppose $k_1\neq k_2$. Let $t_1'$ be the largest positive integer so that $\Xi_{t_1'} \sigma_{t_1}{\cal T}_{\mathsf{J}}(\chi_1) = \langle t_1' \rangle$ and $t_1'< V_1-k_1p$. From Case (4) of the construction, we have that $q\leq (V_1-k_1p) - t_1'\leq q+r$. Let $t_2'$ be defined likewise, with respect to $\chi_2$. Further, let $n_1'= (V_1-k_1p) - t_1' - q$ and $n_2'= (V_2-k_2p) - t_2' - q$ and $n_1=n_1' \times 10^d$ and $n_2=n_2' \times 10^d$. Since $k_1 \neq k_2$, it is straightforward to verify that for all $j$ with $1\leq j \leq m_{\mathsf{J}}$, $\iota_1^2(\iota_j^{m_{\mathsf{J}}}(n_1))\neq \iota_1^2(\iota_j^{m_{\mathsf{J}}}(n_2))$, for the former number has $9$ in the $(k_1+1)^{th}$ decimal place, while the latter number does in the $(k_2+1)^{th}$ decimal place and not in the $(k_1+1)^{th}$ decimal place since $k_1\neq k_2$. Therefore, $n_1 \neq n_2$ and consequently $t_1'\neq t_2'$ which gives us $\Xi_{(0,U)} (\sigma_{t_1}{\cal T}_{\mathsf{J}}(\chi_1))\neq \Xi_{(0,U)} (\sigma_{t_2}{\cal T}_{\mathsf{J}}(\chi_2))$, which is the required result. 

\item On the other hand, suppose $k_1=k_2$. Again, we have two cases:
\begin{enumerate}
\item Suppose, there exists a $j$ with $1\leq j \leq m_{\mathsf{J}}$ and a $k'\leq k_1$, so that $\Xi_{(V_1-k'p, V_1-(k'-1)p]}\Pi_j(\sigma_{t_1}(\chi_1))$ has a different number of spikes when compared to $\Xi_{(V_2-k'p, V_2-(k'-1)p]}\Pi_j(\sigma_{t_2}(\chi_2))$. Let $n_1, n_2$ be defined as before. It is straightforward to verify that $\iota_1^2(\iota_j^{m_{\mathsf{J}}}(n_1))\neq \iota_1^2(\iota_j^{m_{\mathsf{J}}}(n_2))$, because they differ in the $(k_1 - k'+1)^{th}$ decimal place\footnote{Which in $n_1$ and $n_2$ encodes the number of spikes in the interval $(V_2-k'p, V_2-(k'-1)p]$ on the $j^{th}$ spike-train of $\chi_1$ and $\chi_2$ respectively.} . Therefore, $\Xi_{(0,U)} (\sigma_{t_1}{\cal T}_{\mathsf{J}}(\chi_1))\neq \Xi_{(0,U)} (\sigma_{t_2}{\cal T}_{\mathsf{J}}(\chi_2))$.

\item Now consider the case where for all $j$ with $1\leq j \leq m_{\mathsf{J}}$ and $k'\leq k_1$, we have $\Xi_{(V_1-k'p, V_1-(k'-1)p]}\Pi_j(\sigma_{t_1}(\chi_1))$ have the same number of spikes when compared to $\Xi_{(V_2-k'p, V_2-(k'-1)p]}\Pi_j(\sigma_{t_2}(\chi_2))$. Now, by hypothesis, we have $\Xi_{[U, V_1]} (\sigma_{t_1}(\chi_1)) \neq \Xi_{[U, V_2]} (\sigma_{t_2}(\chi_2))$. Therefore there must exist a $1\leq j \leq m_{\mathsf{J}}$ and $k'\leq k_1$, so that there is a point in time where one of the spike-trains $\Xi_{(V_1-k'p, V_1-(k'-1)p]}\Pi_j(\sigma_{t_1}(\chi_1))$ and $\Xi_{(V_2-k'p, V_2-(k'-1)p]}\Pi_j(\sigma_{t_2}(\chi_2))$ has a spike, while the other does not. Let $t'$ be the latest time instant at which this is so. Also, assume without loss of generality that $\Xi_{(V_1-k'p, V_1-(k'-1)p]}\Pi_j(\sigma_{t_1}(\chi_1)) = \langle x^1, \ldots, x^q \rangle$ has a spike at time instant $t'$ while $\Xi_{(V_2-k'p, V_2-(k'-1)p]}\Pi_j(\sigma_{t_2}(\chi_2))$ does not. Let $p$ be the number so that $t'=x^p$. Let $n_1, n_2$ be defined as before.  Also, for each $h$ with $1\leq h \leq k_1$, let $r_h$ be the number of spikes in $\Xi_{(V_1-hp, V_1-(h-1)p]}\Pi_j(\sigma_{t_1}(\chi_1))$. Each $r_h$ can be determined from $n_1$. Then, it is straightforward to verify\footnote{The expression on either side of the inequality is a real number that encodes for the $p^{th}$ spike time in the spike-trains $\Xi_{(V_1-k'p, V_1-(k'-1)p]}\Pi_j(\sigma_{t_1}(\chi_1))$ and $\Xi_{(V_2-k'p, V_2-(k'-1)p]}\Pi_j(\sigma_{t_2}(\chi_2))$ respectively.} that $\iota_{p}^{r_{k'}} \iota_{r_{k'-1}}^{r_{k'-1}}\ldots \iota_{r_1}^{r_1} \iota_2^2  \iota_j^{m_{\mathsf{J}}} n_1 \neq \iota_{p}^{r_{k'}} \iota_{r_{k'-1}}^{r_{k'-1}}\ldots \iota_{r_1}^{r_1} \iota_2^2  \iota_j^{m_{\mathsf{J}}} n_2$. Therefore, $n_1\neq n_2$ and it follows that \\$\Xi_{(0,U)} (\sigma_{t_1}{\cal T}_{\mathsf{J}}(\chi_1))\neq \Xi_{(0,U)} (\sigma_{t_2}{\cal T}_{\mathsf{J}}(\chi_2))$.

\end{enumerate}
\end{enumerate}
\end{enumerate}

\end{proof}

\subsubsection*{Some auxiliary propositions used in the proofs of Propositions \ref{depth2prop1} and \ref{depth2prop2}}
\addtocounter{proposition}{2}
\begin{proposition}
If ~${\cal T}:{\cal F}_m\rightarrow{\cal S}$ is time-invariant, then ${\cal T}({\vec \phi}^m)={\vec \phi}$.
\end{proposition}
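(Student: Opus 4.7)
The plan is to apply time-invariance with $\chi = {\vec \phi}^m$ as the input and deduce that the image ${\cal T}({\vec \phi}^m)$ must be invariant under every time shift $\sigma_t$, then argue that the only spike-train in ${\cal S}$ with this property is the empty spike-train ${\vec \phi}$.

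First I would verify the prerequisites for invoking time-invariance on ${\vec \phi}^m$. The empty ensemble has no spikes, so it vacuously satisfies, say, a $1$-Flush criterion, giving ${\vec \phi}^m \in {\cal F}_m$. Since $\sigma_t$ shifts every spike and an empty spike-train has none, $\sigma_t({\vec \phi}^m) = {\vec \phi}^m$ for every $t \in \mathbb{R}$, and in particular $\sigma_t({\vec \phi}^m) \in {\cal F}_m$. The hypothesis of the time-invariance definition is therefore satisfied for every $t \in \mathbb{R}$.

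Next I would apply the definition to obtain, for every $t \in \mathbb{R}$,
\[
{\cal T}({\vec \phi}^m) \;=\; {\cal T}(\sigma_t({\vec \phi}^m)) \;=\; \sigma_t({\cal T}({\vec \phi}^m)).
\]
Thus the output spike-train $\vec{y} := {\cal T}({\vec \phi}^m) \in {\cal S}$ is a fixed point of $\sigma_t$ for every real $t$. The final step is to show that the only such spike-train is ${\vec \phi}$. If $\vec{y}$ contained any spike at time $s$, then $\sigma_t(\vec{y}) = \vec{y}$ would force $\vec{y}$ to contain a spike at $s - t$ for every $t \in \mathbb{R}$, yielding an uncountable set of spike times within any bounded interval. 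This directly contradicts the requirement (from the definition of a spike-train in Section~\ref{sec:notation}) that successive spike times be separated by at least the absolute refractory period $\alpha > 0$. Hence $\vec{y} = {\vec \phi}$, which gives ${\cal T}({\vec \phi}^m) = {\vec \phi}$.

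There is no serious obstacle here; the only thing to be slightly careful about is confirming that ${\vec \phi}^m$ actually lies in ${\cal F}_m$ so that time-invariance is applicable, and that the shift-invariance argument for $\vec{y}$ uses the refractory-period constraint built into the definition of a spike-train rather than any property of the neuron model.
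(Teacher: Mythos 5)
Your proof is correct and follows essentially the same route as the paper: exploit that $\sigma_t({\vec \phi}^m)={\vec \phi}^m \in {\cal F}_m$, apply time-invariance to conclude ${\cal T}({\vec \phi}^m)$ is fixed under time shifts, and rule out any spike using the minimum spacing $\alpha$ in the definition of a spike-train. The only cosmetic difference is that the paper uses a single shift $\delta<\alpha$ to reach the contradiction, whereas you quantify over all shifts; both hinge on the same refractory-period constraint.
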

\begin{proof}
For the sake of contradiction, suppose ${\cal T}({\vec \phi}^m)={\vec x_0}$, where ${\vec x_0} \neq {\vec \phi}$. That is, there exists a $t \in \mathbb{R}$ with $\Xi_t {\vec x_0} = \langle t \rangle$. Let $\delta < \alpha$. Clearly, $\sigma_\delta({\vec \phi}^m) = {\vec \phi}^m \in {\cal F}_m$. Since ${\cal T}:{\cal F}_m\rightarrow{\cal S}$  is time-invariant,  ${\cal T}(\sigma_\delta({\vec \phi}^m)) = \sigma_\delta({\cal T}({\vec \phi}^m)) = \sigma_\delta({\vec x_0})$. Now, $\sigma_\delta({\vec x_0}) \neq {\vec x_0}$ since $\Xi_{(t-\delta)} \sigma_{\delta}({\vec x_0}) = \langle t-\delta \rangle$ whereas $\Xi_{(t-\delta)}{\vec x_0} = {\vec \phi}$, for otherwise ${\vec x_0} \notin {\cal S}$. This is a contradiction. Therefore, ${\cal T}({\vec \phi}^m)={\vec \phi}$.

\end{proof}

\addtocounter{corollary}{2}
\begin{corollary}\label{corr3}
Let ~${\cal T}:{\cal F}_m\rightarrow{\cal S}$ be causal, time-invariant and $W$-resettable, for some $W \in \mathbb{R}^+$. If $\chi \in {\cal F}_m$ has a gap in the interval $(t, t+W)$, then $\Xi_t {\cal T}(\chi) ={\vec \phi}$.
\end{corollary}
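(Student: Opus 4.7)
The plan is to chain together three ingredients: $W$-resettability, causality, and the preceding proposition (which, from time-invariance, gives ${\cal T}({\vec \phi}^m) = {\vec \phi}$). Resettability by itself only says the recent output agrees with the output on a truncated input; the nontrivial part is recognizing that the truncated input can then be compared, via causality, to the fully empty ensemble.

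First I would set $\chi^{*} := \Xi_{(-\infty, t]} \chi$. Since $\chi \in {\cal F}_m$ has all its spikes in some bounded interval $(0,T)$, the truncated ensemble $\chi^{*}$ also has spikes in a bounded interval and hence still belongs to ${\cal F}_m$. Because $\chi$ has a gap in $(t, t+W)$, the $W$-resettability hypothesis yields $\Xi_{(-\infty, t]} {\cal T}(\chi) = {\cal T}(\chi^{*})$. Applying $\Xi_t$ on both sides and noting that $t \in (-\infty, t]$, it therefore suffices to show $\Xi_t {\cal T}(\chi^{*}) = {\vec \phi}$.

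Next, I would compare $\chi^{*}$ to the empty ensemble ${\vec \phi}^m$. By construction, $\chi^{*}$ has no spikes in $(t, \infty)$, and neither does ${\vec \phi}^m$, so $\Xi_{(t, \infty)} \chi^{*} = \Xi_{(t, \infty)} {\vec \phi}^m$. Causality of ${\cal T}$ then gives $\Xi_{[t, \infty)} {\cal T}(\chi^{*}) = \Xi_{[t, \infty)} {\cal T}({\vec \phi}^m)$. By the preceding proposition (time-invariance forces ${\cal T}({\vec \phi}^m) = {\vec \phi}$), the right-hand side is empty; hence $\Xi_{[t, \infty)} {\cal T}(\chi^{*}) = {\vec \phi}$, and in particular $\Xi_t {\cal T}(\chi^{*}) = {\vec \phi}$, which combined with the previous step completes the argument.

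The only real obstacle, which is minor, is bookkeeping around the endpoints of the truncation operators: one must ensure that ``$t$ lies in $(-\infty, t]$'' so that $\Xi_t$ commutes appropriately with $\Xi_{(-\infty, t]}$, and that the agreement needed for causality ($\Xi_{(t, \infty)}$-equality) does not require matching at $t$ itself. Both are immediate from the definitions, but are worth stating explicitly.
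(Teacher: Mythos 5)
Your proposal is correct and follows essentially the same route as the paper's proof: use $W$-resettability to replace $\Xi_{(-\infty,t]}{\cal T}(\chi)$ by ${\cal T}(\Xi_{(-\infty,t]}\chi)$, then causality to compare the truncated input with ${\vec \phi}^m$, and finally the preceding proposition (${\cal T}({\vec \phi}^m)={\vec \phi}$ from time-invariance). The only differences are cosmetic bookkeeping (applying $\Xi_t$ instead of $\Xi_{[t,\infty)}\Xi_{(-\infty,t]}$, and explicitly noting $\Xi_{(-\infty,t]}\chi \in {\cal F}_m$, which the paper leaves implicit).
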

\begin{proof}
Assume the hypothesis of the above statement. One readily sees that $\Xi_t {\cal T}(\chi) = \Xi_{[t, \infty)} \Xi_{(-\infty, t]} {\cal T}(\chi)$. Now, since $\chi$ has a gap in the interval $(t, t+W)$ and ${\cal T}:{\cal F}_m\rightarrow{\cal S}$  is $W$-resettable, we have $\Xi_{[t, \infty)} \Xi_{(-\infty, t]} {\cal T}(\chi) = \Xi_{[t, \infty)} {\cal T}(\Xi_{(-\infty, t]} \chi)$. Further, by definition, $\Xi_{(t,\infty)} \Xi_{(-\infty, t]} \chi = \Xi_{(t,\infty)} {\vec \phi}^m$. Therefore, since ${\cal T}:{\cal F}_m\rightarrow{\cal S}$ is causal, it follows that $\Xi_{[t, \infty)} {\cal T}(\Xi_{(-\infty, t]} \chi) = \Xi_{[t, \infty)} {\cal T}({\vec \phi}^m)={\vec \phi}$, with the last equality following from the previous proposition. Thus, we have $\Xi_t {\cal T}(\chi) ={\vec \phi}$.
\end{proof}

\begin{proposition}\label{prop8}
Let ~${\cal T}:{\cal F}_m\rightarrow{\cal S}$ be causal, time-invariant and $W'$-resettable, for some $W' \in \mathbb{R}^+$. Then for all $W \in \mathbb{R}^+$ with $W \geq W'$, $t_1, t_2 \in \mathbb{R}$ and $\chi_1, \chi_2 \in {\cal F}_m$ with $ \Xi_{0} \sigma_{t_1}({\cal T}(\chi_1))\neq \Xi_{0} \sigma_{t_2}({\cal T}(\chi_2))$, where $\Xi_{(0, 0+W)} \sigma_{t_1}(\chi_1) \neq {\vec \phi}^m \neq \Xi_{(0, 0+W)} \sigma_{t_2}(\chi_2)$, there exist $V_1, V_2 \in \mathbb{R}^+$ so that the following are true.
\begin{enumerate}
\item $\Xi_{(0, V_1]} (\sigma_{t_1}(\chi_1)) \neq \Xi_{(0, V_2]} (\sigma_{t_2}(\chi_2))$ 
\item $\Xi_{(V_1, V_1 +W)} (\sigma_{t_1}(\chi_1)) = {\vec \phi}^m$, $\Xi_{V_1} (\sigma_{t_1}(\chi_1))\neq {\vec \phi}^m$ and $\Xi_{(V_2, V_2+W)} (\sigma_{t_2}(\chi_2)) = {\vec \phi}^m$, $\Xi_{V_2} (\sigma_{t_2}(\chi_2)) \neq {\vec \phi}^m$
\item For all $t_1' \in [0, V_1)$, $\Xi_{(t_1', t_1' +W)} (\sigma_{t_1}(\chi_1)) \neq {\vec \phi}^m$ and for all $t_2' \in [0, V_2)$,\\ $\Xi_{(t_2', t_2' +W)} (\sigma_{t_2}(\chi_2)) \neq {\vec \phi}^m$.

\end{enumerate}
\end{proposition}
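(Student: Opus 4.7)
The plan is to split the statement into two parts: first, construct explicit values of $V_1$ and $V_2$ that satisfy conditions (2) and (3) by design; second, establish condition (1) by contradiction, leveraging the three structural properties of ${\cal T}$.

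First I would construct $V_1$ and $V_2$. Since $\chi_i\in{\cal F}_m$, the ensemble $\sigma_{t_i}(\chi_i)$ has only finitely many spike times in $(0,\infty)$. Enumerate the distinct times at which $\sigma_{t_i}(\chi_i)$ has at least one spike, restricted to $(0,\infty)$, as $s^{(i)}_1<s^{(i)}_2<\cdots<s^{(i)}_{k_i}$; the hypothesis $\Xi_{(0,W)}\sigma_{t_i}(\chi_i)\neq{\vec \phi}^m$ forces $s^{(i)}_1\in(0,W)$. Let $j_i$ be the smallest index for which either $j_i=k_i$ or $s^{(i)}_{j_i+1}-s^{(i)}_{j_i}\geq W$, and set $V_i=s^{(i)}_{j_i}$. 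Condition (2) is immediate by construction. For condition (3), note that $s^{(i)}_{j+1}-s^{(i)}_j<W$ for each $j<j_i$ and $s^{(i)}_1<W$; a short case analysis then shows that for any $t'_i\in[0,V_i)$, the next spike time of $\sigma_{t_i}(\chi_i)$ strictly greater than $t'_i$ lies strictly within $W$ of $t'_i$, giving a spike in $(t'_i,t'_i+W)$.

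Condition (1) is the substantive part and I would prove it by contradiction. Suppose that $\Xi_{(0,V_1]}\sigma_{t_1}(\chi_1)=\Xi_{(0,V_2]}\sigma_{t_2}(\chi_2)$. Since the latest spike time appearing in each side is $V_i$ by condition (2), this forces $V_1=V_2=:V$; combined with the empty $(V,V+W)$ clauses of condition (2), the $(0,V+W)$-portions of $\sigma_{t_1}(\chi_1)$ and $\sigma_{t_2}(\chi_2)$ coincide. The goal is to derive $\Xi_0\sigma_{t_1}({\cal T}(\chi_1))=\Xi_0\sigma_{t_2}({\cal T}(\chi_2))$, contradicting the hypothesis. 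To this end, I would first apply $W$-resettability to $\chi_i$, which has a gap in $(t_i+V,\,t_i+V+W)$: this gives $\Xi_{t_i}{\cal T}(\chi_i)=\Xi_{t_i}{\cal T}(\hat{\chi}_i)$ where $\hat{\chi}_i:=\Xi_{(-\infty,\,t_i+V]}\chi_i\in{\cal F}_m$. Next, I would pick any $c<\min(-t_1,-t_2)$ and set $\tau_i:=t_i+c$, which is negative; this guarantees $\sigma_{\tau_i}(\hat{\chi}_i)\in{\cal F}_m$, so time-invariance yields ${\cal T}(\sigma_{\tau_i}(\hat{\chi}_i))=\sigma_{\tau_i}{\cal T}(\hat{\chi}_i)$, and hence $\Xi_{-c}{\cal T}(\sigma_{\tau_i}(\hat{\chi}_i))$ has a spike at $-c$ iff $\Xi_{t_i}{\cal T}(\hat{\chi}_i)$ has a spike at $t_i$.

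The final step is causality at $t=-c$. A direct computation shows that $\sigma_{\tau_i}(\hat{\chi}_i)$ has spikes only in $(-t_i-c,\,V-c]$, and that the matching $(0,V]$-portions of $\sigma_{t_i}(\hat{\chi}_i)$ translate into matching $(-c,V-c]$-portions of $\sigma_{\tau_i}(\hat{\chi}_i)$. Since neither ensemble has a spike in $(V-c,\infty)$, the restrictions $\Xi_{(-c,\infty)}\sigma_{\tau_1}(\hat{\chi}_1)$ and $\Xi_{(-c,\infty)}\sigma_{\tau_2}(\hat{\chi}_2)$ therefore coincide. Causality then forces $\Xi_{-c}{\cal T}(\sigma_{\tau_1}(\hat{\chi}_1))=\Xi_{-c}{\cal T}(\sigma_{\tau_2}(\hat{\chi}_2))$, and unwinding the time-invariance and resettability identities gives $\Xi_{t_1}{\cal T}(\chi_1)=\Xi_{t_2}{\cal T}(\chi_2)$, the required contradiction. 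I expect the main obstacle to be the bookkeeping in this last step: because $\sigma_{t_i}(\chi_i)$ itself need not lie in ${\cal F}_m$, time-invariance cannot be applied directly, and the common further shift by $c<\min(-t_1,-t_2)$ is what simultaneously brings both $\hat{\chi}_i$ into ${\cal F}_m$ after the shift $\tau_i=t_i+c$ and aligns the matched input window $(0,V]$ with the causality window $(-c,\infty)$ on which outputs are compared.
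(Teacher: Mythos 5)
Your proposal is correct and follows essentially the same route as the paper: conditions (2) and (3) hold by choosing $V_i$ at the start of the first ``reset'' gap of $\sigma_{t_i}(\chi_i)$, and condition (1) follows because causality, time-invariance and $W$-resettability together make the presence of an output spike at the distinguished instant a function of the truncated input $\Xi_{(0,V_i]}(\sigma_{t_i}(\chi_i))$, so unequal outputs force unequal truncations. The only differences are bookkeeping: the paper first removes the portion of $\sigma_{t_i}(\chi_i)$ in $(-\infty,0]$ via causality and shifts by $t_i$ before applying resettability, whereas you apply resettability in the original frame and then introduce the common extra shift $c<\min(-t_1,-t_2)$ before invoking causality --- two equivalent ways of handling the fact that $\sigma_{t_i}(\chi_i)$ itself need not lie in ${\cal F}_m$.
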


\begin{proof}
Since ${\cal T}:{\cal F}_m\rightarrow{\cal S}$ is causal, we have $\Xi_{[t_1, \infty)} {\cal T}(\chi_1)  = \Xi_{[t_1, \infty)} {\cal T}(\Xi_{(t_1, \infty)} \chi_1)$. This implies $\sigma_{t_1}(\Xi_{[t_1, \infty)} {\cal T}(\chi_1))  = \sigma_{t_1}(\Xi_{[t_1, \infty)} {\cal T}(\Xi_{(t_1, \infty)} \chi_1))$ which gives us \\$\Xi_{[0, \infty)}\sigma_{t_1}({\cal T}(\chi_1))  = \Xi_{[0, \infty)}\sigma_{t_1}({\cal T}(\Xi_{(t_1, \infty)} \chi_1))$. Since ${\cal T}:{\cal F}_m\rightarrow{\cal S}$ is \\time-invariant and $\sigma_{t_1}(\Xi_{(t_1, \infty)} \chi_1)=\Xi_{(0, \infty)} \sigma_{t_1}(\chi_1) \in {\cal F}_m$, we have \\$\Xi_{[0, \infty)}\sigma_{t_1}({\cal T}(\Xi_{(t_1, \infty)} \chi_1))= \Xi_{[0, \infty)} {\cal T}(\Xi_{(0, \infty)} \sigma_{t_1}(\chi_1))$. In short, \\$\Xi_{[0, \infty)}\sigma_{t_1}({\cal T}(\chi_1))  = \Xi_{[0, \infty)} {\cal T}(\Xi_{(0, \infty)} \sigma_{t_1}(\chi_1))$ which implies \\$\Xi_{0}\sigma_{t_1}({\cal T}(\chi_1))  = \Xi_{0} {\cal T}(\Xi_{(0, \infty)} \sigma_{t_1}(\chi_1))$. Similarly, $\Xi_{0}\sigma_{t_2}({\cal T}(\chi_2))  = \Xi_{0} {\cal T}(\Xi_{(0, \infty)} \sigma_{t_2}(\chi_2))$. Therefore, it follows from the hypothesis that $ \Xi_{0} {\cal T}(\Xi_{(0, \infty)} (\sigma_{t_1}(\chi_1)))\neq \Xi_{0} {\cal T}(\Xi_{(0, \infty)} (\sigma_{t_2}(\chi_2)))$.

Let $V_1, V_2 \in \mathbb{R}^+$ be the smallest positive real numbers so that $\Xi_{(0, \infty)} (\sigma_{t_1}(\chi_1))$ and $\Xi_{(0, \infty)} (\sigma_{t_2}(\chi_2))$ have gaps in the intervals $(V_1, V_1+W)$ and $(V_2, V_2+W)$ respectively. That such $V_1, V_2$ exist follows from the fact that $\chi_1, \chi_2 \in {\cal F}_m$. Since, ${\cal T}:{\cal F}_m\rightarrow{\cal S}$  is $W'$-resettable, it is also $W$-resettable for $W\geq W'$. It therefore follows that $\Xi_{(-\infty, V_1]} {\cal T}(\Xi_{(0, \infty)} (\sigma_{t_1}(\chi_1))) = {\cal T}(\Xi_{(-\infty, V_1]} \Xi_{(0, \infty)} (\sigma_{t_1}(\chi_1)))$ which equals ${\cal T}(\Xi_{(0, V_1]} (\sigma_{t_1}(\chi_1)))$. This implies that $\Xi_0 \Xi_{(-\infty, V_1]} {\cal T}(\Xi_{(0, \infty)} (\sigma_{t_1}(\chi_1)))= \Xi_0 {\cal T}(\Xi_{(0, V_1]} (\sigma_{t_1}(\chi_1)))$ due to which we have $\Xi_0 {\cal T}(\Xi_{(0, \infty)} (\sigma_{t_1}(\chi_1)))= \Xi_0 {\cal T}(\Xi_{(0, V_1]} (\sigma_{t_1}(\chi_1)))$. Likewise, \\$\Xi_0 {\cal T}(\Xi_{(0, \infty)} (\sigma_{t_2}(\chi_2)))= \Xi_0 {\cal T}(\Xi_{(0, V_2]} (\sigma_{t_2}(\chi_2)))$. We therefore have $\Xi_0 {\cal T}(\Xi_{(0, V_1]} (\sigma_{t_1}(\chi_1))) \neq \Xi_0 {\cal T}(\Xi_{(0, V_2]} (\sigma_{t_2}(\chi_2)))$. This readily implies $\Xi_{(0, V_1]} (\sigma_{t_1}(\chi_1)) \neq \Xi_{(0, V_2]} (\sigma_{t_2}(\chi_2))$ and, from the construction, it follows that $\Xi_{(V_1, V_1 +W)} (\sigma_{t_1}(\chi_1)) = {\vec \phi}^m$, $\Xi_{V_1} (\sigma_{t_1}(\chi_1))\neq {\vec \phi}^m$ and $\Xi_{(V_2, V_2+W)} (\sigma_{t_2}(\chi_2)) = {\vec \phi}^m$, $\Xi_{V_2} (\sigma_{t_2}(\chi_2)) \neq {\vec \phi}^m$, for otherwise $V_1$ or $V_2$ would not be the smallest choice of numbers with the said property. Furthermore, for the same reasons, for all $t_1' \in [0, V_1)$, $\Xi_{(t_1', t_1' +W)} (\sigma_{t_1}(\chi_1)) \neq {\vec \phi}^m$ and for all $t_2' \in [0, V_2)$, $\Xi_{(t_2', t_2' +W)} (\sigma_{t_2}(\chi_2)) \neq {\vec \phi}^m$.

\end{proof}

\bibliographystyle{spbasic}
\bibliography{acyclic_complexity}

\end{document}